\newtheorem{theorem}{Theorem}[section]
\newtheorem{corollary}[theorem]{Corollary}
\newtheorem{lemma}[theorem]{Lemma}
\newtheorem{observation}[theorem]{Observation}
\newtheorem{proposition}[theorem]{Proposition}
\newtheorem{claim}[theorem]{Claim}
\newtheorem{definition}[theorem]{Definition}
\newtheorem{remark}[theorem]{Remark}
\newenvironment{fminipage}%
  {\begin{Sbox}\begin{minipage}}%
  {\end{minipage}\end{Sbox}\fbox{\TheSbox}}
\newtheorem{Example}[theorem]{Example}
\newcommand{\hesam}[1]{\textcolor{blue}{#1}}
\tikzstyle{vertex} = [fill,shape=circle,node distance=80pt]
\tikzstyle{graphvertex} = [node distance=80pt] 
\tikzstyle{edge} = [fill,opacity=.5,fill opacity=.5,line cap=round, line join=round, line width=50pt]
\tikzstyle{elabel} =  [fill,shape=circle,node distance=30pt]
\title{Group Testing with General Correlation Using
Hypergraphs}
\author{
  {Hesam Nikpey, Saswati Sarkar, Shirin Saeedi Bidokhti}\\
  University of Pennsylvania\\
                    Email: \{Hesam, Swati, Saeedi\}@seas.upenn.edu
}
\begin{document}

\maketitle

\begin{abstract}

    Group testing, a problem with diverse applications across multiple disciplines,  traditionally assumes independence across nodes' states. Recent research, however, focuses on real-world scenarios that often involve correlations among nodes, challenging the simplifying assumptions made in existing models. In this work, we consider a comprehensive model for arbitrary statistical correlation among nodes' states. To capture and leverage these correlations effectively, we model the problem by hypergraphs, inspired by \cite{gonen2022group}, augmented by a probability mass function on the hyper-edges.

    Using this model, we first design a novel greedy adaptive algorithm capable of conducting informative tests and dynamically updating the distribution. {Performance analysis provides upper bounds on the number of tests required, which depend solely on the entropy of the underlying probability distribution and the average number of infections. We demonstrate that the algorithm recovers or improves upon all previously known results for group testing settings with correlation. Additionally, we provide families of graphs where the algorithm is order-wise optimal and give examples where the algorithm or its analysis is not tight.
 }

    {We then generalize the proposed framework of group testing with general correlation in two directions, namely semi-non-adaptive group testing and noisy group testing. In both settings, we provide novel theoretical bounds on the number of tests required.
    }

\end{abstract}

\section{Introduction}\label{sec:intro}

{Group testing is a classical problem that has been extensively studied in the fields of computer science \cite{du2000combinatorial,de2005optimal,sobel1959group, porat2008explicit} and information theory \cite{aldridge2019group,aldridge2014group,wolf1985born,chan2014non}, focusing on efficiently identifying a small number of defective items within a large population by grouping items together in tests. Originally motivated by the need to screen for syphilis during World War II, group testing has recently garnered renewed attention due to its applications in areas such as COVID-19 testing \cite{gollier2020group,narayanan2020accelerated}, privacy-preserving \cite{ibarrondo2023grote}, and DNA sequencing \cite{daniels2023overview}.}

Group testing involves a population (set of nodes) where a subset is infected, and any sub-population can be tested. A positive test result indicates at least one infection within the subset, and the primary goal is to identify all infected individuals. 
Historically, group testing has been studied under the assumption of independent node states (see \cite{li2014group,du2000combinatorial} and the references therein). However, recent advancements have highlighted the oversimplification of this assumption. For example, in disease spread scenarios, individuals within the same household exhibit  {\it correlation} in their risk; if one member is infected, the likelihood of infection among other members increases 
\cite{nikolopoulos2021group2, nikpey2022group, arasli2023group}. 
{To a lesser extent, a weaker correlation exists in the population of a neighbor where an outbreak might occur, and even weaker correlation exists in the population of a city, so different strengths of correlation can be assumed in a general population.} More generally,  the physical and biological interactions that govern disease propagation impose a correlation structure on the states of the nodes. 
Similarly, in network fault diagnosis, geographically localized connections/devices (to which we refer to as nodes) are more likely to experience simultaneous faults due to their reliance on shared physical infrastructure and exposure to common external factors, such as environmental conditions or power supply issues \cite{xu2023performance}. Furthermore, the cascading nature of many failure mechanisms introduces correlations across the states of these nodes, as a fault in one part of the system can increase the likelihood of faults in nearby components. {Examples include an outage of power in some part of the network\cite{bernstein2014power}, an overload in a specific district\cite{athari2017impacts}, or typically localized cascading line failures in the transmission system of the power grid \cite{soltan2014cascading}.}  Modeling these correlations and exploiting them in the design of group testing algorithms can enhance accuracy and efficiency, leading to more effective group testing.

To capture correlation, much of the existing literature propose models that are context-specific and oftentimes oversimplified. Furthermore, the proposed group testing methods are tied to the underlying correlation models that are studied. 
In the context of disease propagation, the correlation models are built either on the notion of proximity (defined through graphs) \cite{nikpey2022group,arasli2023group,lau2022model,arasli2021graph}  or in conjunction with a disease spread model \cite{zhang2023adaptive,brault2021group, nikolopoulos2023community,ahn2023adaptive}. 
In the context of network fault diagnosis, the correlation is based on physical proximity between the nodes in a network, or frequency of communication between two nodes \cite{xu2023performance}. 

{In this work, we propose a unifying framework for group testing in the presence of correlation, without being restricted by context-specific spread models.} Our objective is to study a general correlation model, i.e., a general joint distribution denoted by $\mathcal{D}$ on the state of the nodes, where $\mathcal{D}(S)$ represents the probability that exactly a subset $S$ of individuals are infected. Our aim is to devise group testing algorithms tailored to this expansive scope without substantial simplification. {Our proposed framework, along with the group testing strategies we propose, can directly apply to any application with an underlying statistical correlation model. 

{The paper is organized as follows. In the remainder of this section, we discuss related work and the contributions of this paper. In Section~\ref{sec:2}, we introduce our proposed model and the mathematical problem formulation. In Section~\ref{sec:3}, we study adaptive group testing and propose a greedy adaptive algorithm that imposes no limitation on the underlying distribution of the hyper-edges. The algorithm is analyzed in Section~\ref{sec:analysis}. In Section~\ref{sec:5}, we apply our algorithm to prior models to recover and improve prior results. Section~\ref{sec:loweradapt} provides examples where the algorithm is order-optimal and others where it is not. In Section~\ref{sec:nonadapt}, we present algorithms that are not adaptive, and in Section~\ref{sec:noisy}, we consider the setting where tests are noisy. Finally, we conclude and discuss open problems in Section~\ref{sec:last}.}



\subsection{Related Work}\label{sec:relwork}

Traditionally, group testing has been examined via two primary paradigms; namely,  combinatorial group testing and probabilistic group testing. Here, we first go through the combinatorial version and then introduce the probabilistic version. In each, we delve into recent research efforts that establish correlation models and harness this correlation to reduce the number of required tests.

 Combinatorial group testing considers scenarios in which, out of a population of size $n$, a maximum of $d$ individuals are infected \cite{du2000combinatorial}. Adaptive group testing, where each test is designed based on the result of the previous tests, has been proved to efficiently identify the infected set using at most $d \log n + O(d)$ tests, complemented by a corresponding lower bound. In the non-adaptive case, where the tests are designed ahead of time, the best known upper bound is $d^2 \log(n/d)$, complemented with an almost matching lower bound of $\Omega(\frac{d^2 \log n }{\log d})$ \cite{erdHos1985families}. {There has also been recent literature on group testing with only a few rounds of adaptation \cite{wang2024quickly,coja2020optimal,scarlett2019efficient}. For example, \cite{wang2024quickly} shows that using $O(\log d)$ rounds of testing,  almost all infected nodes can be recovered using almost $ \frac{1}{\text{capacity}(Z)} d \log n/d$ tests where the tests are noisy and $Z$ is a generic channel corrupting the test outcomes. More recent works have aimed not only to be order-wise optimal but also to determine the exact constant coefficients for the number of tests required \cite{chan2014non, scarlett2016phase, aldridge2016improved,wang2024quickly}.}

 {Another line of work is quantitative group testing which  explores scenarios where the test results are not binary but instead, a test returns the exact number of infections within a group. It is shown that in the adaptive setting, $O(\log n)$ tests are enough \cite{christen1980fibonaccian,aigner1986search} and later the coefficient of $\log n$ was improved \cite{gargano1992improved}. Recently,  \cite{soleymani2024quantitative}   designed low-complexity algorithms with efficient constructions and decoding.  For non-adaptive group testing, \cite{hahn2022near} obtains $O(d \log n)$ which matches the information-theoretic lower bounds up to a factor. \cite{soleymani2024non} has improved the decoding complexity.} 

In a recent study,  \cite{gonen2022group} studies a scenario in which only specific subsets of individuals can be infected, collectively referred to as the set of candidate subsets denoted as $E$. This model can partly capture  correlation between the nodes. \cite{gonen2022group} models candidate subsets as hyperedges of a hypergraph defined on the set of individuals (nodes) $V$ and assumes that each hyperedge has a maximum size of $d$.  It provides non-adaptive and adaptive group testing strategies that require $O(d\log E)$ and $O(\log |E| + d^2)$ tests, respectively. These bounds are complemented with the lower bound  $\log |E| + d$ for a specific ``worst-case" hypergraph.
This work is closest to our work, both in terms of the underlying model and the conceptual framework for algorithm design. However, their framework does not allow different likelihoods of the edges. This assumption is not practical for many applications. For instance, very large and very small edges might be unlikelier
than the medium-sized edges. Or different classes of individuals may have different risk factors. Notably, our approach encompasses not only the definition of hyperedges but also introduces a distribution over them to introduce likelihood, and as such our work stands in the so called probabilistic group testing paradigm.

{
There is a variant of group testing, also known as (hyper)graph learning, which is based on graphs and hypergraphs where items are semi-defective, meaning that a test result is positive if there is a hyperedge that is completely contained in the test \cite{abasi2018error, abasi2019learning, bui2024concomitant}. Several bounds have been established for general graphs and hypergraphs \cite{abasi2018error, abasi2019learning}, and these bounds have been improved by assuming specific structures on the hypergraphs \cite{bui2024concomitant}. Although these works utilize the concepts of graphs and hypergraphs, our problem differs in its fundamental nature. }

{There have been other works where specific classes of  graphs or hypergraphs are assumed in the design of group  testing. For instance,  \cite{bui2021improved} and \cite{bui2023non}  assume that the graph is a line and a block of consecutive nodes is positive. Another example is \cite{nikpey2022group} where theoretical bounds are provided  on the number of tests when the graph is a tree or grid. In this work, we do not impose any constraints on the structure of the hypergraph.}

Probabilistic group testing was studied in \cite{li2014group}, where each individual $i$ is infected with a probability $p_i$, independently of others. \cite{li2014group} establishes  near-optimal bounds for both adaptive and non-adaptive group testing, quantified as $O(H(X) + \mu)$, where $\mu = \sum_i p_i$ represents the average number of infections, and $H(X) = \sum_i -p_i \log p_i$ denotes the entropy. 
In recent years, there has been a surge of interest in modeling correlations in group testing scenarios.  \cite{nikpey2022group,arasli2023group}  model correlation using edge-faulty graphs where a simple graph $G$ is considered as the underlying contact graph and each edge is dropped with some probability, forming random components. The nodes within a component are then assumed to be in the same state. {In \cite{nikpey2022group}, it was shown for several families of graphs that exploiting correlation can lead to significant improvement in the required number of tests (compared to independent group testing). \cite{arasli2023group} shows improvement when the realized graphs have a certain structure, i.e. they are nested.} Another line of work is using communities to model  correlation \cite{nikolopoulos2021group2,ahn2021adaptive, nikolopoulos2023community,nikolopoulos2021group,jain2024sparsity}. For example, in \cite{nikolopoulos2021group2}, it is assumed that each community is infected with some probability, and if a community is infected, all of its members are infected with some non-zero probability independent of the others. If a community is not infected, none of its members are infected. {For this model, the authors designed efficient adaptive and non-adaptive algorithms that significantly reduce the number of tests by accounting for the structure and pooling the infected communities. They also developed more reliable tests using the structure when the tests are noisy.  \cite{ahn2021adaptive},  considers similar communities, but the communities become infected by random initial seeds. The authors design an adaptive algorithm that significantly reduces the number of tests, further supporting the benefits of exploiting correlation. 
}

\subsection{Contributions}
The contributions of this work are as follows:
\begin{itemize}
\item In Section~\ref{sec:2}, we model arbitrary statistical correlation by employing hypergraphs, drawing inspiration from \cite{gonen2022group}. A hypergraph $G=(V,E)$ has node set $V$, $|V|=n$, and edge set $E$, where each hyperedge $e\in E$ is a subset of nodes: $e \subseteq V$. Enhancing the combinatorial model in \cite{gonen2022group}, we consider a probability mass function over the hyperedges. This captures arbitrary statistical correlation among infection of different nodes. We design group testing algorithms that return the set of infected nodes  with a probability that converges to $1$ as $n$ increases to infinity for each of the scenarios described below.

\item In Section~\ref{sec:3}, we propose an adaptive group testing algorithm that is capable of exploiting correlation by the means of updating the posterior distribution on the hyperedges given the previous test results. The algorithm works in two stages. Stage 1 focuses on conducting  ``informative'' tests, which significantly narrows down the search space. When these tests are no longer feasible, the algorithm transitions to a second stage where it tests the remaining uncertain nodes individually. We demonstrate that this algorithm successfully identifies the infected set and requires an expected number of tests that can be upper bounded as a function of entropy $H(X)$ and the expected number of infections $\mu$ in Theorem~\ref{theorem:final}. We also show  that when a stochastic upper bound $u$ on the number of infected nodes is known (ie, the probability that the number of infected nodes is upper bounded by $u$ converges to $1$ as the number of nodes increases to infinity),  then  $O(H(X)+u)$ tests are sufficient in expectation in Corollary~\ref{cor:concentr}. Thus, the guarantee on the expected number of tests becomes better as the upper bound becomes tighter. If the number of infections is concentrated around $\mu$, the average number of infections, then $O(\mu)$ constitutes the stochastic upper bound and hence $O(H(X) + \mu)$ tests are sufficient.  
In Section~\ref{sec:5}, we apply our algorithm to several existing group testing frameworks and compare our results against these prior approaches.

\item  {In Section~\ref{sec:loweradapt}, we provide various examples to illustrate the extent of optimality of the algorithm. $H(X)$ is a known lower bound on the number of tests. We show that our proposed algorithm, and the upper bound we establish on the number of tests, ${O}(H(X)+\mu)$, is worst-case order-optimal.  We further show that (i) $H(X)$ can be a loose lower bound with a significant gap to the number of tests needed ; (ii) for some families of graphs, $\mu$ is a lower bound on the number of tests; and (iii) the lower bound cannot be described using only $\mu$ and $H(X)$. In particular, we provide examples where the tight lower bound is greater than the entropy but less than $\mu$.  We subsequently consider random $d$-regular hypergraphs, where each hyperedge of size $d$ is present with probability $r$. In Section~\ref{sec:randomgraph}, we show that the algorithm is order optimal for the expected number of tests in dense $d$-regular random  hypergraphs (ie, when $r$ is large), and a modification thereof attains order optimality in sparse $d$-regular random hypergraphs (ie, when $r$ is small);  some of the proofs  use mild additional conditions that arise naturally}. 

\item In Section~\ref{sec:2}, we introduce a semi-non-adaptive group testing model, a model in between adaptive and non-adaptive group testing, in which tests are performed as per a predetermined sequence, which is designed before result of any test is known, but the tests can be terminated any time depending on the results of the already executed tests. 
In Section~\ref{sec:nonadapt}, we provide an 
algorithm that requires ${O}(u H(X)+u \log n)$ tests in this model, where $u$ is a stochastic upper bound on the number of infected nodes as defined in the second bullet (Theorem~\ref{theorem:NATAS}). The guarantee is clearly worse than that we obtain in a fully adaptive setting. But, this algorithm is easier to execute in that the testing sequence is preplanned and therefore tests can be scheduled in advance. We also highlight challenges in designing non-adaptive algorithms. Specifically, we show that, unlike classic independent group testing, there can be a significant gap in the number of tests needed in adaptive vs non-adaptive testing.

\item {In Section~\ref{sec:noisy}, we extend our methods and results to noisy group testing (where test results are corrupted by a symmetric noise). The impact of such noise can clearly be mitigated by repeating each test a certain number of times and going with the majority verdict, such repetition  however increases the overall number of tests. In Theorem~\ref{theorem:noisyadapt}, we show that in the adaptive scenario the number of repetitions can be substantially reduced by considering the impact of the error in update of the posterior probabilities of edges after each test. The guarantees we obtain on the expected number of overall tests become better as the probability of error reduces. }

\end{itemize}


\section{Problem Formulation}\label{sec:2}
\subsection{Generalized Group Testing} \label{sec:model}
We model a generalized group testing problem via hypergraphs. Consider the hypergraph $G=(V,E)$ where $V$ is the set of nodes and $E$ the set of hyperedges. Each hyperedge  $e\in E$ is a subset of the nodes, $e\subseteq V$.  We assume $|V| = n$. In a population of size n, each individual is represented by a node in this hypergraph and the hyperedges capture possible dependencies between the nodes' states. 
A distribution $\mathcal{D}$ is assumed over the edges in $E$, and \textbf{one} edge $e \sim \mathcal{D}$ is sampled from this distribution to be infected. We use the terminologies ``sampled" edge, ``infected" edge, and ``target" edge interchangeably. The probability that hyperedge $e$ is infected is denoted by $p(e)$. When a hyperedge is infected,  all nodes $v \in e$ are infected and all other nodes $u \notin e$ are not infected. 
The goal is to perform group tests, with the minimum number of tests, until the infected edge is identified accurately.

Figure~\ref{fig:hypgraph} show an example of a hypergraph $G = (V,E)$ where $V = \{v_1, v_2, v_3, v_4, v_5 \}$ is the set of nodes, $E = \{ e_1 = \{v_1,v_2,v_3\}, e_2 = \{v_1,v_5\}, e_3 = \{v_4,v_5\} \}$ is the set of edges, and $p_{e_1} = 0.3$, $p_{e_2} = 0.2$, $p_{e_3} = 0.5$ is the distribution over the edges.
Note that $v_2$ and $v_3$ have a complete correlation and are always in the same state because each edge contains either both of them or none of them. Also, $v_1$ and $v_4$ are also completely correlated as they are always in opposite states because each edge contains exactly one of them.

To formulate the problem more precisely, let $X = (X_1, X_2, \ldots, X_n)$ be the vector of the nodes' states where $X_i = 1$ if the $i$'th node is infected  and $X_i = 0$ otherwise. We denote the probability that node $i$ is infected  by $p_i$. When an edge $e$ is sampled, we have $\forall v \in e : X_v = 1$ and $\forall v \notin e : X_v = 0$. Within this model, each node $v$ may belong to multiple edges and therefore, we have $$p_v=\mathbb{E}[X_v]=\sum_{e\in E:\ e\ni v} p(e).$$

\noindent The expected number of infections, $\mu$, is thus given by
\begin{equation}
\label{eq:mu}
\mu = \mathbb{E}[X_1 + X_2 + \cdots + X_n] = \sum_{v\in V} p_v.
\end{equation}
 It is worthwhile to mention that in prior work such as \cite{li2014group}, it is often assumed that $\mu \ll n$, based on the observation that otherwise the number of tests needed is $\Omega(n)$. Interestingly, this is not true in general when we have correlation, as captured in our model, and an example is given in Section~\ref{sec:loosemu}.

We next define testing and recovery before formalizing our objective.

\paragraph{Testing.} 
A group test $T$ is defined by the subset of nodes $T\subseteq V$ that take part in that test. We denote the $i$'th test by $T_i$,
 $T_i \subseteq V$ and its result by $r_i$. {When $r_i = 1$, we say that the test is positive, and when $r_i = 0$, we say that the test is negative.}

{\textbf{Adaptive/non-adaptive/semi-non-adaptive testing.} Testing can be adaptive or non-adaptive: In adaptive testing, each test $T_i$, $i=1,2,\ldots,L$,  is a function of the prior tests $T_1,\ldots,T_{i-1}$ as well as their respective results $r_1,\ldots,r_{i-1}$. {Note that the number of tests, $L$, is a random variable that can depend on the randomness in the algorithm, the randomly chosen target edge, as well as the randomness in the test results (if testing is adaptive and noisy).}}  In non-adaptive testing, all the tests $T_1, T_2, \ldots, T_L$ are designed a-priorily and can be done in parallel. In other words, the design of the $T_i$'s are independent of their results. 

{We further introduce a semi-non-adaptive group testing model ({SNAGT}) that serves as an intermediate approach between non-adaptive and adaptive group testing. This model is relevant because it combines the advantages of both strategies: the design of individual tests is non-adaptive, meaning each test is planned without relying on the outcomes of previous tests. However, it introduces adaptiveness in the decision to stop testing, allowing the total number of tests, L, to be determined based on prior results. From a practical perspective, the utility of any SNAGT testing strategy is that since who to test at any point in the testing sequence can be determined apriori (ie before any result is known), the individuals who must be tested at any point can be informed well in advance, actually at the start of the overall testing process. This renders the testing process as convenient for the subjects. Cancelation of tests can however be on the fly, as results become available.  
In Section~\ref{sec:algrep}, we present an algorithm for the adaptive testing. In Section~\ref{sec:nonadapt}, we examine non-adaptive and semi-non-adaptive algorithms, highlighting the challenges associated with each. 
}



\textbf{Noisy Testing.} Test results can be noiseless or noisy. In the noiseless setting, we have $r_i = 1$ (positive test result) iff there is at least one node in $T_i$ that is infected (i.e., belongs to the sampled edge) and we have $r_i = 0$ otherwise. In the noisy setting,  results are flipped with probability $\delta$, i.e. when $T_i$ contains a node from the target edge, $r_i = 1$ with probability $1 - \delta$ and $r_i = 0$ with probability $\delta$. Similarly, when $T_i$ does not contain any node from the target edge, $r_i = 0$ with probability $1 - \delta$ and $r_i = 1$ with probability $\delta$. In other words, we assume that the error is symmetric. {The symmetric noisy model is studied in the literature for independent group testing in works such as \cite{scarlett2018noisy,scarlett2018near,chan2011non,malyutov1998jaynes,cai2013grotesque}. In Section~\ref{sec:noisy}, we study this noisy testing for group testing with correlation.



\paragraph{Recovery.}  Given the test sequence  $T_1,\ldots,T_L$ and the respective results $r_1,\ldots, r_L$, the nodes' states are estimated as $\hat{X} = (\hat{X}_1, \hat{X}_2, \ldots, \hat{X}_n)$ where $\hat{X}_i$ is the estimated state of node $i$, $i=1,\ldots,n$.
The probability of error is then defined as $P_e = P(X \neq \hat{X})$, where the randomness is over $\mathcal{D}$ (probability distribution over the edges) and possible randomness of the testing design. 

\paragraph{Objective.} In this work, we aim to devise algorithms that, in expectation, use the least number of tests (minimize $\mathbb{E}[L]$) and recover the sampled edge with probability of error $P_e$. {The error probability $P_e$ can be zero, a value that goes to zero as $n$ goes to infinity, or a fixed goal error.}


{
Note that the above model is equivalent to arbitrary correlation among nodes. Consider an arbitrary distribution $\mathcal{P}(S)$, $S\subset V=\{1,2,\ldots,n\}$, defined over the power set of a population size $n$, encoding the probability that exactly subset $S$ of the population is infected. Then, there is a corresponding hypergraph $G$ where an edge $e = S \subseteq V$ being the target edge is equivalent to set $S$ being infected, and $\mathcal{D}(e) = \mathcal{P}(S)$. It is important to note that the set of hyperedges can also be a proper subset of the power set, depending on the distribution $\mathcal{D}$, so we have $|\text{support}(\mathcal{D})| = |E|$.}

\subsection{{Prior Models as Special Cases}} \label{sec:priormodel}
We next discuss how different statistical models for infection of nodes in group testing can be considered as a special case of the above model. In particular, we find the probability mass function $\mathcal{D}(S)$ for all possible hyperedges $\mathcal{S}\subseteq V$.

First consider independent group testing \cite{li2014group}. In this model, we have $\mathcal{D}(X) = \Pi_i [X_i p_{v_i} + (1-X_i)(1-p_{v_i})]$. 
Going beyond independent group testing, {we  consider models where correlation is modeled by edge faulty graphs \cite{arasli2023group,nikpey2022group}.}  
\hesam{ \cite{nikpey2022group}} considers a \emph{simple} graph $H = (V_H,E_H)$ where each (simple) edge is sampled in $H$ randomly {with the same probability} and the nodes that end up in the same component are assumed to have the same state, independently of the nodes in other components. Translating to the hypergraph model, for a binary vector $X=(X_1,\ldots,X_n)$, let $S = \{i \mid X_i = 1\}$. Then $\mathcal{D}(S)$ is the probability that the nodes in $S$ (and only those) are infected in $H$. 
 {One can compute $\mathcal{D}(S)$ by taking a sum over the probability of graphs in which $S$ is disconnected from the rest of the graph, all the components formed in $S$ are infected and all the components formed in $V \setminus S$ are not infected. This gives us the equivalent hypergraph model for \cite{nikpey2022group}. For more details, please refer to Appendix~\ref{app:hypmodel}.}
In a related work in \cite{arasli2023group}, a similar model is considered, but with the restriction that only one component can be infected, and the infection probabilities across edges may vary. {Again, we can build the equivalent hypergraph model by computing the probability mass function $\mathcal{D}(S)$ for every hyperedge $S$. This is done by taking the sum over probabilities of graphs where $S$ is a connected component after dropping the edges and is the only infected component. For more details, please refer to Appendix~\ref{app:hypmodel}.}

In \cite{ahn2023adaptive}, the authors  consider $m$ communities each of size $k$.  Initially, each node is a ``seed'' with probability $q$. Then the seeds infect nodes in the same community with probability $q_1$ and outside of the community with probability $q_2 < q_1$. The  probability that exactly one set $S$ is infected is the probability that no one is seed in $V \setminus S$ (where $V$ is the union of all communities) and set $S$ becomes infected with $i, 1\leq i \leq |S|$ seeds.  
{For more details, please refer to Appendix~\ref{app:hypmodel}.}

In \cite{nikolopoulos2021group2}, the authors consider $F$ families where each family is infected with probability $q$ independently. Then each node in an infected family $F_j$ is infected with probability $p_j$ independent of the rest, and the non-infected families have no infected nodes. The probability that a subset $S$ of the individuals are infected is that first, their family is infected (with probability $q$) and second, if $i \in S$ is in $j$'th group, $i$ is infected (with probability $p_j$). {Now if $S = \{v_1,\ldots, v_k\}$, and $I_j$ is the number of nodes of $F_j$ in set $S$, then the probability that only set $S$ becomes infected is \[
\mathcal{D}(S) =  \prod_{\substack{j : I_j > 0}} q(1-p_j)^{|F_j| - I_j} (p_j)^{I_j} \cdot \prod_{\substack{j : I_j = 0}} (1-q + q(1-p_j)^{F_j})
.\]}

\begin{figure}
    \centering
    \begin{tikzpicture}
\node[vertex,label=above:\(v_1\)] (v1) {};
\node[vertex,above right of=v1,label=above:\(v_2\)] (v2) {};
\node[vertex,below right of=v1,label=above:\(v_3\)] (v3) {};
\node[vertex,below left of=v1,label=above:\(v_4\)] (v4) {};
\node[vertex,above left of =v1,label=above:\(v_5\)] (v5) {};

\begin{pgfonlayer}{background}
\draw[edge,color=yellow] (v1) -- (v5);
\begin{scope}[transparency group,opacity=.5]
\draw[edge,opacity=1,color=green] (v1) -- (v2) -- (v3) -- (v1);
\fill[edge,opacity=1,color=green] (v1.center) -- (v2.center) -- (v3.center) -- (v1.center);
\end{scope}
\draw[edge,color=red,line width=40pt] (v4) -- (v5);
\end{pgfonlayer}

\node[elabel,color=green,label=right:\(e_1 \text{, $p_{e_1} = 0.3$}\)]  (e1) at (4,1) {};
\node[elabel,below of=e1,color=yellow,label=right:\(e_2 \text{, $p_{e_2} = 0.2$} \)]  (e2) {};
\node[elabel,below of=e2,color=red,label=right:\(e_3 \text{, $p_{e_3} = 0.5$} \)]  (e3) {};
\end{tikzpicture}
    \caption{A hypergraph with $V = \{v_1, v_2, v_3, v_4, v_5 \}$ and $E = \{ \{1,2,3\}, \{1,5\}, \{4,5\} \}$ where $p_{\{1,2,3\}} = 0.3$, $p_{\{1,5\}} = 0.2$, and $p_{\{4,5\}} = 0.5$.}
    \label{fig:hypgraph}
\end{figure}

\subsection{Preliminaries}
In this section, we provide the tools and basic ideas that we will use to describe and analyze our proposed group testing algorithm. 
We first establish a  lower bound on the number of tests needed. Our result is inspired by \cite{li2014group} where  almost matching lower  and upper bounds are proved for  adaptive and non-adaptive independent group testing. Following a similar argument, we prove a lower bound for the hypergraph problem.  Specifically, in \cite{li2014group} they proved the following result:
\begin{theorem}
\label{thm:LCH}
    \cite{li2014group} For the case where nodes are independent, i.e. $\mathcal{D}(X) = \Pi_i [X_i p_{v_i} + (1-X_i)(1-p_{v_i})]$, for any algorithm that recovers the infection set and performs $L$ tests with probability $1-\epsilon$ we have
    $$L \geq (1-\epsilon)H(X).$$
    On the other hand, there is an adaptive algorithm that finds the infected set with a probability that goes to 1 as $n\rightarrow\infty$ using $L \leq O(\mu + H(X))$ tests \textcolor{black}{where $H(X)=\sum_{v\in V}-p_v\log p_v$ and $\mu$ is given by \eqref{eq:mu}.}
\end{theorem}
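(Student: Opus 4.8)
The plan is to treat the two halves of the statement separately, along the lines of the information‑theoretic converse and the entropy‑aware adaptive scheme of \cite{li2014group}.

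\emph{Lower bound.} Conditioned on its internal randomness $U$, any adaptive algorithm is a binary decision tree: internal nodes are tests, the two children correspond to the outcomes $r_i\in\{0,1\}$, and each leaf carries the estimate $\hat X$. Hence the outcome string $Y=(r_1,\dots,r_L)$ is a deterministic function of $(X,U)$, so $I(X;Y\mid U)=H(Y\mid U)\le L$, and $\hat X=g(Y,U)$. Since $\Pr[\hat X=X]\ge 1-\epsilon$, Fano's inequality gives $H(X\mid Y,U)\le H(X\mid\hat X)\le h(\epsilon)+\epsilon\log(|\mathrm{supp}(X)|-1)$, and combining with $H(X)=I(X;Y\mid U)+H(X\mid Y,U)\le L+H(X\mid Y,U)$ yields $L\ge H(X)-h(\epsilon)-\epsilon\log|\mathrm{supp}(X)|$. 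To get the clean form $L\ge(1-\epsilon)H(X)$ one restricts $X$ to a high‑probability (``typical'') subset on which $\log|\mathrm{supp}|$ is within a $(1+o(1))$ factor of $H(X)$, and absorbs the $h(\epsilon)$ and $o(1)$ terms; I would carry out this restriction carefully, since that is where the constant is actually earned.

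\emph{Upper bound.} I would exhibit an adaptive ``entropy‑aware binary splitting'' scheme. Split $V$ into heavy nodes $\{v:p_v\ge 1/2\}$ and light nodes; there are at most $2\mu$ heavy nodes and we test each individually, costing $O(\mu)$. Partition the light nodes into dyadic classes $G_k=\{v:2^{-k-1}<p_v\le 2^{-k}\}$ for $k\ge 1$, and within each $G_k$ form sub‑blocks of $\Theta(2^k)$ nodes, so a sub‑block contains $\Theta(1)$ infected nodes in expectation. Test each sub‑block: a negative result clears $\Theta(2^k)$ nodes with $O(1)$ tests, and on a positive result run Hwang‑style binary splitting, which locates $d$ infected items among $m$ using $O(d\log(m/d))=O(dk)$ tests. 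The ``empty sub‑block'' tests total $O\big(\sum_k|G_k|2^{-k}\big)=O(\mu)$ because $\sum_k|G_k|2^{-k-1}\le\sum_{v\ \mathrm{light}}p_v\le\mu$; the ``splitting'' tests total $O\big(\sum_k k\cdot(\#\text{infected in }G_k)\big)$, whose expectation is $O\big(\sum_k k|G_k|2^{-k}\big)=O(H(X))$ since $-p_v\log p_v\ge k\,2^{-k-1}$ for $v\in G_k$. Hence $\E[L]=O(\mu+H(X))$; the scheme always recovers $X$ exactly, and by Chernoff concentration of the independent infection counts around their means, $L=O(\mu+H(X))$ holds with probability $1-o(1)$, which matches the stated ``probability $\to 1$'' guarantee.

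\emph{Main obstacle.} The bookkeeping in the upper bound is the crux: one must (i) choose the sub‑block sizes so the per‑class cost cleanly separates into a ``$\mu$ part'' and an ``$H(X)$ part'', (ii) relate the per‑block splitting cost $O(d\log(m/d))$ to $\sum_k k\cdot(\#\text{infected})$ without losing a factor, and (iii) upgrade the expectation bound to the high‑probability statement via concentration of the infection counts. For the lower bound, the delicate point is obtaining the bare $(1-\epsilon)H(X)$ rather than $H(X)-\epsilon n$, which requires the typical‑set restriction rather than a naive application of Fano.
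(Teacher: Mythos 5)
This theorem is imported verbatim from \cite{li2014group}; the paper states it without proof (and likewise states its extension, Theorem~\ref{theorem:lowerbound}, without proof), so there is no in-paper argument to match your proposal against. Judged on its own terms, your sketch has a genuine gap in each half. For the lower bound, the Fano route gives $L \ge H(X) - h(\epsilon) - \epsilon\log|\mathrm{supp}(X)|$, and for independent nodes with all $p_v\in(0,1)$ the support is all of $\{0,1\}^n$, so the subtracted term is $\epsilon n$, which swamps $H(X)$ whenever $H(X)=o(n)$ (e.g.\ $\mu=O(1)$). The ``typical set'' repair you defer to is exactly where the argument breaks: for independent but non-identically distributed bits, $-\log p(X)$ need not concentrate around $H(X)$ (take $p_{v_j}=2^{-j}$, where $H(X)=O(1)$ but individual realizations have wildly varying self-information), and even when it does concentrate, restricting $X$ to a typical set perturbs both the error probability and the entropy, so at best you recover $(1-\epsilon)H(X)(1-o(1))$ with additive slack, not the clean constant-free bound. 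Indeed, the bare inequality $L\ge(1-\epsilon)H(X)$ cannot follow from any argument that only uses ``success with probability $1-\epsilon$ within $L$ tests'': with $p_v=\epsilon/n$ for all $v$, declaring everyone healthy after zero tests succeeds with probability $(1-\epsilon/n)^n\approx 1-\epsilon$ while $H(X)\approx\epsilon\log(n/\epsilon)\to\infty$; so whatever additional hypotheses or conventions the cited result relies on, your proof must identify and use them, and the sketch does not.

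For the upper bound, the accounting of the ``empty sub-block'' tests is wrong: within the dyadic class $G_k$ you must run at least one test per nonempty class, so the cost is $\sum_k\max\bigl(1,|G_k|2^{-k}\bigr)$, not $O\bigl(\sum_k|G_k|2^{-k}\bigr)$. Taking $p_{v_j}=2^{-j}$ for $j=1,\dots,n$ puts one node in each of $n$ classes, so your scheme performs $n$ tests while $\mu+H(X)=O(1)$. The standard fix (and the one in \cite{li2014group}) is to sort nodes by probability and greedily pack them into groups of total probability mass $\Theta(1)$ irrespective of dyadic class, which caps the number of group tests at $O(\mu+1)$ while still keeping probabilities within a group comparable enough that the splitting cost for an infected $v$ is $O(\log(1/p_v))$; your per-class blocking achieves the second property but not the first. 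Finally, the upgrade from $\mathbb{E}[L]=O(\mu+H(X))$ to a probability-$1-o(1)$ guarantee via Chernoff requires $\mu+H(X)\to\infty$ and bounded per-node contributions relative to $H(X)$; in the regime $H(X)=O(1)$ Markov's inequality yields only a constant success probability, so this step also needs an explicit hypothesis or a different argument.
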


The proof provided for the $(1- \epsilon) H(X)$ lower bound is quite general and can be extended to our work as follows:

\begin{theorem}\label{theorem:lowerbound}
    For any algorithm that recovers the sampled edge $e^*$ with probability at least $1-\epsilon$ using $L$ tests, we have
    \begin{equation}\label{eq:lowerboundentropy}
        L \geq (1-\epsilon)H(X)
    \end{equation}
    where $H(X) = \sum_{e \in E} - p(e) \log p(e)$.
\end{theorem}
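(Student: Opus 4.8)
The plan is to mimic the classical information-theoretic lower bound for group testing, adapted to the hypergraph setting where the ``message'' being communicated is the identity of the sampled edge $e^* \sim \mathcal{D}$. First I would observe that since each edge $e$ determines the state vector $X$ completely (namely $X_v = \mathbf{1}[v \in e]$) and distinct edges give distinct state vectors, recovering $e^*$ is equivalent to recovering $X$, and the entropy of the sampled edge equals $H(X) = \sum_{e \in E} -p(e)\log p(e)$ as defined. So the theorem is really a statement about how many binary test outcomes are needed to determine a random variable of entropy $H(X)$.

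The core argument I would run is a Fano-type / source-coding inequality. Consider any (possibly adaptive, possibly randomized) algorithm that performs $L$ tests and outputs an estimate $\hat{X}$ (equivalently $\hat{e}$) with $P(\hat{X} = X) \geq 1-\epsilon$. The transcript of the algorithm — the sequence of test outcomes $(r_1, \ldots, r_L)$ — takes at most $2^L$ values (for a fixed realization of the algorithm's internal randomness; one can condition on it and average at the end, since the bound will hold pointwise in the randomness). Since $\hat{X}$ is a deterministic function of this transcript (and the internal randomness), we have $H(\hat{X}) \leq L$. Now I would write $H(X) = H(X) - H(X \mid \hat{X}) + H(X \mid \hat{X}) = I(X; \hat{X}) + H(X \mid \hat{X}) \leq H(\hat{X}) + H(X \mid \hat{X})$. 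For the first term, $H(\hat{X}) \le L$. For the second term, I invoke Fano's inequality: $H(X \mid \hat{X}) \le h(\epsilon) + \epsilon \log(|E| - 1) \le h(\epsilon) + \epsilon \log |E|$. Combining gives $H(X) \le L + h(\epsilon) + \epsilon \log|E|$, which is close to but not exactly $(1-\epsilon)H(X)$.

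To get the cleaner bound $L \geq (1-\epsilon) H(X)$ stated in the theorem, I would instead follow the tighter route used in \cite{li2014group} rather than the crude Fano step. The idea there: let $Z$ be the indicator of the error event $\{\hat X \ne X\}$, so $P(Z=1) \le \epsilon$. Then $H(X) = H(X, Z) = H(Z) + H(X \mid Z) = H(Z) + P(Z=0) H(X \mid Z=0) + P(Z=1) H(X \mid Z=1)$. On the event $Z = 0$ we have $X = \hat X$, so $H(X\mid Z=0) \le H(\hat X \mid Z = 0)$, and overall $H(X \mid Z=0, \hat X) = 0$; meanwhile $\hat X$ is a function of a length-$L$ binary transcript so contributes at most $L$ bits. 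Tracking the terms carefully (as in \cite{li2014group}, where the $H(Z)$ and the $P(Z=1)H(X\mid Z=1)$ terms are absorbed) yields $L \ge (1-\epsilon) H(X)$. Concretely: $L \ge H(\hat X) \ge I(X;\hat X) = H(X) - H(X\mid \hat X) \ge H(X) - \big(P(X \ne \hat X)\cdot H(X) \big) \ge (1-\epsilon)H(X)$, where the key inequality $H(X \mid \hat X) \le P(X\ne \hat X) H(X)$ is the one established in \cite{li2014group} and which I would reproduce.

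The main obstacle is getting the constant exactly right — i.e., justifying $H(X\mid \hat X) \le \epsilon\, H(X)$ (or equivalently that the error contributes at most an $\epsilon$ fraction of the entropy) rather than settling for the weaker Fano bound with its additive $h(\epsilon) + \epsilon\log|E|$ slack. This requires the refined entropy bookkeeping from \cite{li2014group}: decomposing $H(X\mid\hat X)$ over the error indicator and bounding $H(X \mid \hat X, Z=1)$ against $H(X)$ itself (conditioning cannot increase entropy, and the sub-distribution on the error event is dominated appropriately). A secondary, minor point is handling algorithmic randomness and the adaptive dependence of later tests on earlier outcomes — but both are routine: condition on the internal coins and note the transcript is still a binary string of length $\le L$ regardless of adaptivity. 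I would state these as remarks rather than belabor them, since the substance is identical to the independent-case proof, with ``sampled edge $e^*$'' playing the role of the unknown infection configuration.
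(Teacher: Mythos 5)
Your high-level architecture is the right one, and it matches what the paper itself does, which is essentially nothing: the paper offers no self-contained proof of Theorem~\ref{theorem:lowerbound}, stating only that the $(1-\epsilon)H(X)$ argument of \cite{li2014group} ``is quite general and can be extended.'' Your observations that recovering $e^*$ is equivalent to recovering $X$, that the transcript $(r_1,\ldots,r_L)$ takes at most $2^L$ values regardless of adaptivity, that one may condition on the internal coins, and hence that $I(X;\hat{X})\leq H(\hat{X})\leq L$, are all correct and are exactly the ingredients the paper implicitly relies on. The Fano-based bound you derive first, $L\geq H(X)-h(\epsilon)-\epsilon\log|E|$ (with $h$ the binary entropy), is rigorous as far as it goes.

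The genuine gap is the step you yourself flag as ``the main obstacle'': the inequality $H(X\mid\hat{X})\leq P(X\neq\hat{X})\,H(X)$, which you assert is ``established in \cite{li2014group}'' but do not reproduce. That inequality is false in general, so it cannot be waved through. Concretely, take $E=\{e_0,e_1,\ldots,e_N\}$ with $p(e_0)=1-\epsilon$ and $p(e_i)=\epsilon/N$ for $i\geq 1$, and the estimator $\hat{X}\equiv e_0$ (zero tests). Then $P(X\neq\hat{X})=\epsilon$ while $H(X\mid\hat{X})=H(X)=(1-\epsilon)\log\frac{1}{1-\epsilon}+\epsilon\log\frac{N}{\epsilon}>\epsilon H(X)$ for large $N$; the proposed decomposition over the error indicator $Z$ cannot rescue it, because $H(X\mid\hat{X},Z=1)$ is entropy under the conditional law given the error event, which is not dominated by $H(X)$ (conditioning on a specific event can increase entropy). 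Since the entire content of the theorem beyond the standard Fano bound lives in precisely this step, the proof as written is incomplete, and the same example shows that any correct argument must use strictly more than the average-success hypothesis you invoke (e.g., a per-edge success guarantee, or an accounting that keeps the $h(\epsilon)+\epsilon\log|E|$ slack explicit). I would either prove the bound in the weaker Fano form or supply the missing bookkeeping in full rather than citing it.
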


\begin{remark}
In contrast to the result of Theorem \ref{thm:LCH} where $H(X)$ characterizes the minimum number of tests (up to an additive factor), $H(X)$ is not necessarily tight in our setting. We discuss this more later in Example~\ref{app:examplehighprob} where $H(X) = O(\log n)$ but $\Omega(n)$ tests are needed. 
\end{remark}

\section{{Adaptive Testing: An Algorithmic Approach}} \label{sec:3}
\subsection{Challenges}

Most classic adaptive algorithms for independent group testing are, in essence, built on generalized binary-splitting which greedily chooses the test that most evenly splits the candidate nodes or equivalently chooses the test with the maximal information gain. In probabilistic group testing (with independent nodes) \cite{li2014group}, this involves choosing subsets of nodes to test such that the probability of a subset containing an infected node is close to $1/2$ (meaning that the test provides close to one bit of information). A negative test result allows a subset to be ruled out. Candidate subsets that test positive are, however,  partitioned into two subsets (with roughly equal probability mass) for further testing. With such a design, the posterior probability of the sets testing positive, given the sequence of previous test results, is close to~$1/2$.
To follow this paradigm, we face three challenges.

First and foremost, it is important to note that treating each node individually and disregarding the underlying correlation is not efficient in terms of the number of tests needed and we aim to propose group testing strategies that exploit the correlation. 

 Under this model, even verifying whether a specific edge $e$ is the target edge or not requires many tests. As a matter of fact, it may need a number of tests that are of the order of its size. For example, consider the hypergraph in Figure~\ref{fig:graphn} with $k = 4$ nodes 
and consider $e_1$ as the suspected edge set with size $k-1$. To verify whether $e_1$ is the target edge, we cannot treat it as a whole and conclusively determine if $e_1$ is the actual edge or not based on a single test: If the test is positive, it can be $e_1$ that is realized or any other edge. In fact, any test of size greater than one is positive. Indeed, it takes $\Omega(k)$ tests to determine if $e_1$ is the target edge. This is because if $e_1$ is not the target edge, only a single node is not infected, and detecting the negative node will take $\Omega(k)$ individual tests. 

\begin{figure}
    \centering
    \begin{tikzpicture}
\node[vertex,label=above:\(v_1\)] (v1) {};
\node[vertex, right of=v1,label=above:\(v_2\)] (v2) {};
\node[vertex,below  of=v2,label=above:\(v_3\)] (v3) {};
\node[vertex,below of=v1,label=above:\(v_4\)] (v4) {};

\begin{pgfonlayer}{background}
\begin{scope}[transparency group,opacity=.8]

\draw[edge,opacity=.8,color=yellow] (v1) -- (v2) -- (v3) -- (v4) -- (v1);

\draw[edge,opacity=.8,color=green] (v1) -- (v2) -- (v3) -- (v1);

\draw[edge,opacity=8,color=red] (v1) -- (v2) -- (v4) -- (v1);

\draw[edge,opacity=.8,color=black] (v1) -- (v3) -- (v4) -- (v1);

\draw[edge,opacity=.8,color=white] (v2) -- (v3) -- (v4) -- (v2);

\end{scope}
\end{pgfonlayer}

\end{tikzpicture}
    \caption{A graph with 4 nodes and 4 edges. Each edge contains 3 nodes, for example, $e_2 = \{v_1, v_3, v_4\}$}
    \label{fig:graphn}
\end{figure}

Now consider the classical idea of greedily testing subsets and ruling out those subsets that test negative. If one wishes to utilize the correlation that is inherent in our model, it is not straightforward how the group tests should be designed: (i) Suppose that you have chosen a subset $S$ of nodes to test. If this test is negative, not only all nodes in $S$ are negative but also all those edges that contain a node in $S$ are not the target infected edge and could be ruled out. (ii) If the test is positive, we cannot conclude that the nodes that belong to $V\backslash S$ are negative. As a matter of fact, all can still remain valid candidates if they share an edge with an element of $S$. In other words, we can not conclude that the target infected edge is a subset of the  nodes of set $S$.   The design of the tests is thus nontrivial.

In the following, we demonstrate in 
Lemma~\ref{lemma: update}  how a test can rule out a set of edges. Using this lemma, and updating the posterior probabilities of infections after each test, we design a greedy algorithm to sequentially rule out negative edge sets. 

\subsection{Some Useful Definitions}\label{sec:newdefs}

In order to design the tests sequentially, we utilize the posterior probability of nodes and edges of the hypergraph being infected given the previous tests and their results. In particular, suppose tests $T_1, T_2, \ldots, T_{k}$ are previously performed and the results are $r_1, r_2,\ldots, r_k$. We show the posterior probability of edge $e$ with $$q_{e\mid \{(T_1, r_1),(T_2, r_2), \ldots, (T_k, r_k) \}}.$$ Note that $q_{e|\{\}} = p(e)$. Posterior probability of $v \in V$ being infected is similarly defined by
\begin{align}
\label{eq:qvnotation}
q_{v\mid\{(T_1, r_1),(T_2, r_2), \ldots, (T_k, r_k) \}} = \sum_{e\in E:\ e \ni v} q_{e\mid\{(T_1, r_1),(T_2, r_2), \ldots, (T_k, r_k) \}}.
\end{align}
When it is clear from the context, we drop $\{(T_1, r_1),(T_2, r_2), \ldots, (T_k, r_k) \}$ and use $q_e$ and $q_v$.

To connect edge probabilities to the probability that a test becomes positive, we need the following definition.

\begin{definition}
    For a subset $S \subseteq V$, the edge set of $S$ is defined as $E(S) = \{e \mid e \in E \wedge \forall v \in e: v \in S \}$. The weight of the set $S$ after tests $\{(T_1, r_1),(T_2, r_2), \ldots, (T_k, r_k) \}$ is defined as
    \begin{equation}
        w(S| \{(T_1, r_1),(T_2, r_2), \ldots, (T_k, r_k) \}) = \sum_{e \in E(S)} q_{e| \{(T_1, r_1),(T_2, r_2), \ldots, (T_k, r_k) \}}.\label{eq:w}
    \end{equation}
    When clear from the context, we drop $\{(T_1, r_1),(T_2, r_2), \ldots, (T_k, r_k) \}$ and simply write $w(S)$.
\end{definition}

After performing $k$ tests and seeing the results, the expected number of infections would be updated according to the posterior probabilities discussed above. We denote the expected number of infected nodes in the posterior probability space by:
\begin{equation}
\mu_{\{(T_1, r_1),(T_2, r_2), \ldots, (T_k, r_K) \}} = \sum_v q_{v \mid {\{(T_1, r_1),(T_2, r_2), \ldots, (T_k, r_k) \}} }
\end{equation}

\noindent We drop ${\{(T_1, r_1),(T_2, r_2), \ldots, (T_k, r_k) \}}$ when it is clear from the context, and show it with $\tilde{\mu}$ to avoid confusion with the original $\mu$ before performing any test.

 Throughout this work, when a new test $(T_i, r_i)$ is done, we compute $q_{e | (T_1,r_1),\ldots, (T_i,r_i)}$ from  the last posterior $q_{e | (T_1,r_1),\ldots, (T_{i-1},r_{i-1})}$. In other words, we initially start with $\mathcal{D}_0 = \mathcal{D}$ and after observing the results of the $i$'th test, compute $\mathcal{D}_i$  from $\mathcal{D}_{i-1}$ based on  $(T_i, r_i)$ where $\mathcal{D}_i=\{q_{e | (T_1,r_1),\ldots, (T_i,r_i)}\}_{e\in E }$. In Lemma~\ref{lemma:poster}, we prove that computing the posterior probability entails ``removing some edges'' from the hypergraph and ``scaling the distribution'' according to the following definition:

\begin{definition}\label{def:update}
    Let $\mathcal{D}$ be a distribution over a set $E$. Removing $e \in E$ from $\mathcal{D}$ entails setting $\mathcal{D}(e) = 0$ and re-normalizing the distribution  by a factor $c > 1$, $\mathcal{D}(e) \leftarrow c\mathcal{D}(e)$, such that $\sum_{e \in E} \mathcal{D}(e) = 1$.
\end{definition}

In the graph in Figure~\ref{fig:hypgraph}, we can see that $p_{v_1} = 0.5$, $p_{v_2} = p_{v_3} = 0.3$, $p_{v_4} = 0.5$, and $p_{v_5} = 0.7$, which are proabilities of nodes being infected and $q_{v_i | \{ \}} = p_{v_i}$. If $S = \{v_1, v_2, v_3, v_5\}$, then $E(S) = \{e_1, e_2 \}$ and $w(S) = p_{e_1} + p_{e_2} = 0.5$.
The average number of infections is $\mu = 0.5+0.3+0.3+0.5+0.7 = 2.3$. If a test $\{v_2, v_4\}$ returns positive, $q_{e_2} = 0$ as it does not share a node with $\{v_2, v_4\}$, and the posterior is re-normalized by the scalar $0.8$ according to Definition~\ref{def:update} so that $q_{e_1} = p_{e_1}/.8 = .375, q_{e_3} = p_{e_3}/.8 = .625$  sum up to 1. Now $q_{v_1  | \{(S,1)\}} = 0.375$, $q_{v_2  | \{(S,1)\}} = .375$, $q_{v_3 | \{(S,1)\}} = 0.375$, $q_{v_4 | \{(S,1) \}} = 0.625$, and $p_{v_5 | \{(S,1)\}} = 0.625$

We further elucidate edges and relation between weights of a set of nodes, probabilities that the nodes in it are infected, probabilities that edges in it are target edges using an additional example. We illustrate the relations that are contrary to common intuition:
\begin{enumerate}
\item An edge $e_1$ can be a proper subset of another edge, $e_2$, but the probability that $e_1$ is target edge can be considerably higher than the probability that $e_2$ is a target edge.
\item Weight  of a set $S$ of nodes is not equal to the sum of the probabilities that the nodes in it are infected, the former can be considerably lower than the latter, the former can be $0$ even when it contains several nodes which are infected with probability close to $1$.
\end{enumerate}
\begin{Example}\label{ex:islands}
Consider a network with $k$ islands where each island has $m$ nodes, so $n = mk$. Nodes on the same island are perfectly correlated, ie, they always have the same state, as such every hyper edge always contains all or none of any given island (Figure~\ref{fig:island}, $k=4, m = 3$). Thus, there are $2^k$ hyperedges at most; note that in this case the number of hyper edges can be significantly smaller than the size of the power set of nodes, $2^{mk}$. Let nodes in different islands be independent. Consider a setting with three types of islands: low, moderate and high, characterized by their  probability of infection. For all nodes in high islands, $p_i =  1-\gamma$ (e.g. island 4 in Figure \ref{fig:island}), where $\gamma$ is a small positive number.  In moderate islands (e.g. islands 2,3), $p_i \approx \frac{1}{2}$ and in low islands, e.g. island 1, $p_i = \epsilon$, where $\epsilon$ is a small, positive number.

Nodes in an edge can be a proper subset of nodes in another edge, eg, one edge may comprise of $2$ islands, another may comprise of only one of those $2$ islands. Probability that an edge is a target edge can be lower than that for another edge which is its proper subset. For example in Figure \ref{fig:island}, consider an edge that is island $4$, and another that is islands $1, 4$. The first is a proper subset of the second. Probability that the 1) first is a target edge is $p_4 (1-p_1)(1-p_2)(1-p_3)= (1-\gamma) (1- \epsilon) 1/4 \approx 1/4$ 2) second is a target edge is $p_4p_1 (1-p_2)(1-p_3) = (1-\gamma)\epsilon 1/4 \approx 0.$ The difference arises because $1-p_1 \approx 1,$ while $p_1 \approx 0.$

Next, weight of a set $S$ of nodes is not equal to the sum of the probabilities that the nodes in it are infected. Simplest way to see it is to consider $w(S)$ for this example when $S = V$. Since $V$ consists of all the edges, $w(S)$ equals the sum of the probabilities that each edge is a target edge which is $1$ since exactly $1$ edge is a target edge. But note that each node in island $4$ is infected with probability close to $1$. Thus, sum of the probabilities that a node is infected over all nodes in $V$ is at least a number close to $3$ (it is close to $6$ in reality given that there are $2$ moderate islands). Weight of a set $S$ can be $0$ even when it contains nodes whose probability of infection is close to $1$. For example consider a $S$ that consists of $2$ nodes in the high island $4$, it consists of $2$ nodes which are infected with probability close to $1$, but it does not consider a single edge because an edge has either all or no node in any given island. Thus $w(S) = 0.$
\end{Example}

\begin{figure}
 \centering
 \includegraphics[scale = .14]{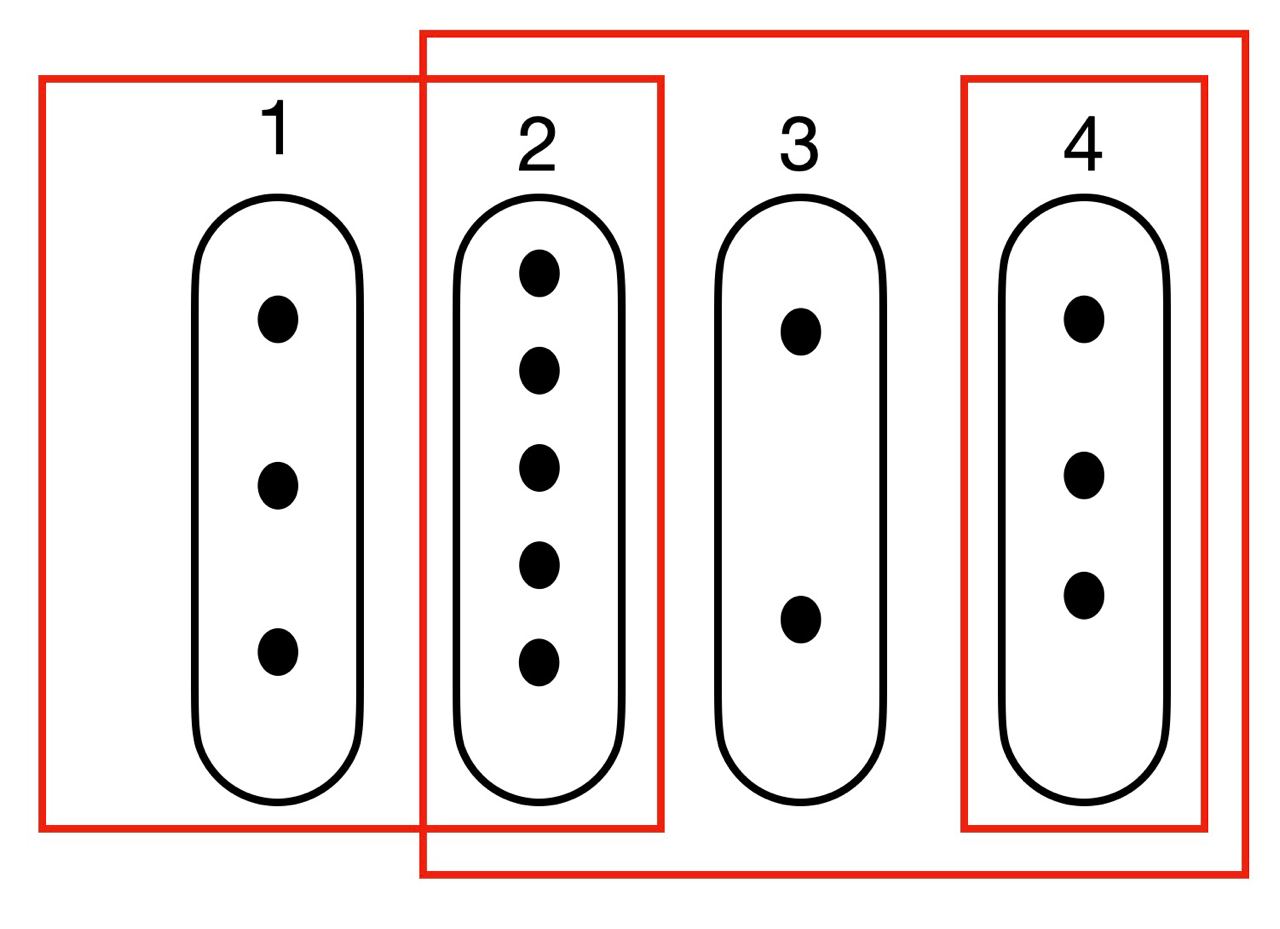}
 \caption{An example with $k = 4$ islands. Each edge contains either all nodes in an island or none of it. There are therefore $2^4 - 1 = 15$ edges. Three edges are shown  in red. Nodes in an edge can be a proper subset of nodes in another edge, eg, one of the edges marked in red consists of islands $2, 3, 4$, another consists only island $4.$ Nodes in island 1 are infected with $p_1 = \epsilon \approx 0$, nodes in islands 2 and 3 are infected with probability $p_2 = p_3 \approx 1/2$, and nodes in island 4 are infected with high probability $p_4 = 1-\gamma \approx 1$.}  \label{fig:island} \end{figure}

Despite the counterintuitive properties above, the following intuitive property holds: $w(V \setminus \{v\}) = 1 - p_v.$ This can be argued as follows. Note that when  a node $v$ is removed from $S$, for any $S \subseteq V$, $w(S)$ reduces by the sum of probabilities of the edges  containing $v$ which were contained in $S$. This sum equals the sum of probabilities of all edges containing $v$ when $S=V$ as then all edges are contained in $S.$ And, the sum of probabilities of all edges containing $v$ equals the probability that the target edge contains $v$ (since only one edge can be the target edge), which in turn equals the probability that $v$ is infected as $v$ is infected if and only if the target edge contains it.

\subsection{Overview and Ideas} \label{sec:overview}

\begin{figure}
    \centering
    \includegraphics[scale = .15]{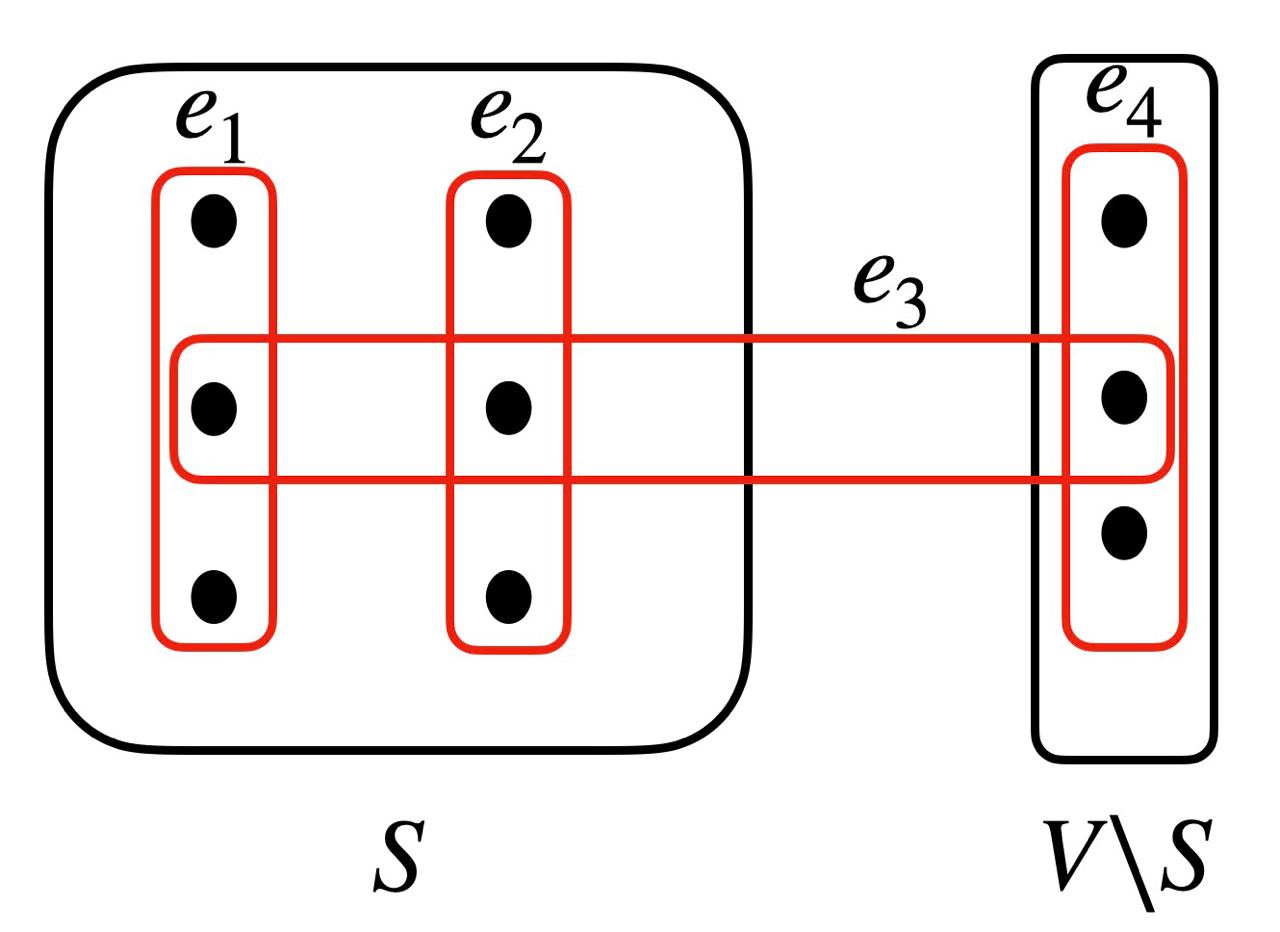}
    \caption{An illustration of $E(S)$. Here, $E(S) = \{ e_1,  e_2\}$ but $e_3 \notin E(S)$ as one of its endpoints is outside of $S$. Hence, $w(S) = p_{e_1} + p_{e_2}. $ Test $V \setminus S$ is positive iff $e^* = e_3$ or $e^* = e_4$.}
    \label{fig:E(S)}
\end{figure}

The ideas behind our algorithm can be summarized in three points: 
\begin{itemize}
    \item Devise a testing mechanism in which by testing a subset of the nodes, we can rule out edge sets that are not compatible with the result.
    \item Design the test so that, independent of the test result, a constant fraction of the mass is ruled out.
    Here, by a constant, we mean a mass that is independent of all the problem's parameters such $G$ and $\mathcal{D}$. We argue that a most informative test can provide at most one bit of information by ruling out edge sets that constitute half of the probability mass. By removing a constant fraction of the mass, we make sure that our tests are near-optimal.
    \item Updating posterior probabilities of infections for the nodes and edges based on the prior test results. It turns out that this is equivalent to removing edges that are ruled out from the hypergraph, and scaling the remaining edge probabilities uniformly up so they sum up to 1. We perform such posterior updates sequentially after each group test.  If at every step we can do this until we reach a single edge (with probability 1), then we would have done $O(H(X))$ tests in expectation (Theorem~\ref{theorem:edgesize}) which is optimal by the matching lower bound in Theorem~\ref{theorem:lowerbound}. These types of tests are ``informative'', as they shrink the search space significantly.
\end{itemize}
If we can not find such a test, we observe that some nodes have a high probability of infection and other nodes are likely to be negative. (Proposition~\ref{prop:nodeweights}). By testing the ``unlikely'' nodes, we either rule them all out if the test is negative and are left with nodes that have a high probability of infection, or will have an informative test if the test is positive as we can rule out a high mass of the edges.

Suppose a set $S$ with $w(S) = 1/2$ is given and we test $V\setminus S$. If the test is positive, it means that the sampled edge $e^*$ is not in $E(S)$ which contains half the probability mass. If the test is negative, then $e^*$ is in $E(S)$ which rules out $E \setminus E(S)$. So in either case, we gain one bit of information. \textcolor{black}{Figure~\ref{fig:E(S)} illustrates $S$, $E(S)$ and $V\backslash S$.} Since finding a set that has $w(S) = 1/2$ might not be possible,  we relax this condition. We aim to find a set $S$ so that $w(S)$ is between two constants $c = 1/2 - \delta$ and $1 - c = 1/2 + \delta$, let's say between $0.05$ and $0.95$.
Then by testing $V \setminus S$, we can understand if the sampled edge is in $E(S)$ or not. Similar to the case of $c=1/2$, either $E(S)$ or $ W \setminus E(S)$ will be ruled out, which has a mass of at least $0.05$.
There is a nuance in the definition of $E(S)$, which contains all the edges that are \textbf{completely} inside of $S$. If we had included the edges that have at least one node in $S$ as $E(S)$, then we couldn't have ruled out the edges as we did.

We next describe how to find  $S \in V$ with $.05\leq w(S)\leq .95$. We design a greedy-like algorithm that finds $S$ iteratively by removing the nodes from the working set. The algorithm starts with $S = V$. Initially, $w(S) = 1$. Then the algorithm drops the nodes in $S$ one by one
until it finds $0.05 \leq w(S) \leq 0.95$ (at which point the algorithm performs a test on $V \setminus S$ and based on the result updates the posterior distribution).  

What happens if such an $S$ does not exist? In this case weight of every subset of nodes is either more than $0.95$ or less than $0.05$, and removing any node from any subset whose weight exceeds $0.95$ reduces the weight of the subset to below  $0.05$. This can for example happen when each node whose status is yet to be known definitively either is infected with a very high probability (exceeding $0.9$) or with a very low probability (a set of all such nodes will test positive with probability at most $0.05$). What we propose in this stage is the following: in a single test, the algorithm can test all the low-probability nodes, and if it is positive, we rule out $0.95$ of the mass, which is an unexpected gain, and continue with the algorithm. If it is negative, we end up only with nodes that all have high probabilities. Then the algorithm tests nodes individually. 

We refer to the first stage of the algorithm when an informative test can be found based on $S$, $c<w(S)<1-c$, as Stage 1 of the algorithm and the second stage of the algorithm for which no $S$, $c<w(S)<1-c$, can be found as Stage 2 of the algorithm.

\textcolor{black}{In Corollary~\ref{corol:l1}, we show that in Stage 1, where each test removes at least $c$ fraction of the mass, the algorithm performs at most $O(H(X))$ tests in expectation which is near-optimal. In Stage 2, where high probability nodes are tested individually, we show in Observation~\ref{obs:L2} that the number of tests depends on the expected number of infections at that point, $\tilde{\mu}$. This expectation is obtained by the posterior probability after performing all the previous tests. We prove bounds on $\tilde{\mu}$ under assumptions on the size of the edges. Specifically, when the size of the edges is concentrated, we get a bound in the form of $O(\mu + H(X))$, which is stated in Theorem~\ref{theorem:edgesize}. Another way to bound the number of tests in Stage 2 is to first drop the edges that are much larger than $\mu$, which have a low mass, and then run the algorithm. In this case, we can guarantee that $\tilde{\mu}$ is not much larger than $\mu$, and the guarantee for this approach is shown in Theorem~\ref{theorem:final}.}

\subsection{The Proposed Algorithm}\label{sec:algrep}
\begin{algorithm}
\caption{Adaptive Algorithm for General Group Testing}\label{mainalg}
    \textbf{Input:} Graph $G = (V,E)$, distribution $\mathcal{D} = \mathcal{D}$ over $E$, $c < 1/2$.\\
    \textbf{Output:} Target edge $e^*$.\\
    Initialize $W = \phi$ which is the set containing negative nodes.\\
Initialize set $S = V \setminus W$. \label{alg:beginloop}\\
    Compute $w(S)$ by \eqref{eq:w} using the (updated) posterior distribution $\mathcal{D}=\{q_e\}_{e\in E}$. \label{alg:4}\\
    \textbf{If} $c \leq w(S) \leq 1-c$, test $V\setminus S$: \label{alg:linetest}\\
         \hspace{10pt} If the test is positive, remove $E(S)$ from $\mathcal{D}$ and update the posterior distribution following Definition~\ref{def:update}. Set $W = W \bigcup\limits_{v: q_v = 0} v$.\label{alg:update1}\\
         \hspace{10pt} If the test is negative, remove $E \setminus E(S)$ from $\mathcal{D}$ and update the posterior distribution following Definition~\ref{def:update}. Set $W = W \bigcup\limits_{v: q_v = 0} v$.\label{alg:update2}\\
         \hspace{10pt} Return to line~\ref{alg:beginloop}.\\
    \textbf{Else}: \\
    \hspace{10pt} If there is an edge with $q_e = 1$, return $e$. \label{alg:finish1}\\
    \hspace{10pt} If $\exists v: c \leq w(S\setminus v) \leq 1-c$, remove $v$ from $S$ and go to line~\ref{alg:linetest}.  \label{alg:removebig}\\
    \hspace{10pt} If $\exists v: 1-c < w(S\setminus v)$, remove $v$ from $S$ and go to line~\ref{alg:removebig}. \label{alg:removesmall}\\
     \hspace{10pt} Otherwise ($w(S) > 1-c$ and $\forall v: w(S\setminus v) < c$), test $V\setminus S$. \label{alg:setnotfound}\\
     \hspace{20pt} If the test is positive, remove $E(S)$ from $\mathcal{D}$ and update the posterior distribution following Definition~\ref{def:update}. Set $W = W \bigcup\limits_{v: q_v = 0} v$. Go to line~\ref{alg:beginloop}. \label{alg:update3}\\
     \hspace{20pt} If the test is negative, remove $E \setminus E(S)$ from $\mathcal{D}$ and update the posterior distribution following Definition~\ref{def:update}. Test every node $v \in S$ individually with $0 < q_v < 1$, update the posterior after every test and return the nodes with $q_v = 1$. \label{alg:finaltest}\\
\end{algorithm}


The proposed algorithm is presented in Algorithm~\ref{mainalg}. The algorithm undertakes two broad tasks repeatedly:
\begin{enumerate}
\item Determine the set of nodes it tests, denoted as $V\setminus S$, it identifies a set $S$ whose complement it tests;
\item \label{step2} If the set tests positive, only the edges which intersect with the tested set is retained, the rest are removed, if the set tests negative, all edges containing at least $1$ node in the set are removed, the tested set is removed from $V$ and added to set $W$, a set of nodes which are known to be negative; the posterior probability of each surviving edge is updated and $w(S)$ is calculated for each $S$ using  \eqref{eq:w} and the updated posteriors. 
\end{enumerate}

We now describe how these broad tasks are executed.  In Stage 1 of the algorithm which corresponds to lines~\ref{alg:beginloop}-\ref{alg:update3}, the algorithm greedily finds a set $S$ so that $c \leq w(S) \leq 1-c$. This is captured in lines~\ref{alg:removebig} and \ref{alg:removesmall}. When the set $S$ is found, $V \setminus S$ is tested in line~\ref{alg:linetest} and the posterior is updated accordingly. Such a test helps rule out edges with total weight of at least $c$ regardless of the outcome of the test. \textcolor{black}{If the algorithm can not find a set $S \setminus v$ so that $c \leq w(S \setminus v) \leq 1-c$ at this stage, it tests $V \setminus S$ in line~\ref{alg:setnotfound} which has a high probability of being negative. If at line~\ref{alg:setnotfound}, the test result is positive (unexpectedly), the edges that the algorithm rules out constitute of $1-c$ fraction of the mass and the algorithm gets back to line~\ref{alg:beginloop}.} Stage 2 of the algorithm is outlined in line~\ref{alg:finaltest}, where it tests uncertain high probability nodes individually. Note that at this point, all the nodes outside of $S$ are negative.

We now provide details of the sequence followed in the algorithm. At the start of the algorithm, typically the set of nodes that are known to be negative, $W$, is empty. Thus in line~\ref{alg:beginloop}, the set $S$ is initialized as $V \setminus W = V.$ Clearly in line~\ref{alg:4} we have $w(S) = 1$, and $S $ does not satisfy the condition in line~\ref{alg:linetest}. The algorithm would now move to line~\ref{alg:removebig} (typically the condition in line~\ref{alg:finish1} would also fail as the target edge will not be clear at this point). In line~\ref{alg:removebig}, the algorithm identifies individual nodes which are tested in the loop starting from line~\ref{alg:linetest}, and testing these individual nodes rules out edges with total weight at least $c$ regardless of the outcome of the test. After each test, posteriori probabilities are modified and the graph is modified as in Definition~\ref{def:update} (this modification happens in lines~\ref{alg:update1},\ref{alg:update2} and in line~\ref{alg:update3}). The above sequence is repeated until there does not exist individual nodes such that by removing them $S$ satisfies the condition in line~\ref{alg:linetest}. Note that $w(S) = 1$ the first time that the algorithm reaches  line~\ref{alg:removebig}. The tests are all individual while this duration lasts. 

Once the duration described in the previous paragraph ends the algorithm moves to line~\ref{alg:removesmall}. Note that $w(S) = 1$ the first time that the algorithm reaches  line~\ref{alg:removesmall} under the then posterior probabilities. When the algorithm reaches line~\ref{alg:removesmall}, $S$ consists only of nodes that are positive either with very high probability (such that if any one of those is removed, $w(S)$ falls below $c$ from $1$), or with very low probability (such that $w(S)$ exceeds $1-c$ even when any of those is removed from $S$). In  line~\ref{alg:removesmall}, the low probability nodes are identified one at a time, and once such a low probability node is identified, line~\ref{alg:removebig} determines if now there exists a node whose removal ensures that $w(S)$ satisfies the condition in line~\ref{alg:linetest}. If there exists such a node, in line~\ref{alg:linetest} it is tested together with the low probability node identified in line~\ref{alg:removesmall}. If there does not exist such a node, then the algorithm gets back to line~\ref{alg:removesmall} again and removes another low probability node and the same process is repeated. If no node is found in line~\ref{alg:removesmall}, then all low probability nodes are tested together in line~\ref{alg:setnotfound}. Thus, tests are no longer individual once the algorithm reaches line~\ref{alg:removesmall}. When a group of nodes is tested, either in line~\ref{alg:linetest} or line~\ref{alg:setnotfound}, the weights are modified as in Definition~\ref{def:update}. If the test in  line~\ref{alg:setnotfound} is negative, then all tested nodes are removed from $V$, and the remaining nodes are all positive with high probability and are tested individually in line~\ref{alg:finaltest}. If the test is positive, then only edges consisting of at least $1$ node in the set are retained, and the algorithm returns to line~\ref{alg:beginloop} and the sequence is repeated.


\begin{Example}
We now run Algorithm~\ref{mainalg} on the graph in Figure~\ref{fig:hypgraph} with $c = 0.1$. Refer to the values of $ \{q_e\}$ in the caption of Figure~\ref{fig:hypgraph}, $\{p_v\}$ calculated in Section~\ref{sec:newdefs} and that $w(V \setminus \{v\})= 1 - p_{v}$ also from Section~\ref{sec:newdefs}. In line~\ref{alg:beginloop} we set $S = V =  \{v_1, v_2, v_3, v_4, v_5 \}$, where  $W = \phi$. Thus, $w(S) = 1$. Thus, $S$ does not satisfy the condition in line~\ref{alg:linetest}, and the Algorithm moves to line~\ref{alg:finish1} and then line~\ref{alg:removebig}. Here,  $w(V \setminus \{v_1\})= 1 - p_{v_1} = 0.5$. Thus, $c < w(S\setminus v_1) = 0.5 < 1-c$, the algorithm removes $v_1$ from $S$, and tests $v_1$ in line~\ref{alg:linetest}. If it is negative, then $v_1$ is added to $W$,  $e_1$ and $e_2$ are ruled out as target edges as both contain $v_1$, and after scaling the edges in line~\ref{alg:update2} following Definition~\ref{def:update}, posterior  $q_{e_3} = 1$ since $e_3$ is the only remaining edge. And the algorithm returns it as the target edge in line~\ref{alg:finish1}. If $v_1$ is positive,  in line~\ref{alg:update1}, $e_3$ is ruled out as the target edge as it does not contain $v_1.$ Since $q_{e_3} = 0.5$, in line~\ref{alg:update1}, the algorithm  scales $q_{e_1}, q_{e_2}$ by 2 so posterior $q_{e_1} = 0.6$ and $q_{e_2} = 0.4$. Again in line~\ref{alg:beginloop}, $S$ would be initialized to $V.$ Thus $w(S) = 1$ and $S$ does not satisfy the criterion in line~\ref{alg:linetest}. Thus, the algorithm moves to line~\ref{alg:finish1}, finds that there is no $q_e$ such that $q_e = 1.$ It then moves to line \ref{alg:removebig}. Since  $S = V$, $w(S \setminus \{v_2\}) = 1 - p_{v_2} = 0.4$, so the algorithm identifies $v_2$ for individual test. The test result will eliminate $e_1$ if $v_2$ is positive, and $e_2$ otherwise in lines~\ref{alg:update1} and \ref{alg:update2}. Either way, only one edge remains, thus its posterior probability of being the target edge is scaled to $1$ in line~\ref{alg:update1} or \ref{alg:update2} (Definition~\ref{def:update}), and it is identified as the target edge in line~\ref{alg:finish1}. The expected number of tests here is $1.5.$ 
\end{Example}

In the following, we give an important remark and proceed with a more abstract and larger example.

\begin{remark} \label{nofurthertest} 
We remark that any node for which the posterior probability becomes 0 or 1 will no longer be tested. In particular, if an individual node (set) is tested negative, since it is removed from $V$, it is never tested again. If an individual node (set) is tested positive, only edges intersecting that node (set) remain in the graph, thus the weight of the complement of that node (set) becomes $0$. Thus, the complement set will never have weight in the interval $[c, 1-c]$ and is therefore not tested again. In the following, we elaborate in details on why a single node is not tested again once its infection probability reaches zero or one.

Suppose at any stage of Algorithm \ref{mainalg}  all remaining edges contain a particular node $v.$ This may happen for example right after $v$ tests positive in an individual test. Then  following Definition~\ref{def:update}, the posterior probability of the node will be $1$, and removing the node from any set $S$ will result in $S$ not containing any edge, and the weight of the set will become $0$. Thus, such a node can not be included in any test.

In addition, if at any stage of the algorithm no remaining edge contains a node $v$, which can for example happen if any group including $v$ tests negative,  then its posterior probability becomes $0$, and it is moved to $W$ from $V.$  Right after a node is moved to $W$, the set $S$ is initialized as $V \setminus W$ in line~\ref{alg:beginloop}, thus such a node does not belong to $S$ henceforth and is therefore not removed from it in any line and is therefore not tested further.

\end{remark}
\begin{Example}
We now elucidate how Algorithm \ref{mainalg} functions for Example \ref{ex:islands} in Section~\ref{sec:newdefs}. First note that whenever a node $v$  is tested individually, the test reveals its state and also the states of all nodes in its island as all nodes in an island have the same state. Accordingly no node in $v$’s island is tested further as we argue next. If $v$ is negative, all edges that contain it are removed. But any edge containing it also contains all nodes in its island. Thus all edges containing any node in $v$’s island are removed. If $v$ tests positive, only the edges containing it are retained (line~\ref{alg:update2}), and all such edges contain all the nodes in its island since every edge contains all or none of the nodes in each island. Thus, every remaining edge has all nodes in $v$’s island. Thus, by Remark \ref{nofurthertest}, no node in $v$’s island is tested henceforth regardless of the outcome of $v$’s test. We now describe the sequence of testing of nodes. 

The  first decision on which node to test is made when Algorithm \ref{mainalg} reaches line~\ref{alg:removebig} for the first time. Then $S = V$ and $w(S) = 1.$ Recall that we had argued in Section~\ref{sec:newdefs} that
$w(V \setminus \{v\}) =  1- p_v.$ Thus,
$w(S\setminus \{v\})$ is $ \gamma $, $\approx 1/2,$  $1-\epsilon$ if $v$ is respectively in a high, moderate, low island, where $\gamma, \epsilon$ are small positive numbers.

Assuming that $\gamma, \epsilon < c$, it follows that when the algorithm reaches line~\ref{alg:removebig} for the first time, $c < w(S \setminus v) < 1-c$ if and only if $v$ is in a moderate island. Thus,  Algorithm \ref{mainalg} first picks a candidate from one of the moderate islands in line~\ref{alg:removebig} and tests it individually. As discussed above, testing  $v$ reveals the state of all nodes in $v$’s island and subsequently  no other node from the same island is tested. Since nodes in different islands are independent, the knowledge of the state of the tested node does not affect the probabilities that nodes in other islands are positive. Thus following the same arguments as before the algorithm continues to visit line~\ref{alg:removebig}, and in each such visit identifies a node from a new moderate island for individual testing. The outcome of the test reveals the state of all nodes in its island and subsequently  no other node from the same island is tested. This continues until all moderate islands are exhausted. The remaining islands are high or low.

Then, the algorithm reaches  line~\ref{alg:removesmall} and identifies a low probability node there, that is, a node from a low island. Assuming that the $\epsilon < c/k_3$ where $k_3$ is the number of low islands, removing an additional node in line~\ref{alg:removebig}, either reduces the weight of $S$ to below $c$ or maintains it at greater than $1-c$, thus no further node is identified for testing in line~\ref{alg:removebig}.

Repeated visits to line~\ref{alg:removesmall}, gathers all the nodes from all low islands, and tests them all together in line~\ref{alg:setnotfound}. With high probability, the test is negative, and if so, all nodes in the low islands are moved to $W$, all edges containing low islands are removed, and only the high islands remain. Then the algorithm tests nodes in high islands individually in line~\ref{alg:finaltest}. Each individual test reveals the state of all nodes in the tested node’s island and no other node in the tested node’s island need to be tested further as argued in the first paragraph. The algorithm outputs the edge containing all the positive nodes as the target edge. 

If however the group test of all the nodes in the low islands turn out to be positive, which happens with low probability, then these nodes would subsequently be either tested individually, identified one at a time in line~\ref{alg:removebig}, or tested individually in line~\ref{alg:finaltest}, or tested in smaller subgroups which are identified in line~\ref{alg:removesmall}  or in both line~\ref{alg:removesmall} and line~\ref{alg:removebig}. \footnote{Here, we describe in details how the algorithm performs in the unlikely event that low islands test positive. If the group test of all the nodes in the low islands turns out to be positive, the edges consisting only of the high islands will be removed. This would substantially increase the posterior probabilities of edges which include low islands which would in turn substantially increase the posterior probabilities of the nodes in low islands. The posterior probabilities of the nodes in high islands will not change because of independence of states of nodes across islands. The algorithm will move to line~\ref{alg:beginloop} and subsequently to line~\ref{alg:removebig}. Since the posterior probabilities of nodes in high islands remain same as before, they will not qualify for individual tests as before.

But, since the posterior probabilities of nodes in low islands have increased, they may qualify for individual tests in line~\ref{alg:removebig}. If so, as for moderate islands, one node in each low island would be tested and the outcome would be used to determine the states of each node in the same island and those nodes will not be tested further. At the end of the process, only the high islands will remain. Since they are all high probability nodes, no node would be  identified for individual testing in  line~\ref{alg:removesmall}. But these nodes would be individually tested in line~\ref{alg:finaltest} and their states determined.

If however the posterior probabilities of nodes in low islands increased significantly such that the removal of any of them reduces the weight of $S$ to below $c$, then they would not qualify for individual tests in line~\ref{alg:removebig}. Then all remaining nodes including those of low islands are high probability nodes. Thus they would not be removed in line~\ref{alg:removesmall} either, and all remaining nodes would be individually tested in line~\ref{alg:finaltest}. The last possibility is that the posterior probabilities of nodes in low islands increased, but still removing any one of them retains the weight of $S$ to above $1-c.$ Then, again they would not qualify for individual tests in line \ref{alg:removebig}. But they can be removed as low probability nodes in line~\ref{alg:removesmall}. But note that all nodes of low islands can not be removed together as low probability nodes in line~\ref{alg:removesmall}, because only edges that contain at least one of them have been retained, thus together all remaining edges contain all of them, and hence removing all of them removes all the edges too and reduces the weight of $S$ to $0 < c.$ Then subsets of nodes in low islands would be identified in lines~\ref{alg:removebig} and \ref{alg:removesmall} and tested in line~\ref{alg:linetest}, and the sequence is repeated.}

Summarily, the nodes in high and moderate islands are individually tested, but those in low islands are tested all together at least once, and  in the unlikely scenario that this group test is positive they are tested in various subgroups to determine their states. Only one node in each high and moderate island is tested. 
\end{Example}
\begin{remark}
In line~\ref{alg:removebig}, the algorithm identifies the first node that satisfies the condition given in this line, and removes it from the set $S$. However, the algorithm could instead consider all nodes individually or multiple nodes simultaneously. Doing so would ensure not only that the condition in this line is satisfied but also that the weight of the set $S$ after removal is as close to $1/2$ as possible. A set weight closer to $1/2$ results in a more informative test in practice. However, such a selection strategy, while potentially improving practical performance, would be computationally more expensive and would not affect the theoretical guarantees or our analysis.
\end{remark}

\section{Analysis}\label{sec:analysis}


In this section, we bound the expected number of tests performed by Algorithm~\ref{mainalg}. The main result is built on the following key lemma. Here,  $H(X) = \sum_e p_e \log 1/p_e$ is the entropy of the nodes and $\tilde{\mu}$ is the expected number of infections in line~\ref{alg:finaltest} (computed with the posterior probabilities).

\begin{lemma}\label{lemma:main}
    Algorithm~\ref{mainalg} finds $e^*$ with 
    $\mathbb{E}[L_1 + L_2] \leq \frac{1}{\log \frac{1}{(1-c)}}H(X) + 1 + \frac{\mathbb{E}[{\tilde{\mu} }]}{1-2c}$ tests in expectation. 
\end{lemma}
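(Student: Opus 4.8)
The plan is to split the total count $L = L_1 + L_2$ into the tests performed in Stage 1 (lines~\ref{alg:beginloop}--\ref{alg:update3}, counted as $L_1$) and the individual tests of high-probability nodes in Stage 2 (line~\ref{alg:finaltest}, counted as $L_2$), and to bound $\mathbb{E}[L_1]$ and $\mathbb{E}[L_2]$ separately. For $\mathbb{E}[L_1]$ the key observation, established in the overview and to be proved via Lemma~\ref{lemma: update}/Lemma~\ref{lemma:poster}, is that \emph{every} Stage-1 test — whether it is the test of $V\setminus S$ in line~\ref{alg:linetest}, which removes either $E(S)$ or $E\setminus E(S)$ (mass $\geq c$ on the losing side since $c \le w(S) \le 1-c$), or the ``unlikely'' test in line~\ref{alg:setnotfound}, which removes mass $\geq 1-c \geq c$ if positive and otherwise terminates Stage 1 — rules out a set of edges carrying at least a $c$ fraction of the \emph{current} posterior mass. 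Equivalently, in the language of Definition~\ref{def:update}, after such a test the renormalization constant satisfies $c_{\text{scale}} \ge \frac{1}{1-c}$, so the posterior probability of the (surviving) target edge $e^*$ is multiplied by a factor of at least $\frac{1}{1-c}$. Since the posterior of $e^*$ starts at $p(e^*)$ and can never exceed $1$, the number of Stage-1 tests along the realization in which $e^*$ is the target is at most $\log_{1/(1-c)} \frac{1}{p(e^*)} = \frac{\log (1/p(e^*))}{\log (1/(1-c))}$. Taking expectation over $e^* \sim \mathcal{D}$ gives $\mathbb{E}[L_1] \le \frac{1}{\log\frac{1}{1-c}}\sum_{e} p(e)\log\frac{1}{p(e)} = \frac{H(X)}{\log\frac{1}{1-c}}$; this is where the result one should cite is Corollary~\ref{corol:l1}/Theorem~\ref{theorem:edgesize} of the earlier development, and it matches the first term of the claimed bound.

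For $\mathbb{E}[L_2]$, note that Stage 2 is reached at most once (after the single negative outcome of line~\ref{alg:setnotfound} that terminates Stage 1), and there the algorithm tests individually exactly the nodes $v\in S$ with $0<q_v<1$ under the then-current posterior. Thus $L_2 \le |\{v\in S: 0<q_v<1\}|$. To convert this cardinality bound into the $\frac{\mathbb{E}[\tilde\mu]}{1-2c}$ form, I would use the Stage-2 structural fact (Proposition~\ref{prop:nodeweights}, referenced in the overview) that when line~\ref{alg:finaltest} is entered every remaining uncertain node has $q_v > 1-2c$: indeed at that point $w(S)>1-c$ but removing any single node drops $w(S)$ below $c$, and since removing $v$ decreases $w(S)$ by the mass of edges in $E(S)$ through $v$ — which is at most $q_v$ — we get $q_v \ge w(S) - c > 1-2c$. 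Hence the number of such nodes is at most $\tilde\mu / (1-2c)$ because $\tilde\mu = \sum_v q_v \ge \sum_{v: 0<q_v<1} q_v > (1-2c)\,|\{v: 0<q_v<1\}|$. This yields $L_2 \le \tilde\mu/(1-2c)$ deterministically on the event that Stage 2 is reached, so $\mathbb{E}[L_2] \le \mathbb{E}[\tilde\mu]/(1-2c)$; the extra additive $+1$ absorbs the single group test in line~\ref{alg:setnotfound} that bridges the two stages (and, if one is careful, the at-most-one-off slack in the Stage-1 logarithmic bound). Combining, $\mathbb{E}[L_1+L_2] \le \frac{H(X)}{\log\frac{1}{1-c}} + 1 + \frac{\mathbb{E}[\tilde\mu]}{1-2c}$.

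The main obstacle I anticipate is making the Stage-1 potential argument fully rigorous across \emph{all} the branch points of Algorithm~\ref{mainalg}: the tests in lines~\ref{alg:removebig} and~\ref{alg:removesmall} are node-removal bookkeeping steps that are not themselves tests, so one must confirm that between any two consecutive actual tests the posterior of $e^*$ genuinely grows by the claimed multiplicative factor, that the line~\ref{alg:setnotfound} positive branch returning to line~\ref{alg:beginloop} does not create an infinite loop (it cannot, since each such test strictly removes mass $\ge 1-c$), and that the "reaching Stage 2 at most once" claim is correct given that a positive outcome in line~\ref{alg:setnotfound} sends control back to Stage 1 rather than to line~\ref{alg:finaltest}. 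A secondary technical point is justifying $q_v \ge w(S)-c$ for the node just before it would be removed: this requires that in Stage 2 the set $S$ contains no node whose removal keeps $w(S)\in[c,1-c]$ (else the algorithm would not have exited the Stage-1 loop), which is exactly the precondition of line~\ref{alg:finaltest}. Once these case checks are in place, the arithmetic combining the two bounds is routine.
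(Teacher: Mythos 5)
Your proposal is correct and follows essentially the same route as the paper: the Stage-1 bound via the observation that each informative test multiplies the surviving posterior of $e^*$ by at least $\tfrac{1}{1-c}$ (the paper packages this as Lemma~\ref{lemma:cfrac} together with Observation~\ref{observation:fractionremoved}, with the single non-informative negative test in line~\ref{alg:setnotfound} absorbed into the additive $+1$ exactly as you do), and the Stage-2 bound via $q_v \geq w(S)-w(S\setminus v) > 1-2c$ for all $v\in S$ so that $|S| \leq \tilde{\mu}/(1-2c)$, which is Proposition~\ref{prop:nodeweights} and Lemma~\ref{obs:L2}. The technical caveats you flag (Stage 2 reached at most once, no infinite loop on the positive branch of line~\ref{alg:setnotfound}, and the precondition $w(S\setminus v)<c$ for every $v$) are all handled the same way in the paper's argument.
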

\textcolor{black}{The roadmap of our proof for Lemma \ref{lemma:main} is as follows. We first prove that removing $E(S)$ (resp. $E\backslash E(S)$) from $\mathcal{D}$ (see Definition \ref{def:update}) upon obtaining a positive (resp. negative) test result for $V\backslash S$ provides the true posterior distribution. This is established in  Lemma \ref{lemma:poster}. 
We next bound the number of tests $L_1$ done in Stage 1 of the algorithm by $\frac{1}{\log \frac{1}{(1-c)}}H(X)$, in expectation, and then show that the number of individual tests $L_2$ done in Stage 2 is bounded by $\frac{{\tilde{\mu} }}{1-2c}$. }

\paragraph{Posterior Update.} We first need the following lemma to justify the correct operation of Algorithm~\ref{mainalg} in removing edges after each test. In particular, the target edge $e^*$ is never removed and each test indicates whether the target edge is in $E(S)$ or $E \setminus E(S)$. 
\begin{lemma}\label{lemma: update}
The result of a test $V \setminus S$ is positive iff $e^* \in {E \setminus E(S)}$.
    
\end{lemma}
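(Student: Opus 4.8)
The plan is to unwind the definitions of a (noiseless) group test and of the edge set $E(S)$; no probabilistic argument is needed, since the statement concerns a single fixed realization of the target edge $e^*$. Recall that by the definition of noiseless testing, a test on a node subset $T\subseteq V$ returns a positive result if and only if $T$ contains at least one infected node, i.e.\ at least one node of the sampled edge $e^*$. Applying this with $T = V\setminus S$, the test $V\setminus S$ is positive iff $(V\setminus S)\cap e^* \neq \emptyset$.

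Next I would translate the emptiness condition into a containment condition: $(V\setminus S)\cap e^* = \emptyset$ holds exactly when $e^*$ has no node outside $S$, i.e.\ when $e^*\subseteq S$. By the definition $E(S) = \{e\in E : \forall v\in e,\ v\in S\}$ of the edge set of $S$, and since $e^*\in E$ (the sampled edge is by construction one of the hyperedges, so $E\setminus E(S)$ is meaningful), the condition $e^*\subseteq S$ is equivalent to $e^*\in E(S)$. Chaining these equivalences: the test $V\setminus S$ is negative iff $e^*\in E(S)$, hence it is positive iff $e^*\in E\setminus E(S)$, which is exactly the claim.

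I would then close by noting the corollary that motivates the lemma for Algorithm~\ref{mainalg}: on a positive result the algorithm removes $E(S)$ from $\mathcal{D}$, and since then $e^*\notin E(S)$ the target edge is retained; on a negative result it removes $E\setminus E(S)$, and since then $e^*\in E(S)$ the target edge is again retained. Thus the posterior updates in lines~\ref{alg:update1}--\ref{alg:update2}, \ref{alg:update3}, and \ref{alg:finaltest} never discard $e^*$, and each test strictly partitions the surviving candidate edges into $E(S)$ and $E\setminus E(S)$ according to which side $e^*$ lies on.

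The only subtlety is bookkeeping: one must be careful that the test semantics invoked here are the noiseless ones (in the noisy setting of Section~\ref{sec:noisy} the statement becomes a high-probability claim after repeating each test), and that $E(S)$ is defined via full containment $e\subseteq S$ rather than mere intersection $e\cap S\neq\emptyset$ — it is precisely this choice that makes the ``negative'' case remove exactly $E\setminus E(S)$ and the ``positive'' case remove exactly $E(S)$. Beyond this there is no real obstacle; the proof is a short definitional chain.
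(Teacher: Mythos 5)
Your proof is correct and follows essentially the same definitional chain as the paper's: the test on $V\setminus S$ is positive iff $e^*$ has a node in $V\setminus S$, which by the full-containment definition of $E(S)$ is equivalent to $e^*\in E\setminus E(S)$. The extra remarks about the posterior updates retaining $e^*$ are fine but belong to Lemma~\ref{lemma:poster} rather than this statement.
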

\begin{proof}
    A test $V\setminus S$ is positive if at least one of the nodes in $V \setminus S$ is positive, which means $e^*$ contains a node in $V \setminus S$. But by definition, $E(S)$ is the set of edges with all nodes in $S$, hence $e^* \notin E(S)$ and $e^* \in E \setminus E(S)$. 
    Similarly, if $e^* \in E \setminus E(S)$, it contains at least one node in $V \setminus S$ and hence the result is positive.
\end{proof}
\noindent Building on Lemma \ref{lemma: update}, we further prove that the update rule utilized in Algorithm \ref{mainalg} computes the posterior probabilities given all the previous tests.



\begin{lemma}\label{lemma:poster}
    Consider a distribution $\mathcal{D}$ over the edge set $E$ and a test $T=V \setminus S$ with result $r$. If $r=1$, the posterior probability is obtained by removing $E(S)$ according to Definition~\ref{def:update}. If $r=0$, the posterior probability is obtained by removing $E \setminus E(S)$ according to Definition~\ref{def:update}.
\end{lemma}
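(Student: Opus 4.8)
The plan is to apply Bayes' rule directly and show that it reduces to the renormalization operation in Definition~\ref{def:update}. The target edge $e^*$ is drawn from $\mathcal{D}$, and by Lemma~\ref{lemma: update} the event $\{r=1\}$ (test $V\setminus S$ positive) coincides exactly with the event $\{e^*\in E\setminus E(S)\}$, while $\{r=0\}$ coincides with $\{e^*\in E(S)\}$. So the posterior probability of any edge $e$ given the test result is a conditional probability of the form $\Pr[e^*=e \mid e^*\in A]$ for $A=E\setminus E(S)$ or $A=E(S)$ respectively. By definition of conditional probability this equals $\mathcal{D}(e)/\mathcal{D}(A)$ if $e\in A$ and $0$ if $e\notin A$, where $\mathcal{D}(A)=\sum_{e'\in A}\mathcal{D}(e')$.

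First I would set up the two cases explicitly. For $r=1$: the edges in $E(S)$ get posterior $0$ (they are ruled out, since $e^*\in E\setminus E(S)$), and every $e\in E\setminus E(S)$ gets posterior $\mathcal{D}(e)/\bigl(1-w(S)\bigr)$, using that $w(S)=\sum_{e\in E(S)}\mathcal{D}(e)=\mathcal{D}(E(S))$ by \eqref{eq:w} and hence $\mathcal{D}(E\setminus E(S))=1-w(S)$. This is precisely "set $\mathcal{D}(e)=0$ for $e\in E(S)$ and rescale the rest by $c=1/(1-w(S))>1$" — matching Definition~\ref{def:update}, provided $w(S)<1$ so that the normalizing constant is well-defined and genuinely exceeds $1$. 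For $r=0$: symmetrically, edges in $E\setminus E(S)$ get posterior $0$ and each $e\in E(S)$ gets $\mathcal{D}(e)/w(S)$, which is the renormalization after removing $E\setminus E(S)$, well-defined provided $w(S)>0$.

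The only genuinely careful point — and the one I would flag as the main obstacle, though it is minor — is checking that the normalization constants are strictly positive (and, for the "$c>1$" claim in Definition~\ref{def:update}, strictly less than $1$) whenever the algorithm actually performs the update. In Stage~1 the test $V\setminus S$ is performed only when $c\le w(S)\le 1-c$ with $0<c<1/2$, so $0<w(S)<1$ and both $w(S)$ and $1-w(S)$ lie strictly between $0$ and $1$; in the line~\ref{alg:setnotfound} test we have $w(S)>1-c>0$ and (since $S$ is a strict subset that still omits the possibility of every edge being ruled out automatically) $w(S)<1$, so again both constants are valid. I would also note that the two update rules can be composed: applying Bayes' rule once per test and multiplying the conditionings is the same as conditioning on the intersection of all the test-result events, which is why the sequential updates $\mathcal{D}_{i-1}\to\mathcal{D}_i$ correctly yield $q_{e\mid (T_1,r_1),\ldots,(T_i,r_i)}$. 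I would close by remarking that this immediately implies $e^*$ is never removed (its posterior is always the one conditional probability that is nonzero under the true result), which is the statement needed for the correctness of Algorithm~\ref{mainalg}.
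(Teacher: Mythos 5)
Your proof is correct and follows essentially the same route as the paper's: both invoke Lemma~\ref{lemma: update} to identify the test-result event with $\{e^*\in E\setminus E(S)\}$ (or $\{e^*\in E(S)\}$), apply the definition of conditional probability to obtain the zero-or-rescale formula, and note that the same argument composes across sequential tests using the updated conditioning. Your extra check that the normalizing constants are strictly positive and exceed $1$ is a reasonable refinement the paper leaves implicit, but it does not change the argument.
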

\begin{proof}
For simplicity, consider the update after the first test and $r = 1$. Then the posterior is
\begin{align}
q_{e_1 \mid (T_1 = S \setminus V, r_1 = 1)} &= Pr(e_1 = e^* |T_1 = S \setminus V, r_1 = 1)  \label{eq:1update}\\
&= Pr(e_1 = e^* | e^* \in E \setminus E(S)) \text{ by Lemma~\ref{lemma: update}}\\
&= Pr(e_1 = e^*, e^* \in E \setminus E(S))/ P(e^* \in E \setminus E(S)) \\
&= \begin{cases}0, & \text { if } e_1 \notin E\setminus E(S) \\ Pr(e_1 = e^*) / (\sum_{e \in E \setminus E(S)} p(e)) & \text {otherwise} \end{cases}
\label{eq: lupdate}
\end{align}

\noindent The case for $r_1 = 0$, a negative test, can be argued similarly. The main difference in this case is that we will prove the reverse of lemma~\ref{lemma: update}, ie,  $r_1 = 0$  iff $e^* \in E(S)$. Now, $r_1 = 0$ needs to be substituted with $r_1 = 1$, and $E(S)$ with  $E \setminus E(S)$ in the above equations, and otherwise the same result can be obtained.  

For tests $i$, $i > 1$, Lemma~\ref{lemma: update} holds for $r_i = 1$, and considering $E$ updated after $i-1$ tests. Equations \eqref{eq:1update} to \eqref{eq: lupdate} follow with the conditioning updated to include results of all tests and using posterior probabilities $q_e$ instead of $p(e)$. The same equations can be derived for $r_i = 0$ as in the previous paragraph. 
\end{proof}

\paragraph{Informative Tests: } Formally, a test is informative if the edge updates right after it remove at least $c$ fraction of the mass. The following lemma upper bounds the expected number of informative tests for any algorithm that returns an edge with weight $1$ as the target edge. 

\begin{lemma}\label{lemma:cfrac}
    The expected number of informative tests is at most  $ \frac{1}{\log\ \frac {1} {(1-c)}}H(X)$ for any algorithm that decides that an edge $e$ is the target edge whenever $q_e=1.$ 
\end{lemma}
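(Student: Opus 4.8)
The plan is to argue via a potential/entropy argument applied to the martingale of posterior distributions. The key observation is that after an informative test, the posterior distribution $\mathcal{D}_i$ is obtained from $\mathcal{D}_{i-1}$ by removing a set of edges carrying at least $c$ fraction of the current mass and rescaling the rest by a factor $c' = 1/(1 - \text{removed mass}) \geq 1/(1-c)$. I would track the quantity $\Phi_i = -\log q_{e^*\mid (T_1,r_1),\ldots,(T_i,r_i)}$, i.e. the surprisal of the (random) target edge under the $i$-th posterior. Initially $\mathbb{E}[\Phi_0] = H(X)$. Since the target edge is never removed (Lemma~\ref{lemma: update}), after an informative test its posterior probability is multiplied by $c' \geq 1/(1-c)$, so $\Phi_i \leq \Phi_{i-1} - \log\frac{1}{1-c}$ whenever test $i$ is informative, and $\Phi_i \leq \Phi_{i-1}$ otherwise (rescaling can only increase the probability). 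Also $\Phi_i \geq 0$ always since $q_{e^*} \leq 1$, and the algorithm halts (declaring $e^*$) precisely when $q_{e^*}$ reaches $1$, i.e. when $\Phi = 0$.

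Concretely, I would let $L_1$ be the number of informative tests and write, for a fixed realization, $0 \le \Phi_{\text{final}} \le \Phi_0 - L_1 \log\frac{1}{1-c}$, hence $L_1 \le \frac{\Phi_0}{\log\frac{1}{1-c}} = \frac{-\log q_{e^*}^{(0)}}{\log\frac{1}{1-c}}$ where $q_{e^*}^{(0)} = p(e^*)$ is the prior probability of the realized target edge. Taking expectations over the random target edge $e^* \sim \mathcal{D}$,
\[
\mathbb{E}[L_1] \le \frac{1}{\log\frac{1}{1-c}}\,\mathbb{E}_{e^*\sim\mathcal{D}}\!\left[-\log p(e^*)\right] = \frac{H(X)}{\log\frac{1}{1-c}},
\]
which is exactly the claimed bound. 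A subtlety is that the randomness also includes the algorithm's internal choices and (potentially) noisy outcomes, but the per-test inequality $\Phi_i \le \Phi_{i-1} - \log\frac{1}{1-c}$ holds deterministically on every informative step regardless of those choices, so the telescoping bound holds pathwise and the expectation is taken only over which edge is the target.

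The main obstacle, and the point I would be most careful about, is justifying that $\Phi_i \le \Phi_{i-1} - \log\frac{1}{1-c}$ on every informative test — this requires knowing (i) that $e^*$ is never among the removed edges, which is Lemma~\ref{lemma: update}/Lemma~\ref{lemma:poster}, and (ii) that "at least $c$ fraction of the mass is removed" really does translate into a rescaling factor of at least $1/(1-c)$ applied to the surviving edges, which is immediate from Definition~\ref{def:update} since the surviving mass is at most $1-c$. I would also need to double-check the degenerate case where $p(e^*) = 0$ (then $e^*$ is never sampled, contributing nothing to the entropy sum, so it may be excluded) and confirm that the stopping condition $q_{e^*}=1$ is actually reached in finitely many steps, but for the purposes of bounding $\mathbb{E}[L_1]$ only the one-sided inequality above is needed and it holds for every prefix of informative tests whether or not the algorithm has terminated.
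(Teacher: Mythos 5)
Your proposal is correct and takes essentially the same route as the paper's proof: the paper likewise observes that each informative test rescales the surviving posterior mass by at least $1/(1-c)$, so the target edge $e^*$ reaches posterior $1$ within $\log_{1/(1-c)}(1/p(e^*))$ informative tests, and then averages this bound over $e^*\sim\mathcal{D}$ to obtain $H(X)/\log\frac{1}{1-c}$. Your surprisal potential $\Phi_i$ is just the logarithmic restatement of that multiplicative argument, with somewhat more explicit attention to the pathwise validity of the per-step inequality.
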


\begin{proof}
We know that every time the posterior is updated after every informative test, at least a mass $c$ is removed and the remaining edge probabilities are divided  by at most a factor of $1-c$.  It therefore takes at most $\log_{\frac{1}{1-c}}(1/p(e))= \frac{\log 1/p(e)}{\log(\frac{1}{1-c})} $ informative tests for the posterior on a given edge $e$ to become 1 (if it survives those tests which it would if it is the target edge). This leads to the following upper bound on the expected number of informative tests:
    \begin{equation*}
        \sum_{e \in E} p(e) \log_{\frac{1}{1-c}}(p(e)) \\
        = \sum_{e \in E} p(e) \frac{\log 1/ p(e)}{\log \frac{1}{1-c}}\\
        = \frac{1}{\log\frac{1}{1-c}} \sum_{e \in E} p(e) \log 1/ p(e)\\
        = \frac{1}{\log\ \frac {1} {(1-c)}}H(X).
    \end{equation*}
   
\end{proof}



The following observation identifies the informative tests in the algorithm. 

\begin{observation}\label{observation:fractionremoved}
    For Algorithm~\ref{mainalg}, tests in line~\ref{alg:linetest} and positive tests in line \ref{alg:setnotfound} are informative, provided $0 < c < 1/2.$
\end{observation}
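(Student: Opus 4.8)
The plan is to unwind the definition of an informative test — that the posterior update immediately following it deletes edges carrying at least a $c$ fraction of the current probability mass — and check it against the two test lines named in the statement. The only structural fact needed is that the running posterior $\{q_e\}_{e\in E}$ is a probability distribution, so $\sum_{e\in E} q_e = 1$; combined with $w(S) = \sum_{e\in E(S)} q_e$ (from \eqref{eq:w}) this gives $\sum_{e\in E\setminus E(S)} q_e = 1 - w(S)$. I would also invoke Lemma~\ref{lemma:poster} (equivalently, the explicit update steps of Algorithm~\ref{mainalg}) to identify exactly which set of edges is removed on each branch.

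First, consider a test performed in line~\ref{alg:linetest}. By the guard of that line, $c \le w(S) \le 1-c$. A positive result triggers removal of $E(S)$, whose total posterior mass is exactly $w(S) \ge c$; a negative result triggers removal of $E\setminus E(S)$, whose total mass is $1 - w(S) \ge 1 - (1-c) = c$. Hence, regardless of the outcome, at least a $c$ fraction of the mass is deleted, so the test is informative. Next, consider a positive test performed in line~\ref{alg:setnotfound}: the algorithm reaches this line only when $w(S) > 1-c$, and a positive result removes $E(S)$, which carries mass $w(S) > 1-c$. Since $c < 1/2$ we have $1 - c > c$, so strictly more than a $c$ fraction of the mass is removed, and the test is informative. (A negative result in line~\ref{alg:setnotfound} would remove only $E\setminus E(S)$, of mass $1 - w(S) < c$, which is precisely why the statement restricts to positive tests there.)

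There is essentially no obstacle here; the argument is a direct case check. The only points requiring care are that $w(S)$ is measured under the \emph{current} posterior rather than the prior $p(\cdot)$, that the set removed on each branch is read off correctly from Lemma~\ref{lemma:poster}, and that the hypothesis $0<c<1/2$ is used solely to guarantee $1-c>c$ in the line~\ref{alg:setnotfound} case so that a removal of mass $w(S)>1-c$ still qualifies as removing at least a $c$ fraction.
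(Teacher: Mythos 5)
Your proof is correct and follows essentially the same route as the paper's: a direct case check showing that the line~\ref{alg:linetest} guard $c \le w(S) \le 1-c$ guarantees the removed set (either $E(S)$ or $E\setminus E(S)$, of mass $w(S)$ or $1-w(S)$ respectively) carries at least $c$ mass, and that a positive test at line~\ref{alg:setnotfound} removes mass $w(S) > 1-c > c$ since $c < 1/2$. Your version is marginally more explicit in writing out the mass of $E\setminus E(S)$ as $1-w(S)$, but the substance is identical.
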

\begin{proof}
It would be sufficient to prove that edge updates in lines~\ref{alg:update1}, \ref{alg:update2},\ref{alg:update3} remove at least $c$ fraction of the mass. 
If the algorithm scale $\mathcal{D}$ at lines~\ref{alg:update1} or $\ref{alg:update2}$, by lemma~\ref{lemma: update}, we either remove $E(S)$ or $E \setminus E(S)$. But $c \leq w(S) \leq 1-c$, so in either case we remove  $c$ fraction of the mass.

    If the algorithm scales $\mathcal{D}$ at line~\ref{alg:update3}, it means $w(S) > 1-c$ and $V \setminus S$ has tested positive. The latter implies that $E(S)$ is removed from $E.$ Weight of $E(S)$ is $ w(S).$ Thus, at least a mass of $1-c$ is removed. Now $1-c > c$ because $1-c > 1/2 > c$. Thus  at least a mass of  $c$  have been removed. 
\end{proof}

Note that there can be at most one negative test in line \ref{alg:setnotfound}, since right after such a negative test, the algorithm proceeds to line~\ref{alg:finaltest}, and never returns to lines $1$ to \ref{alg:update3}. Thus, all but $1$ tests performed before the algorithm proceeds to line~\ref{alg:finaltest}, ie, in Stage $1$ of the algorithm, are informative. Using Lemma~\ref{lemma:cfrac}, now we can upper bound the expected number of tests performed in the first stage of the algorithm.

\begin{corollary}\label{corol:l1}
    Before line~\ref{alg:finaltest}, the number of tests $L_1$ that Algorithm~\ref{mainalg} performs is bounded in expectation as follows: 
    $$\mathbb{E}[L_1] \leq \frac{1}{\log\ \frac {1} {(1-c)}}H(X)+1.$$  
\end{corollary}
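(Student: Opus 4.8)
The plan is to combine the two ingredients that have just been assembled. By Observation~\ref{observation:fractionremoved}, every test executed before the algorithm reaches line~\ref{alg:finaltest} is informative \emph{except possibly} the single negative test that might occur in line~\ref{alg:setnotfound}; and as noted immediately after that observation, once such a negative test occurs the algorithm jumps straight to line~\ref{alg:finaltest} and never revisits lines~\ref{alg:beginloop}--\ref{alg:update3}, so at most one non-informative test is ever performed in Stage~1. Hence $L_1 \le (\text{number of informative tests}) + 1$ deterministically, and therefore $\mathbb{E}[L_1] \le \mathbb{E}[\#\text{informative tests}] + 1$.

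Next I would invoke Lemma~\ref{lemma:cfrac} to bound $\mathbb{E}[\#\text{informative tests}]$. To do so I must check that Algorithm~\ref{mainalg} satisfies the hypothesis of that lemma, namely that it declares an edge $e$ to be the target exactly when $q_e = 1$: this is precisely what line~\ref{alg:finish1} does (and, implicitly, line~\ref{alg:finaltest} when only one edge survives). I should also note the standing assumption $0 < c < 1/2$ from the algorithm's input, which is exactly the range in which Observation~\ref{observation:fractionremoved} and Lemma~\ref{lemma:cfrac} apply. With these checks in place, Lemma~\ref{lemma:cfrac} gives $\mathbb{E}[\#\text{informative tests}] \le \frac{1}{\log \frac{1}{1-c}} H(X)$, and combining with the previous paragraph yields
\[
\mathbb{E}[L_1] \;\le\; \frac{1}{\log \frac{1}{(1-c)}} H(X) + 1,
\]
as claimed.

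The only subtlety — and the step I expect to need the most care — is the bookkeeping that separates "informative" tests from the lone possible exception, i.e.\ making rigorous the claim that \emph{at most one} test in Stage~1 is non-informative. This rests on a structural invariant of the control flow: the only line that can produce a non-informative test is line~\ref{alg:setnotfound} (all tests in line~\ref{alg:linetest} and positive tests in line~\ref{alg:setnotfound} are informative by Observation~\ref{observation:fractionremoved}), and a \emph{negative} outcome there routes execution to line~\ref{alg:finaltest}, which is Stage~2 and has no edge back into Stage~1. So I would state this as a short lemma-style remark about the algorithm's loop structure, then conclude. Everything else is just assembling already-proved facts, so no heavy computation is required.
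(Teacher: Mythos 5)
Your proposal matches the paper's argument exactly: it uses Observation~\ref{observation:fractionremoved} to classify all Stage~1 tests except a possible lone negative test in line~\ref{alg:setnotfound} as informative, notes that such a negative test sends control irrevocably to line~\ref{alg:finaltest} so there is at most one of them, and then applies Lemma~\ref{lemma:cfrac} to bound the expected number of informative tests by $\frac{1}{\log\frac{1}{1-c}}H(X)$. Your additional checks of the hypotheses of Lemma~\ref{lemma:cfrac} (the $q_e=1$ stopping rule in line~\ref{alg:finish1} and the range $0<c<1/2$) are correct and harmless refinements of the same proof.
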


Line \ref{alg:finaltest} constitutes Stage $2$ of the algorithm. 

\paragraph{Individual Testing in Stage 2.}  Stage $1$ of Algorithm \ref{mainalg} concludes when the greedy process of finding $S$, $c\leq w(S)
\leq 1-c$, fails in line~\ref{alg:removesmall}, and the nodes outside $S$ test negative in line~\ref{alg:setnotfound}. At this point, in Stage 2, we resort to individual testing in line \ref{alg:finaltest}. We will upper bound the expected number of such individual tests using the following Proposition. 

\begin{proposition}\label{prop:nodeweights}
    When Algorithm~\ref{mainalg} reaches~\ref{alg:setnotfound}, we have (i) $w(S) > 1-c$, (ii) $Pr(\exists v \in V \setminus S : v \in e^*) < c$ where $e^*$ is the target edge, (iii) $\sum_{e \in E(S), v \in e} p_e > 1-2c$, and (iv) $\forall v \in S: q_v > 1-2c$. 
\end{proposition}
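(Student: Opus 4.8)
The plan is to unpack what it means for the greedy search of Stage~1 to reach line~\ref{alg:setnotfound}, and then translate each of the four claims into a statement about weights and posterior probabilities. When the algorithm arrives at line~\ref{alg:setnotfound}, by the logic of the branch structure the following must hold: $w(S) > 1-c$ (otherwise line~\ref{alg:linetest} would have fired, since we never leave line~\ref{alg:removesmall}/\ref{alg:removebig} with $w(S)\le 1-c$ without testing), and for every $v \in S$ we have $w(S\setminus v) < c$ (otherwise line~\ref{alg:removebig} or line~\ref{alg:removesmall} would have removed another node). This immediately gives (i). For (ii), recall from Lemma~\ref{lemma: update} that the test $V\setminus S$ is positive iff $e^* \in E\setminus E(S)$, so $Pr(\exists v\in V\setminus S: v\in e^*) = Pr(e^*\notin E(S)) = 1 - w(S) < c$ by (i); this also explains the remark in the text that the test at line~\ref{alg:setnotfound} is negative with high probability.

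For (iii) and (iv), the key identity is the one established in Section~\ref{sec:newdefs}: removing a node $v$ from a set decreases its weight by exactly the total posterior mass of edges in the set that contain $v$. Applied to $S$, this says
\begin{equation*}
w(S) - w(S\setminus v) = \sum_{e \in E(S):\, v\in e} q_e.
\end{equation*}
Since $w(S) > 1-c$ and $w(S\setminus v) < c$, the left side exceeds $(1-c) - c = 1-2c$, which is precisely (iii) (with $p_e$ in the statement understood as the current posterior $q_e$, consistent with the paper's convention of dropping the conditioning). Claim (iv) then follows because $q_v = \sum_{e\in E:\, e\ni v} q_e \ge \sum_{e\in E(S):\, v\in e} q_e > 1-2c$: every edge counted in (iii) is in particular an edge containing $v$, so the node's total posterior infection probability is at least the partial sum restricted to $E(S)$.

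I don't expect a serious obstacle here; the proposition is essentially a careful bookkeeping exercise on the algorithm's control flow plus the weight-difference identity. The one point that needs care is arguing cleanly that line~\ref{alg:setnotfound} is reached \emph{only} in the configuration ``$w(S)>1-c$ and $\forall v: w(S\setminus v)<c$'' — i.e., that neither $w(S)\le 1-c$ nor ``$\exists v: w(S\setminus v)\ge c$'' nor ``$\exists v: w(S\setminus v) > 1-c$'' can hold at that point. This requires tracing through lines~\ref{alg:linetest}, \ref{alg:removebig}, and \ref{alg:removesmall} and using the fact that $w(S)$ only decreases as nodes are removed from $S$, together with monotonicity-type reasoning about how removing one more node from a set with $w(S\setminus v)>1-c$ is handled by the recursion in line~\ref{alg:removesmall}. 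Once that control-flow invariant is pinned down, (i)–(iv) drop out immediately.
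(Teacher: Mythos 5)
Your proposal is correct and follows essentially the same route as the paper's proof: the control-flow observation that line~\ref{alg:setnotfound} is reached only with $w(S)>1-c$ and $\forall v: w(S\setminus v)<c$ (which the paper in fact writes explicitly as the guard of that line), Lemma~\ref{lemma: update} for (ii), and the identity $w(S)-w(S\setminus v)=\sum_{e\in E(S),\,v\in e}q_e$ combined with $q_v\ge\sum_{e\in E(S),\,v\in e}q_e$ for (iii) and (iv). No gaps.
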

\begin{proof}
\textcolor{black}{To see (i) note that at line~\ref{alg:beginloop} $w(S) = 1 > 1-c$ and, as we proceed by removing nodes from $S$ to meet the criteria of line \ref{alg:removebig}, it never gets met and therefore when the algorithm reaches line~\ref{alg:setnotfound}, we have $w(S) > 1-c$.   To prove (ii), we compute the probability that at least one node $v \in V \setminus S$ is positive. By Lemma~\ref{lemma: update}, this is equal to the probability that $e^* \in E \setminus E(S)$, i.e., $\sum_{e \in E \setminus E(S)} q_e=1-w(S)$ which is at most $c$ by (i).  
 Finally, to prove (iii) and (iv),
     we note that $\forall v \in S$, $q_v \geq  \sum_{e \in E(S), v \in e} q_e = w(S) - w(S \setminus v)$. Since the algorithm has passed line~\ref{alg:removebig}, so $w(S) > 1-c$ and  $w(S \setminus v) < c$ for any node $v$. Therefore, $q_v \geq  \sum_{e \in E(S), v \in e} q_e > 1-2c$, proving (iii) and (iv).}
\end{proof}

Using Proposition~\ref{prop:nodeweights}, we bound the expected number of tests in line~\ref{alg:finaltest}:

\begin{lemma}\label{obs:L2}
    Let $\tilde{\mu} = \mu_{(\mathbf{T_1,r_1),\ldots, (T_{L_1}, r_{L_1})}}$ be the expected number of infected nodes at the start of line~\ref{alg:finaltest} where $\mathbf{(T_i, r_i)}, 1 \leq i \leq L_1$ are random variables which indicate the tests and results performed.
    Then the algorithm performs at most $L_2 \leq \frac{\tilde{\mu}}{1-2c}$ tests in line~\ref{alg:finaltest}. 
\end{lemma}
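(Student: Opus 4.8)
The plan is to bound $L_2$ by the number of nodes of $S$ that still carry a non-trivial posterior at the moment the algorithm enters line~\ref{alg:finaltest}, and then to show that each such node contributes strictly more than $1-2c$ to $\tilde{\mu}$, so that there cannot be too many of them.

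First I would pin down exactly what $L_2$ counts. Line~\ref{alg:finaltest} tests only the nodes $v\in S$ with $0<q_v<1$ under the current posterior, updating after each test; and by Remark~\ref{nofurthertest}, once a node is tested (or its posterior is pushed to $0$ or $1$ by the update following another node's test) it is never tested again. Consequently $L_2$ is at most the number of nodes $v\in S$ with $0<q_v<1$, evaluated under the posterior that is in force when line~\ref{alg:finaltest} is reached — namely the posterior after the negative test on $V\setminus S$ from line~\ref{alg:setnotfound} has been incorporated, which is precisely the distribution defining $\tilde{\mu}=\mu_{(\mathbf{T_1,r_1}),\ldots,(\mathbf{T_{L_1},r_{L_1}})}$.

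Next I would propagate the lower bound of Proposition~\ref{prop:nodeweights}(iii) through the last update. At line~\ref{alg:setnotfound} we have $\sum_{e\in E(S):\, v\in e} q_e > 1-2c$ for every $v\in S$. The negative outcome of the test $V\setminus S$ removes $E\setminus E(S)$ and rescales the surviving masses by the factor $1/w(S)$; since $w(S)=\sum_{e\in E(S)} q_e \le 1$, this factor is at least $1$, so the updated posterior of every $v\in S$ satisfies $q_v \ge \sum_{e\in E(S):\, v\in e} q_e > 1-2c$. In particular, every node that is individually tested in line~\ref{alg:finaltest} has posterior infection probability exceeding $1-2c$ at that point.

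Finally, putting these together, $\tilde{\mu}=\sum_{v\in V} q_v \ge \sum_{v\in S:\, 0<q_v<1} q_v \ge (1-2c)\,\bigl|\{v\in S: 0<q_v<1\}\bigr| \ge (1-2c)\,L_2$, which rearranges to $L_2\le \tilde{\mu}/(1-2c)$. The only genuinely delicate step — and the one I would take care to state explicitly — is the passage from the pre-test estimate in Proposition~\ref{prop:nodeweights}(iii) to the post-test posterior, i.e.\ verifying that the renormalization factor $1/w(S)\ge 1$ can only increase each $q_v$; the rest is bookkeeping. I would also remark that the inequality is pathwise in the realized test outcomes, hence it holds for the random variables $L_2$ and $\tilde{\mu}$ as stated, and taking expectations afterwards (as needed for Lemma~\ref{lemma:main}) is immediate.
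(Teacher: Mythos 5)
Your proposal is correct and follows essentially the same route as the paper's proof: both rely on Proposition~\ref{prop:nodeweights}(iii), observe that the renormalization after the negative test in line~\ref{alg:setnotfound} can only scale the surviving edge masses up so that every $v\in S$ retains $q_v>1-2c$, and then sum these posteriors over the tested nodes to get $\tilde{\mu}\ge(1-2c)L_2$. Your only (harmless) refinement is bounding $L_2$ by the number of nodes with $0<q_v<1$ rather than by $|S|$ outright, which the paper's coarser count also accommodates.
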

\begin{proof}
    Suppose test $\mathbf{(T_i,r_i)} = (T_i,r_i)$ is done. In other words, we condition $\tilde{\mu}$ to the case where tests are $(T_i, r_i)$'s. First recall that at any step of the Algorithm, $q_v$ stands for $ q_{v \mid \{\{(T_1,r_1),\ldots, (T_{L_1}, r_{L_1})\}}$. Let $X_v$ be the indicator variable representing if node $v$ is infected or otherwise. Thus, $q_v = E[X_v \mid \{(T_1,r_1),\ldots, (T_{L_1}, r_{L_1})\}].$  And
\begin{align*}    
&\tilde{\mu} \mid \{\mathbf{(T_1,r_1)} = (T_1, r_1),\ldots, (\mathbf{T_{L_1}, r_{L_1})} = (T_{L_1}, r_{L_1})\}  \\
&\triangleq \mathbb{E}[\sum_v X_v \mid \{(T_1,r_1),\ldots, (T_{L_1}, r_{L_1})\}]  
\\&=  \mathbb{E}[ \sum_{v \in S} X_v \mid \{(T_1,r_1),\ldots, (T_{L_1}, r_{L_1})\}] 
\\&= \sum_{v \in S} \mathbb{E}[X_v \mid \{(T_1,r_1),\ldots, (T_{L_1}, r_{L_1})\} ]\\
&= \sum_{v \in S}  q_v > |S|(1-2c)  
\end{align*}

{Recall that $S$ here is defined in Algorithm~\ref{mainalg} when it reached line~\ref{alg:finaltest}.} The first equality is true because in the previous line, $V \setminus S$ tested negative, so $X_v = 0$ for $ v \in V \setminus S.$ The second equality is true because given the test results and which nodes have been tested thus far, the set $S$ is specified, thus the expectation can be moved inside the summation. The third equality follows from what $q_v$ stands for. The fourth inequality follows from Proposition~\ref{prop:nodeweights} (iii), from which we know that for each  $v \in S$, $\sum_{e: e \in E(S), v \in e} p_e  > 1-2c$ in line~\ref{alg:setnotfound}. As the test result in line~\ref{alg:setnotfound} is negative, following Definition~\ref{def:update}, the weight of the edges in $E(S)$ only scaled up at the start of  line~\ref{alg:finaltest}. Thus, $\sum_{e: e \in E(S), v \in e} p_e > 1-2c$, right after the negative test. Since $\forall v: q_v \geq \sum_{e: e \in E(S), v \in e} p_e$, $q_v > 1-2c$. But the algorithm performs $|S|$ tests at line~\ref{alg:finaltest}, so $L_2 = |S|$. Thus regardless of the test results,  $L_2 < \frac{\tilde{\mu}}{1-2c}.$ 
\end{proof}

The above lemma in conjunction with Corollary~\ref{corol:l1} proves Lemma~\ref{lemma:main}.

Theorem \ref{theorem:edgesize} below is a  direct implication of Lemma~\ref{lemma:main}  when the size of the edges do not differ by much. 

\begin{theorem}\label{theorem:edgesize}
    If $\forall e \in E: f_1(n) \leq |e| \leq f_2(n)$, then Algorithm~\ref{mainalg} finds $e^*$ with $\frac{1}{\log \frac{1}{(1-c)}}H(X) +  \frac{f_2(n)}{f_1(n)(1-2c)}\mu + 1$ tests in expectation. 
\end{theorem}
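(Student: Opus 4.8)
The plan is to combine Lemma~\ref{lemma:main} with the size constraint on the edges to control the quantity $\mathbb{E}[\tilde\mu]$, since the bound from Lemma~\ref{lemma:main} reads $\mathbb{E}[L_1+L_2] \leq \frac{1}{\log\frac{1}{1-c}}H(X) + 1 + \frac{\mathbb{E}[\tilde\mu]}{1-2c}$, and the only remaining work is to show $\mathbb{E}[\tilde\mu] \leq \frac{f_2(n)}{f_1(n)}\mu$. So the whole argument reduces to a single inequality about the posterior expected number of infections at the moment the algorithm enters line~\ref{alg:finaltest}.

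To bound $\mathbb{E}[\tilde\mu]$, first I would fix a realization of the tests and results $(T_1,r_1),\ldots,(T_{L_1},r_{L_1})$ performed in Stage~1, so that the set $S$ at line~\ref{alg:finaltest} and the posterior distribution $\mathcal{D}_{L_1} = \{q_e\}$ are determined. Then $\tilde\mu = \sum_{v\in S} q_v = \sum_{v\in S}\sum_{e\in E:\,e\ni v} q_e = \sum_{e} q_e \cdot |e\cap S|$, where the sum is over edges surviving all Stage~1 tests. Since every surviving edge $e$ has $|e| \leq f_2(n)$, we get $|e\cap S| \leq |e| \leq f_2(n)$, hence $\tilde\mu \leq f_2(n)\sum_e q_e = f_2(n)$ (the posterior is a probability distribution). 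That alone gives a bound $\frac{f_2(n)}{1-2c}$, but the theorem's bound has $\mu$ in it, so I need a comparison to the \emph{prior} quantity $\mu = \sum_{v} p_v = \sum_e p(e)|e|$. The key observation is that, by Lemma~\ref{lemma:poster}, the posterior update only ever \emph{removes} edges (sets some $q_e$ to $0$) and rescales the rest up by a common factor; equivalently, for the surviving edge set $E' \subseteq E$ there is a constant $Z = \sum_{e\in E'} p(e) \leq 1$ with $q_e = p(e)/Z$ for $e\in E'$. Therefore
\begin{align*}
\tilde\mu = \sum_{e\in E'} \frac{p(e)}{Z}\,|e\cap S| \leq \frac{f_2(n)}{Z}\sum_{e\in E'} p(e)\cdot\frac{|e|}{f_1(n)} = \frac{f_2(n)}{f_1(n)}\cdot\frac{1}{Z}\sum_{e\in E'} p(e)|e| \leq \frac{f_2(n)}{f_1(n)}\cdot\frac{1}{Z}\sum_{e\in E} p(e)|e| = \frac{f_2(n)}{f_1(n)}\cdot\frac{\mu}{Z},
\end{align*}
where I used $f_1(n)\leq|e|$ so that $|e\cap S|\leq|e| = f_1(n)\cdot\frac{|e|}{f_1(n)} \leq \frac{|e|}{f_1(n)}\cdot f_2(n)$ wait — more cleanly, $|e\cap S| \leq |e| \leq f_2(n) = f_1(n)\cdot\frac{f_2(n)}{f_1(n)} \leq |e|\cdot\frac{f_2(n)}{f_1(n)}$, so $\tilde\mu \leq \frac{f_2(n)}{f_1(n)}\sum_{e\in E'}\frac{p(e)}{Z}|e| \leq \frac{f_2(n)}{f_1(n)}\cdot\frac{\mu}{Z}$. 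The nuisance factor $\frac1Z \geq 1$ is the one subtlety; I would handle it by taking expectations over the random tests and results and arguing that the $\frac1Z$ blow-up is exactly compensated. Concretely, writing $\mathbb{E}[\tilde\mu]$ as a sum over the disjoint events determined by which edge $e^*$ is the target (since $e^*$ always survives, $P(\text{reaching line~\ref{alg:finaltest} with surviving set }E'\ni e^*)$ contributes $p(e^*)$ weight while the conditional $q$-mass of $E'$ is $1$), the rescaling factors telescope and $\mathbb{E}[\tilde\mu] \leq \frac{f_2(n)}{f_1(n)}\mu$ follows; intuitively $\sum_{e} p(e)\cdot\mathbb{E}[\,|e\cap S|\mid e^*=e\,]/1$ is bounded because conditioned on $e^*=e$ the set $S$ at line~\ref{alg:finaltest} satisfies $|e\cap S|\leq|e|\leq f_2(n) \leq \frac{f_2(n)}{f_1(n)}|e|$, and then summing against $p(e)$ gives $\frac{f_2(n)}{f_1(n)}\sum_e p(e)|e| = \frac{f_2(n)}{f_1(n)}\mu$. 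This last route avoids the $1/Z$ factor entirely by conditioning on the target edge rather than on the transcript.

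Plugging $\mathbb{E}[\tilde\mu] \leq \frac{f_2(n)}{f_1(n)}\mu$ into Lemma~\ref{lemma:main} gives exactly $\mathbb{E}[L] \leq \frac{1}{\log\frac{1}{1-c}}H(X) + \frac{f_2(n)}{f_1(n)(1-2c)}\mu + 1$, which is the claim. The main obstacle, as flagged above, is the bookkeeping around the renormalization constant $Z$: a naïve per-transcript bound loses a factor $1/Z$ that can be as small as $p(e^*)$, so the cleanest proof conditions on the identity of the target edge $e^*$ (using that $e^*$ is never removed, per Lemma~\ref{lemma: update}/Lemma~\ref{lemma:poster}) and bounds $\mathbb{E}[\,|e^*\cap S|\,]$ directly, rather than trying to track the evolving posterior mass. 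Everything else — expanding $\tilde\mu$ as $\sum_v q_v$, restricting the sum to $v\in S$ because $V\setminus S$ tested negative, and using $f_1(n)\leq|e|\leq f_2(n)$ — is routine.
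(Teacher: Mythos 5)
Your proposal is correct: reducing the theorem to the single inequality $\mathbb{E}[\tilde\mu]\le \frac{f_2(n)}{f_1(n)}\mu$ and plugging it into Lemma~\ref{lemma:main} is exactly the structure of the paper's proof. Where you diverge is in how that inequality is obtained, and your route is considerably more elaborate than it needs to be. The paper uses two deterministic one-liners: since the posterior $\{q_e\}$ is a probability distribution supported on edges of size at most $f_2(n)$, one has $\tilde\mu=\sum_e q_e|e|\le f_2(n)$ pointwise on every transcript; and since every edge has size at least $f_1(n)$, one has $\mu=\sum_e p(e)|e|\ge f_1(n)$. Dividing gives $\tilde\mu\le f_2(n)=\frac{f_2(n)}{f_1(n)}f_1(n)\le\frac{f_2(n)}{f_1(n)}\mu$ with no expectation, no renormalization constant, and no conditioning. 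The ``main obstacle'' you flag --- the factor $1/Z$ from the posterior rescaling --- is an artifact of bounding $\sum_e q_e|e|$ by $\mu/Z$ rather than by $f_2(n)$ directly; it simply does not arise in the paper's argument. That said, your fallback of conditioning on the identity of $e^*$ and using the tower property ($\mathbb{E}[\tilde\mu]=\mathbb{E}\bigl[|e^*\cap S|\bigr]=\sum_e p(e)\,\mathbb{E}\bigl[|e\cap S|\mid e^*=e\bigr]\le\sum_e p(e)|e|$) is valid and in fact yields the stronger conclusion $\mathbb{E}[\tilde\mu]\le\mu$, so your approach buys a tighter constant at the cost of a probabilistic argument (and some care about the event that line~\ref{alg:finaltest} is actually reached), whereas the paper's buys brevity and a bound that holds for every realization of the transcript.
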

\begin{proof}
    Recall that $\mu$ is the average number of infections before performing any tests. As $f_1(n) \leq |e|$ for each edge $e$, we have $\mu \geq f_1(n)$. As $|e| \leq f_2(n)$, we have $\tilde{\mu} \leq f_2(n)$. Then the ratio $\frac{\tilde{\mu}}{\mu} \leq \frac{f_2(n)}{f_1(n)}$, hence $\mathbb{E}[\tilde{\mu}] \leq \frac{f_2(n)}{f_1(n)} \mu$. Replacing this in $\mathbb{E} [L_1 + L_2] \leq \frac{1}{\log \frac{1}{(1-c)}}H(X) + 1+  \frac{\mathbb{E}[\tilde{\mu}]}{1-2c}$ from Lemma~\ref{lemma:main} completes the proof.
\end{proof}

If most of the edges, not all edges, have an upper bound of the form $f_2(n)$, Algorithm~\ref{mainalg} can be modified to only return the correct edge when the target edge is smaller or equal than $f_2(n)$. This idea is summarized in the following corollary.
\begin{corollary}\label{cor:conctedge}
If there is an $E' \subseteq E$ such that $\sum_{e\in E'} p(e) = 1-\epsilon$ and $\forall e \in E': |e| \leq f_2(n)$, then Algorithm~\ref{mainalg} can be modified so that it performs $\frac{2}{\log \frac{1}{(1-c)}}H(X) + 2f_2(n)$ tests in expectation, and returns the target edge $e^*$ with probability $1-\epsilon$, where $c < 1/3$. 
\end{corollary}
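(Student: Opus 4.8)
The plan is to pre-process the hypergraph by discarding all edges that are "too large" — specifically, every $e$ with $|e| > f_2(n)$ — and then run Algorithm~\ref{mainalg} on the truncated instance. First I would set $E_{\mathrm{large}} = \{e \in E : |e| > f_2(n)\}$, which satisfies $\sum_{e \in E_{\mathrm{large}}} p(e) \leq \epsilon$ since $E' \subseteq E \setminus E_{\mathrm{large}}$ and $\sum_{e \in E'} p(e) = 1 - \epsilon$. Remove these edges from $\mathcal{D}$ and renormalize (Definition~\ref{def:update}); call the resulting distribution $\mathcal{D}'$ over the edge set $E'' = E \setminus E_{\mathrm{large}} \supseteq E'$, with renormalization constant $1/(1-\epsilon') \leq 1/(1-\epsilon)$ for the actual removed mass $\epsilon' \leq \epsilon$. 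Running Algorithm~\ref{mainalg} on $(V, E'')$ with distribution $\mathcal{D}'$ and parameter $c < 1/3$ will correctly identify the target edge $e^*$ whenever $e^* \in E''$, i.e. whenever $|e^*| \leq f_2(n)$, which happens with probability at least $1-\epsilon$ (since $e^* \in E'$ with probability $1-\epsilon$, and $E' \subseteq E''$); when $e^* \notin E''$ the algorithm fails, accounting for the claimed error probability. This handles correctness.

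For the test count, I would apply Theorem~\ref{theorem:edgesize} to the truncated instance, where now every surviving edge has size at most $f_2(n)$, so $f_2'(n) = f_2(n)$; the lower bound $f_1'(n)$ on edge size may be as small as $1$, so the bound from Theorem~\ref{theorem:edgesize} directly gives $\frac{1}{\log\frac{1}{1-c}} H(X') + \frac{f_2(n)}{1 - 2c}\,\mu' + 1$, where $H(X')$ and $\mu'$ are the entropy and expected number of infections under $\mathcal{D}'$. The remaining work is to control $H(X')$ and $\mu'$ in terms of the original $H(X)$ and in terms of $f_2(n)$. For $\mu'$: since every edge in $E''$ has size at most $f_2(n)$, the posterior expected number of infections under $\mathcal{D}'$ is at most $f_2(n)$, so $\mu' \leq f_2(n)$ and the term $\frac{f_2(n)}{1-2c}\mu'$ is at most $\frac{f_2(n)^2}{1-2c}$ — hmm, that is quadratic, which does not match the claimed $2 f_2(n)$. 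Let me reconsider: actually the relevant quantity entering Lemma~\ref{lemma:main} is $\mathbb{E}[\tilde\mu]/(1-2c)$ directly, and $\tilde\mu \leq f_2(n)$ always (every surviving edge has at most $f_2(n)$ nodes, so the posterior expected infection count never exceeds $f_2(n)$), giving the Stage-2 term $\leq \frac{f_2(n)}{1-2c}$. With $c < 1/3$ we have $1 - 2c > 1/3$, so $\frac{1}{1-2c} < 3$; to land at the clean bound $2 f_2(n)$ one would want $\frac{1}{1-2c} \leq 2$, i.e. $c \leq 1/4$, so I would either tighten the hypothesis or absorb the "$+1$" and constant slack into the bound — this is a minor bookkeeping point I would reconcile with the exact constant in the statement.

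For $H(X')$: by the standard inequality relating entropy before and after conditioning/renormalizing on an event of probability $1-\epsilon'$, one has $H(X') \leq \frac{1}{1-\epsilon'}\bigl(H(X) + h(\epsilon')\bigr) \leq \frac{1}{1-\epsilon}\bigl(H(X) + 1\bigr) \leq 2H(X) + 2$ for $\epsilon \leq 1/2$ (using $h(\epsilon') \leq 1$); more directly, renormalizing scales each surviving $p(e)$ up by at most $1/(1-\epsilon) \leq 2$, so each term $p'(e)\log\frac{1}{p'(e)} \leq \frac{2}{1}\bigl(p(e)\log\frac{1}{p(e)} + p(e)\log 2\bigr)$ summed gives $H(X') \leq 2H(X) + 2\log 2 \cdot 1$, whence the factor-$2$ in front of $\frac{1}{\log\frac{1}{1-c}}H(X)$ in the claimed bound. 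Combining the Stage-1 bound $\frac{1}{\log\frac{1}{1-c}}H(X') + 1 \leq \frac{2}{\log\frac{1}{1-c}}H(X) + O(1)$ with the Stage-2 bound $\leq 2 f_2(n)$ yields the corollary. The main obstacle is purely the constant-chasing: making sure the renormalization-induced factor of $1/(1-\epsilon) \le 2$ and the choice $c < 1/3$ together produce exactly the stated coefficients ($2$ and $2$) rather than slightly worse constants; I would likely need $\epsilon$ bounded away from $1$ (e.g. $\epsilon \le 1/2$) as an implicit assumption, and fold the additive "$+1$" from Lemma~\ref{lemma:main} into the $2 f_2(n)$ term since $f_2(n) \ge 1$.
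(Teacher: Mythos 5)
Your route is genuinely different from the paper's, and it has a gap that you do not flag. The paper never deletes the large edges: it keeps the full distribution and instead rewrites Stage~2 (line~\ref{alg:finaltest}) so that after the negative group test the algorithm tests \emph{one} uncertain node of $S$, updates the posterior, increments a counter $PN$ on a positive result, returns to line~\ref{alg:beginloop}, and outputs the positive nodes once $PN \geq f_2(n)$. The bound then comes from a charging argument: negative individual tests in the new line remove mass at least $1-2c > c$ (this is where $c<1/3$ is used) and hence count as informative tests, positive individual tests are capped at $f_2(n)$ by the counter, and negative tests in line~\ref{alg:setnotfound} are interleaved with individual tests so their number is at most the sum of the other two — giving exactly $\frac{2}{\log\frac{1}{1-c}}H(X) + 2f_2(n)$ with no renormalization of the distribution and no constraint on $\epsilon$.

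The serious problem with your preprocessing approach is the event, of probability up to $\epsilon$, that $e^*$ itself is deleted. You charge this event only against the error probability, but it also contributes to the \emph{expected number of tests}, and there you have no control: the algorithm then runs on a hypergraph that does not contain the true target, test outcomes are generated by an edge outside the working edge set, and Lemma~\ref{lemma:cfrac} no longer applies (its expectation is taken over the target edge drawn from the working distribution). On that event the number of informative tests is bounded only by $\log_{1/(1-c)}(1/p'_{\min})$, which can be $\Theta(n)$ or worse; an $\epsilon\cdot\Theta(n)$ additive term does not fit inside $\frac{2}{\log\frac{1}{1-c}}H(X)+2f_2(n)$ in general. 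Your two constant-chasing worries are also real — Stage~2 gives $\frac{f_2(n)}{1-2c}<3f_2(n)$ for $c<1/3$ rather than $2f_2(n)$, and the entropy inflation $H(X')\leq \frac{1}{1-\epsilon}H(X)$ needs $\epsilon\leq 1/2$, an assumption absent from the statement — but those are symptoms of the approach, not bookkeeping: in the paper's proof the factor $2$ on $H(X)$ and the coefficient $2$ on $f_2(n)$ both arise from the interleaving/counting argument, not from renormalization or from $\frac{1}{1-2c}$.
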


\begin{proof}    Let $PN$ be the number of positive tests in line~\ref{alg:finaltest}, which  is modified as follows:

    \begin{enumerate}
    \setcounter{enumi}{15} 

        \item If the test is negative, remove $E \setminus E(S)$ from $\mathcal{D}$ and update the posterior distribution following Definition~\ref{def:update}. Pick a node $v \in S$ at random with $0 < q_v < 1$. Test $v$ and update the posterior and if $v$ is positive increment $PN$ by $1$. If $PN \geq f_2(n)$, return all the positive nodes as $e^*$; else, go to line~\ref{alg:beginloop}. \label{alg:new16}
    \end{enumerate}

  First consider the informative tests. As per Lemma~\ref{lemma:cfrac}, the expected number of informative tests is  at most $\frac{1}{\log\ \frac {1} {(1-c)}}H(X)$. As in Observation \ref{observation:fractionremoved}, 
    tests in line~\ref{alg:linetest} and positive tests in line \ref{alg:setnotfound} are informative.   We now prove that a negative test in line~\ref{alg:new16} is informative. When the algorithm reaches line~\ref{alg:linetest} , for each $v$ in $S$, $\sum_{e: e \in E(S),  v \in e} p_e > 1-2c$ (this can be argued similar to Proposition~\ref{prop:nodeweights}). If a node $v$ tests negative in line~\ref{alg:new16}, then all edges that contain $v$ are removed, thus a weight of at least $\sum_{e: e \in E(S),  v \in e} p_e$ is removed (note that edges that are not contained in $E(S)$ but have $v$ will also be removed, so we use the qualifier “at least”). Thus, a weight of at least $1-2c$ is removed. Note that   $1-2c > c$ since $c < 1/3$. The claim follows. Thus, the expected total number of tests in  line~\ref{alg:linetest} and  positive tests in line \ref{alg:setnotfound} and negative tests in line~\ref{alg:new16} is  $\frac{1}{\log\ \frac {1} {(1-c)}}H(X)$.

We now upper bound the remaining tests in expectation. The remaining tests are positive tests in line~\ref{alg:new16} and negative tests in line~\ref{alg:setnotfound}. We first consider the positive tests in line~\ref{alg:new16}. Here, every positive test increments  $PN$ by at least $1.$ It therefore follows from line~\ref{alg:new16} that there are at most $f_2(n)$ such positive tests. Next, note that  between any two negative tests in line~\ref{alg:setnotfound}, at least one test is conducted in line~\ref{alg:new16}, which is informative if negative, or positive otherwise. Thus the number of negative tests in line~\ref{alg:setnotfound} is at most the number of informative tests and the number of positive tests in line~\ref{alg:new16}. Thus, the expected number of negative tests in line~\ref{alg:setnotfound} is at most  $\frac{1}{\log\ \frac {1} {(1-c)}}H(X) + f_2(n).$ Putting all together, the algorithm performs at most $\frac{2}{\log \frac{1}{(1-c)}}H(X) + 2f_2(n)$ tests in expectation.

    If $|e^*| \leq f_2(n)$, we always have $PN \leq f_2(n)$ and hence the algorithm terminates with $PN < f_2(n)$ or $PN = f_2(n).$ In the former case the  algorithm terminates only when it discovers an edge $e^*$ such that $p(e^*) = 1.$ In the latter case it terminates returning $f_2(n)$ nodes that tested positive. Since $|e^*| \leq f_2(n)$ and all positive nodes are in $e^*$, the algorithm returns all the positive nodes, ie, it returns $e^*$. But  the target edge is less than or equal to $f_2(n)$ with probability  $1-\epsilon$; hence the probability of returning the target edge is at least $1-\epsilon$ (the target edge may be returned correctly even when it exceeds $f_2(n)$). 
   
\end{proof}

Note that under the conditions of Corollary \ref{cor:conctedge}, and when $\forall e \in E': f_1(n) \leq |e| \leq f_2(n)$, $\epsilon < 2c$, from Corollary \ref{cor:conctedge}, the expected number of tests in the modified algorithm is also at most $\frac{2}{\log \frac{1}{(1-c)}}H(X) + \frac{2f_2(n)}{f_1(n)(1-2c)}\mu$ in expectation. This is because  $\mu \geq f_1(n)(1-2c)$ which we argue next. Note that $\mu = \mathbb{E}[|e^*|]$, and $|e^*| \geq f_1(n)$ with probability $1-\epsilon$. Thus, $\mu = \mathbb{E}[|e^*|] \geq f_1(n) (1-\epsilon) \geq f_1(n) (1-2c)$ since $\epsilon < 2c$. The claim follows.

While Theorem~\ref{theorem:edgesize} imposes a constraint on  the edge sizes and finds the infected set with probability 1, the constraint can be relaxed if a small probability of error ($\epsilon$, fixed or approaching zero) can be tolerated. We use Markov Inequality to bound the probability that $|e^*|$ exceeds $\mu/\epsilon$. By removing those edges for which $|e| > \mu/\epsilon$, we ensure $\tilde{\mu} < \mu /\epsilon$ and establish the following result.

\begin{theorem}\label{theorem:final}
There is an algorithm that performs $\mathbb{E}[L_1 + L_2] \leq \frac{2}{\log \frac{1}{(1-c)}}H(X) + \frac{2\mu}{\epsilon}$ tests in expectation and returns the correct edge $e^*$ with probability $1-\epsilon$, where $ c \leq 1/3$. Optimizing over $c$, with $c = 1/3$ the expected number of tests is $\mathbb{E}[L_1 + L_2] \leq \frac{2}{\log \frac{3}{2}}H(X) + \frac{2\mu}{\epsilon}$.
\end{theorem}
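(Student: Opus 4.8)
The plan is to combine Corollary~\ref{cor:conctedge} with a Markov-inequality truncation argument. Fix the target error $\epsilon$. Define $E' = \{e \in E : |e| \le \mu/\epsilon\}$. By Markov's inequality applied to the random variable $|e^*|$ (whose expectation is $\mu$ by \eqref{eq:mu}, since $\mu = \mathbb{E}[X_1+\cdots+X_n] = \mathbb{E}[|e^*|]$), we have $\Pr(|e^*| > \mu/\epsilon) \le \epsilon$, hence $\sum_{e \in E'} p(e) = \Pr(e^* \in E') \ge 1-\epsilon$. Thus $E'$ satisfies the hypothesis of Corollary~\ref{cor:conctedge} with $f_2(n) = \mu/\epsilon$.

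Next I would describe the modified algorithm: first delete from the hypergraph every edge $e$ with $|e| > \mu/\epsilon$ and renormalize the remaining distribution (Definition~\ref{def:update}), then run the modified Algorithm~\ref{mainalg} from Corollary~\ref{cor:conctedge} with the stopping counter $PN$ thresholded at $f_2(n) = \mu/\epsilon$ and with $c \le 1/3$. Deleting the large edges only shrinks $H(X)$ (renormalization can only decrease entropy, or one can simply observe the bound of Corollary~\ref{cor:conctedge} is stated in terms of the post-deletion entropy which is at most the original $H(X)$), so the test-count bound $\frac{2}{\log\frac{1}{1-c}}H(X) + 2f_2(n) = \frac{2}{\log\frac{1}{1-c}}H(X) + \frac{2\mu}{\epsilon}$ from Corollary~\ref{cor:conctedge} applies. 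For correctness: Corollary~\ref{cor:conctedge} guarantees that whenever the (post-deletion) target edge has size $\le f_2(n)$ — which, since we deleted exactly the edges of size $> \mu/\epsilon$, happens exactly when $e^* \in E'$ — the algorithm returns $e^*$. Since $\Pr(e^* \in E') \ge 1-\epsilon$, the algorithm returns the correct edge with probability at least $1-\epsilon$. Finally, substituting $c = 1/3$ gives $\frac{1}{\log\frac{1}{1-c}} = \frac{1}{\log(3/2)}$, yielding the stated optimized bound.

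I expect the only subtlety to be bookkeeping around the entropy term after edge deletion and renormalization: one must note that $H(X)$ as it appears in the final statement refers to the \emph{original} distribution, and argue that the renormalized entropy $\sum_{e\in E'} \frac{p(e)}{1-\epsilon}\log\frac{1-\epsilon}{p(e)}$ is bounded by the original $\frac{1}{1-\epsilon}H(X) + O(1)$ (or simply $\le \frac{H(X)}{1-\epsilon}$ up to the $\log\frac{1}{1-\epsilon}$ correction), which is absorbed into the constants — and the cleanest route is to note that Corollary~\ref{cor:conctedge}'s bound in terms of the truncated-and-renormalized entropy is at most its bound in terms of $H(X)$ since removing edges and renormalizing never increases $\sum_e p(e)\log(1/p(e))$ beyond $H(X)$ (each surviving term $\frac{p(e)}{1-\epsilon}\log\frac{1-\epsilon}{p(e)} \le \frac{p(e)}{1-\epsilon}\log\frac{1}{p(e)}$ so the sum is $\le \frac{1}{1-\epsilon}H(X)$, and with $\epsilon$ small this is $\le 2H(X)$; if one wants the clean constant $2$ in front, take the statement's $\frac{2}{\log\frac1{1-c}}H(X)$ to already account for this). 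This is the one place where the informal "$\mathbb{E}[L_1+L_2] \le \tfrac{2}{\log\frac{1}{1-c}}H(X)+\tfrac{2\mu}{\epsilon}$" needs a careful constant check; everything else is a direct citation of the two preceding results.
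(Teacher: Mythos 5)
Your proposal is correct and follows essentially the same route as the paper: Markov's inequality gives $\Pr(|e^*|>\mu/\epsilon)\le\epsilon$, and then Corollary~\ref{cor:conctedge} is invoked with $f_2(n)=\mu/\epsilon$, after which $c=1/3$ is substituted. The one place you diverge is in actually deleting the edges with $|e|>\mu/\epsilon$ and renormalizing the distribution before running the algorithm; this is what forces your entropy bookkeeping, and your own accounting shows the renormalized entropy is only bounded by $\tfrac{1}{1-\epsilon}H(X)$, which does \emph{not} sit inside the stated constant $\tfrac{2}{\log\frac{1}{1-c}}H(X)$ without a further argument. The paper avoids this entirely: the modification in Corollary~\ref{cor:conctedge} is only a stopping rule (the counter $PN$ in its line~16) applied to the \emph{original, un-truncated} distribution, so its bound is already stated in terms of the original $H(X)$ and no renormalization ever occurs. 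If you drop the deletion step and use the corollary as stated, your "one subtlety" disappears and the proof is a direct two-line citation, exactly as in the paper.
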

\begin{proof}
    Let $z = |e^*|$. Then we know $\mathbb{E}[{z}] = \mu$ and by Markov inequality, $Pr[z > \mu/\epsilon] < \epsilon$, meaning that with probability $1 - \epsilon$, the size of the target edge is at most $\mu / \epsilon$.  To exploit this fact, we can modify Algorithm~\ref{mainalg} similar to that in the proof of Corollary~\ref{cor:conctedge}. The only difference would be that we would replace $f_2(n)$ with $\mu/\epsilon$ in the modification. Thus, $f_2(n)$ should be replaced with $\mu/\epsilon$ in the upper bound on the expected number of tests given in the statement of the Corollary.

    Replacing $f_2(n)$ with $\mu/\epsilon$ it follows from Corollary~\ref{cor:conctedge} that the expected number of tests is at most $\frac{2}{\log \frac{1}{(1-c)}}H(X) + \frac{2\mu}{\epsilon}$ and the target edge is returned with probability $1-\epsilon.$ Because $c\leq 1/3$, the upper bound on the expected number of tests is minimized when $c = 1/3$ and in this case, the expected number of tests is at most $\frac{2}{\log \frac{3}{2}}H(X) + \frac{2\mu}{\epsilon}$.
    \end{proof}

\textcolor{black}{In the above Theorem, there is a trade-off between the probability of error and the expected number of tests. With $\epsilon$ error, the second term of expectation increases with $1/\epsilon$. So if we set $\epsilon \rightarrow 0$ when $n \rightarrow \infty$, for example $\epsilon = 1/\log n$, we incur a multiplicative factor of $\log n$ on the second term
}

\textcolor{black}{Markov bound works for general distributions, but it might happen that the mass in $\mathcal{D}$ is concentrated around the mean. In this case, the dependency on $1/\epsilon$ might be slower. For example, if the concentration is exponential, i.e. $Pr(|e^*| > 2\mu) < 2^{-\Omega(n)}$, then with probability $1 - 2^{-\Omega(n)}$ the algorithm recovers $e^*$ and the expected number of tests is at most $\mathbb{E}[L_1 + L_2] \leq \frac{2}{\log \frac{3}{2}}H(X) + 2\mu$ (by replacing $f_2(n)$ with $2\mu$ in the statement and proof of Corollary~\ref{cor:conctedge}). More precisely, we have the following Corollary.
}


\begin{corollary}\label{cor:concentr}
    Suppose $e^*$ is sampled from the probability distribution $\mathcal{D}$ and we have $Pr(|e^*| > u) = o(1)$. Then, with probability $1 - o(1)$, the Algorithm~\ref{mainalg} as modified in the proof of Corollary~\ref{cor:conctedge} (with $f_2(n)$ replaced by $u$) returns the correct edge $e^*$ using $O(H(X) + u)$ tests with probability $1 - o(1)$.
\end{corollary}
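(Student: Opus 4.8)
The plan is to reduce Corollary~\ref{cor:concentr} to Corollary~\ref{cor:conctedge} by a near-verbatim substitution of the deterministic size bound $f_2(n)$ with the stochastic bound $u$, and then to carefully track which conclusions survive the weakening of ``$|e|\le f_2(n)$ for a set of edges of mass $1-\epsilon$'' to ``$\Pr(|e^*|>u)=o(1)$''. First I would set $\epsilon = \Pr(|e^*|>u)$, which by hypothesis is $o(1)$, and let $E' = \{e\in E : |e|\le u\}$, so that $\sum_{e\in E'}p(e) = 1-\epsilon \ge 1-o(1)$ and every $e\in E'$ satisfies $|e|\le u$. This is exactly the hypothesis of Corollary~\ref{cor:conctedge} with $f_2(n)$ replaced by $u$, so the modified algorithm of that corollary (the one that introduces the counter $PN$ in line~\ref{alg:new16} and terminates once $PN\ge u$) applies directly.

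Next I would invoke the two conclusions of Corollary~\ref{cor:conctedge}, re-derived with $f_2(n)\mapsto u$: the expected number of tests is at most $\frac{2}{\log\frac{1}{1-c}}H(X) + 2u$, which is $O(H(X)+u)$ for any fixed $c<1/3$ (say $c=1/3-$, or just $c=1/4$ to keep constants clean); and the algorithm returns $e^*$ correctly whenever $|e^*|\le u$. The key point here is that the proof of Corollary~\ref{cor:conctedge} never used that \emph{all} realizable target edges are small --- it only used that on the good event $\{|e^*|\le u\}$ the counter $PN$ never exceeds $u$, so the algorithm halts either having isolated the unique surviving edge or having collected exactly $u$ positive nodes, all of which lie in $e^*$, hence equal $e^*$. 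The bound on the expected number of tests, by contrast, holds unconditionally (on every sample path), because it is built from the informative-test count $\frac{1}{\log\frac1{1-c}}H(X)$ via Lemma~\ref{lemma:cfrac}, plus the at-most-$u$ positive tests in line~\ref{alg:new16}, plus the at-most-(informative $+\,u$) negative tests in line~\ref{alg:setnotfound}; none of that argument references the size of the realized $e^*$.

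Finally I would assemble the probability statement: the correctness event contains $\{|e^*|\le u\}$, which has probability $1-\epsilon = 1-o(1)$, giving the claimed $1-o(1)$ success probability; the expected test count $O(H(X)+u)$ is the unconditional bound just quoted. I do not anticipate a genuine obstacle --- the content is entirely in recognizing that Corollary~\ref{cor:conctedge}'s proof is robust to this substitution --- but the one place to be careful is to state clearly that the $O(\cdot)$ absorbs the constant $\frac{2}{\log\frac1{1-c}}$ (finite since $c<1/2$) and that the ``with probability $1-o(1)$'' in the conclusion refers to the randomness of $\mathcal{D}$ (the draw of $e^*$) together with any internal randomization of the modified algorithm, with the $o(1)$ term being exactly $\Pr(|e^*|>u)$ plus zero (the algorithm's random node choices in line~\ref{alg:new16} never cause failure on the good event). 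A remark that the bound is meaningful precisely when $u$ is a tight stochastic upper bound on $|e^*|$ --- matching the discussion after Theorem~\ref{theorem:final} --- would round out the statement.
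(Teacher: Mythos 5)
Your proposal is correct and follows essentially the same route as the paper: the paper's own justification for Corollary~\ref{cor:concentr} is precisely to set $\epsilon=\Pr(|e^*|>u)=o(1)$ and invoke Corollary~\ref{cor:conctedge} with $f_2(n)$ replaced by $u$, exactly as you do. Your additional observations (that the expected-test bound from Corollary~\ref{cor:conctedge} is unconditional while only the correctness guarantee is restricted to the event $\{|e^*|\le u\}$) are accurate and match the structure of the paper's argument.
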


Note that in the above corollary, $u$ can be any stochastic upper bound on the size of the target edge. Thus, 
the modification of Algorithm~\ref{mainalg} does not need to know $\mu$, it just needs to know any stochastic upper bound on the size of the target edge;  the guarantee on the expected number of tests however improves as $u$ becomes smaller. Since the smallest value of a stochastic upper bound 
can be $O(\mu)$, the best guarantee via the above corollary is obtained if $\mu$ is known.

\section{Recovering/{Improving} Prior Results}\label{sec:5}

In this section, we consider the statistical models provided by prior related works and compare the results by the performance of Algorithm~\ref{mainalg}.

The work \cite{li2014group} considers independent group testing where node $i$ is infected with probability $p_i$ independent of the other nodes. The authors provide adaptive algorithms that use $O(\mu + H(X))$ tests on average and output the infected set with high probability. They also prove almost surely bounds on the number of tests. By applying Theorem~\ref{theorem:final} to this setting for a constant $c$, {we obtain $O(H(X) + \frac{\mu}{\epsilon})$ as an upper bound on the expected number of tests where $\epsilon$ is the error probability.} If $\mu = \sum_i p_i \gg 1$, which is a common assumption in the literature, then the number of infections is concentrated around $\mu$ and by Corollary~\ref{cor:concentr}, with probability $1-o(1)$, the algorithm in Theorem~\ref{theorem:final} recovers the infected set with $O(H(X) + \mu)$ tests, which recovers {the adaptive upper bound result (Theorem~2 and Corollary~1) up to a constant} provided in \cite{li2014group}.

In \cite{nikolopoulos2021group2}, as we discussed briefly in Section~\ref{sec:priormodel}, the authors consider $F$ families where each family $j$ has size $M_j$, so $n = \sum_j M_j$. Each family $j$ is infected with probability $q$ and when a family is infected, each of its nodes is infected with probability $p_j$ independent of the rest. If a family is not infected, then none of its nodes is infected. They show that to have a constant probability of recovery, one needs 
$L \geq  H(X) = F h_2(q)+\sum_{j=1}^F q M_j h_2\left(p_j\right)-w_j h_2\left(\frac{1-q}{w_j}\right)$ where $ w_j=1-q+q\left(1-p_j\right)^{M_j}$ and $h_2$ is the binary entropy. For the symmetric case where $\forall j p_j = p, M_j = M$ and hence $w_j = w$, the entropy simplifies to $H(X) = F h_2(q)+ q n h_2(p) - Fw h_2(\frac{1-q}{w})$. They provide an algorithm that needs at most $O(Fq( \log(F) + M) + nqp \cdot \log n)$ tests in expectation. 

Note that $\mu = nqp$, and by Theorem~\ref{theorem:final}, with probability $1-\epsilon$ the target edge can be found using $O(F h_2(q)+ q n h_2(p) - Fw h_2(\frac{1-q}{w}) + \frac{nqp}{\epsilon})$ tests for a constant $c$. If $F \gg 1$ and $M \gg 1$, then the expected number of infections is concentrated around its mean, and with probability $1 - o(1)$ the algorithm succeeds using $O(F h_2(q)+ q n h_2(p) - Fw h_2(\frac{1-q}{w}) + {nqp}) = O(F (h_2(q) - w h_2(\frac{1-q}{w})) + nq(h_2(p) + p))$ tests by Corollary~\ref{cor:concentr}. This significantly improves the upper bound provided in \cite{nikolopoulos2021group2} as $Fq( \log(F) + M) \gg F (h_2(q) - w h_2(\frac{1-q}{w}))$, when fixing $q$ and $F \rightarrow \infty$, and $nqp \cdot \log n \gg nq(h_2(p) + p)$ when $n \rightarrow \infty$.

In \cite{gonen2022group}, the authors consider a combinatorial version of group testing where each edge has size $d$ and show that there is an algorithm that recovers $e^*$ {with high probability} using at most $O(\log |E| + d\log^2 d)$ tests.
If we put probability $1/|E|$ on each edge, then Theorem~\ref{theorem:edgesize} tells us that we can recover the infected set with probability 1 using $O(H(X) + \mu) = O(\log |E| + d)$ tests in expectation, slightly improving the bound in \cite{gonen2022group}.

In \cite{ahn2023adaptive}, the authors consider a stochastic block infection model (SBIM) where there are $m$ communities and each one has $k$ nodes, so $n = mk$. Each node is a seed with probability $p$ independently. Then every seed node infects nodes in the same community with probability $q_1$ and $q_2$ for the nodes outside of the community, $q_1 > q_2$. The authors show lower and upper bounds on the number of tests. For the special case where $q_2 = 0$, they further show optimality of the proposed algorithm. {The authors show that their algorithm is order-optimal when,  $q_2 = 0$, $q = q_1$, 
$k p \ll 1$,   $q \ll \frac{1}{\sqrt{k \log (\frac{1}{k p})}}$, 
$k p \ll m^{-\beta} \text { for some fixed $\beta \in (0,1)$, and } 1 \ll k q$.}

 {In Appendix~\ref{app:provecoms}, we show that with only assuming $kp \ll m^{-\beta}$, the conditions of Corollary~\ref{cor:concentr} hold and, the infected set is recovered with $O(H(X) + \mu) $ tests in expectation.  In \cite{ahn2023adaptive}, the authors prove that when $q_2 = 0$ and  $q = q_1$,  $\mu \leq m \cdot k^2 \cdot p \cdot (1/k+q)$, and when $k p \ll 1 \text {, and } q \ll \frac{1}{\sqrt{k \log \left(\frac{1}{k p}\right)}},$ the entropy is given by $H(X) = m \cdot k \cdot p \cdot \log \left(\frac{1}{k p}\right)+m \cdot k^2 \cdot p \cdot q \cdot\left(\log k+\log \log \left(\frac{1}{k p}\right)\right)+1$. When $1/k \leq q$, which is a relaxed condition compared to $1 \ll kq$, we have $\mu \leq 2m \cdot k^2 \cdot p \cdot q$. Since $1 \ll kq$, it follows that 
$\mu \leq H(X)$ and hence $O(H(X) + \mu)$ is $ O(H(X))$. Now, Corollary \ref{cor:concentr}  provides an order-optimal performance guarantee by Theorem~\ref{theorem:lowerbound} and the fact that $\epsilon = o(1)$.}
{More generally, \cite{ahn2023adaptive} attains no analytical results, without having the following two conditions:  $k p \ll m^{-\beta} \text { for some fixed $\beta \in (0,1)$, and } 1 \ll k q$. But, even without these two conditions, we provide the following result due to  Theorem~\ref{theorem:final}:  Algorithm~\ref{mainalg} can recover the infected set with probability $1 - \epsilon$ using at most $O(H(X) + \mu/\epsilon)$ tests in expectation.}


\section{Lower Bound and Optimality of Algorithm~\ref{mainalg}}\label{sec:loweradapt}

{In the probabilistic group testing literature,  entropy $H(X)$ constitutes an order-wise tight lower bound \cite{li2014group} (see also Theorem~\ref{theorem:lowerbound}.) 
The upper bounds we established in Theorems \ref{theorem:edgesize}-\ref{theorem:final} and Corollary~\ref{cor:concentr} involve not only $H(X)$ but also $\mu$, which is the expected number of infections. To shed light on the extent to which our proposed algorithm is order-optimal and our analytical upper bounds are tight, we seek to answer the following questions. (i) Is entropy always a tight lower bound in the generalized group testing problem with general correlation?  (ii)  Does $\mu$ play a fundamental role (lower-bounding the number of tests) or should one hope to tighten it in our results? (iii) Is the upper bound established in Theorem~\ref{theorem:final} and/or its corresponding algorithm order-wise optimal?}

{In this section, we present a family of graphs in which entropy is not a tight lower bound in Section~\ref{sec:looseent}, provide examples for when $\mu$ serves as a lower bound and when it does not in Sections~\ref{sec:loosemu} and \ref{sec:tightmu}, and derive conditions under which Algorithm \ref{theorem:final}  is order-optimal for random hyper-graphs in Section~\ref{sec:randomgraph}}.

\subsection{When Entropy is a Loose Lower Bound}\label{sec:looseent}

By Theorem~\ref{theorem:lowerbound}, we know that $H(X)$ is a lower bound on the number of tests. {Consider the following example  from \cite[Theorem 4]{gonen2022group} with some slight modifications to make the setting probabilistic.}

\begin{Example}\label{app:examplehighprob}
Graph $G=(V,E)$ has $n$ nodes where each edge of size $n-1$ is included in $E$. To each edge $e$, assign a uniform probability mass $p(e) = \frac{1}{n}$.
Figure~\ref{fig:graphn} demonstrates such a graph with $n = 4$.
\end{Example}

In the above example, the entropy is $\log n$. What makes this example important for us is that any algorithm with a constant probability of success requires at least $\Omega(n)$ tests. This can be proved using a very similar argument as in \cite[Theorem 4]{gonen2022group}, and for the sake of completeness, we provide a short proof next.  What is significant here is that the gap between the entropy and the required number of tests is $O(n/\log n)$, which is much larger than expected.

\begin{claim}
\label{AppendixB}
    Any algorithm with a constant probability of success with the hypergraph in Example~\ref{app:examplehighprob} requires at least $\Omega(n)$ tests. 
\end{claim}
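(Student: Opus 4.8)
The plan is to argue that in this hypergraph, identifying the target edge $e^*$ is equivalent to identifying the single node $v^*$ that is \emph{not} in $e^*$ (since $|e^*| = n-1$ and the edges are in bijection with nodes via their complements). I will show that each test yields essentially no information unless it happens to be a singleton test on $v^*$, and therefore one needs $\Omega(n)$ tests to pin down $v^*$ with constant success probability.

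First I would observe the structure of any test $T\subseteq V$ and its noiseless outcome. If $|T|\geq 2$, then for every choice of $e^*$ (equivalently every choice of the excluded node $v^*$), $T$ contains at least one node of $e^*$, so the result is positive \emph{with probability $1$ regardless of $v^*$}; such a test is completely uninformative. If $|T| = 1$, say $T = \{v\}$, then the result is negative iff $v^* = v$ and positive otherwise; this test only distinguishes the hypothesis "$v^* = v$" from "$v^* \neq v$". If $T = \emptyset$ the result is always negative and carries no information. Hence, without loss of generality we may assume an algorithm only ever performs singleton tests (group tests of size $\geq 2$ can be discarded, as they never change the posterior), and each such test on $\{v\}$ either confirms $v^* = v$ (if negative) or rules out the single candidate $v$ (if positive).

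Next I would run the standard adversary / information-theoretic counting argument. Under the uniform prior $p(e) = 1/n$, $v^*$ is uniform over the $n$ nodes. An algorithm that performs $L$ singleton tests and never sees a negative result has, at that point, ruled out at most $L$ of the $n$ candidates, so its posterior is uniform over the remaining $n - L$ nodes; to then output the correct edge it must guess, succeeding with probability at most $\frac{1}{n-L}$ on that branch. More carefully: let $L$ be the (possibly random) number of tests; condition on the algorithm and on the event that it performs at most $L_0$ tests. The probability that $v^*$ coincides with one of the (at most $L_0$) singletons queried is at most $L_0/n$, and on the complementary event the algorithm has learned nothing that separates the $\geq n - L_0$ surviving candidates, so it is correct with probability at most $1/(n-L_0)$. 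Thus the overall success probability is at most $\frac{L_0}{n} + \frac{1}{n - L_0}$. For this to be bounded below by a constant $\rho > 0$ we need $L_0 = \Omega(n)$, which gives the claim. (If $L$ is random, apply Markov's inequality: if $\E[L] = o(n)$ then $L \leq n/2$ with probability $1 - o(1)$, and on that event the success probability is $o(1)$, a contradiction.)

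The main obstacle — and the only step needing care — is handling \emph{adaptivity} and \emph{randomization} cleanly: an adaptive algorithm chooses its next (singleton) test based on past outcomes, and past outcomes before the first negative test are all "positive", so the sequence of singletons queried is determined by the algorithm's internal randomness alone, independent of $v^*$, until the first negative result. This is the key point that makes the counting bound go through: conditioned on no negative result so far, the set of queried singletons is independent of $v^*$, so the chance that the next query hits $v^*$ is at most $1/(\text{number of surviving candidates})$. I would formalize this by a simple induction on the number of tests (a martingale/potential argument on the size of the surviving candidate set also works), and then conclude as above. The rest is routine.
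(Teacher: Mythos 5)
Your proof is correct and follows essentially the same route as the paper's: reduce to identifying the unique negative node, observe that only singleton tests are informative, and count that $k$ such tests succeed with probability only about $k/n$. Your version is slightly more careful than the paper's in explicitly handling adaptivity, randomized strategies, and the option of guessing at the end (the extra $1/(n-L_0)$ term), but these refinements do not change the argument's substance.
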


\begin{proof}
   Since one edge is a target edge and each edge consists of $n-1$ nodes, there is only one negative node and the rest of $n-1$ nodes are all positive. Thus, identifying the target edge is equivalent to identifying the negative node. Also, since exactly one node is negative, every test of size two or greater will yield positive results, rendering them redundant.  Consequently, the negative node can be identified only by individual testing until the negative node is located. Every node is equally likely to be the negative node. 
   Suppose $k$ tests are performed. Then, the probability that the target negative node is identified is $1-\frac{n-1}{n} \cdot \frac{n-2}{n-1} \cdots \frac{n-k}{n-k+1} = 1- \frac{n-k}{n} = k/n$. Thus,  $k = \Omega(n)$, for this probability to be constant.  

\end{proof}

{We can generalize the above example as follows:}
\begin{Example}\label{ex:62}
    Graph $G=(V,E)$ has $n$ nodes where at least $r$ fraction of all edges of size $d$ is included in $E$, for a constant $r \in (0,1]$. 
\end{Example}

The proof of Claim \ref{AppendixB} can be slightly generalized to prove the following: 

\begin{claim}
\label{app:ex62}
There is a distribution over the edges of Example~\ref{ex:62} such that $H(X)$ is $o(d)$, specifically  $O(\log d),$ and any algorithm performing on this distribution, achieving a constant probability of success, needs to perform $\Omega(d)$ tests.
\end{claim}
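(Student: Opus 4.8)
The plan is to reduce this to the situation of Claim~\ref{AppendixB} by restricting attention to edges that all share a common large ``core''. Concretely, I would fix a set $A \subseteq V$ of size $d-1$ and consider only those edges of size $d$ of the form $A \cup \{v\}$ for $v \in V \setminus A$; there are $n - d + 1$ such edges, and in the graph of Example~\ref{ex:62}, at least $r(n-d+1)$ of them lie in $E$ (this uses that at least an $r$ fraction of \emph{all} size-$d$ edges are present — one should check that an $r$ fraction can be forced to include many edges through a fixed $(d-1)$-core; if not, a mild averaging argument over the choice of the core $A$ gives one core through which $\Omega(r(n-d))$ of the ``$A\cup\{v\}$'' edges survive). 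Restricting the distribution $\mathcal{D}$ to be uniform on these surviving core-edges (and zero elsewhere) is a legitimate instance of the model. For this distribution the support has size $\Theta(n)$ if $n \gg d$, so $H(X) = \log \Theta(n)$; to make $H(X) = O(\log d)$ as claimed, I would instead keep only $\Theta(d)$ of the core-edges (say, $d$ of them, which is possible as long as $n - d + 1 \geq d$, i.e.\ $n \geq 2d-1$, so that at least $rd \geq \Theta(d)$ survive after the $r$-fraction restriction — adjust constants as needed), and put the uniform distribution on those. Then $H(X) = \log \Theta(d) = O(\log d) = o(d)$.

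Next I would run the argument of Claim~\ref{AppendixB} verbatim on this restricted instance. Every edge in the support contains the common core $A$ of size $d-1$, so in every realization at least $d-1$ specified nodes are infected; the only uncertainty is \emph{which} of the $\Theta(d)$ ``extra'' nodes $v$ is the one present. Any test $T$ that intersects $A$ (equivalently, any test containing at least one core node) is automatically positive regardless of the target edge, hence useless for distinguishing among the candidates; and any test disjoint from $A$ is positive iff it contains the target's extra node $v$. So the problem is exactly that of locating one special element among $\Theta(d)$ equally-likely candidates using membership queries, where — because all candidates are singletons and exactly one is ``marked'' — a query of two or more candidates is always positive and thus conveys no more than the union of singleton queries can. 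The same coupon-collector-style computation as in Claim~\ref{AppendixB} shows that after $k$ tests the success probability is at most $O(k/d)$ (the extra node is found only once it has been singled out, or all but one candidate has been ruled out), so a constant success probability forces $k = \Omega(d)$.

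The main obstacle I anticipate is the bookkeeping in the first paragraph: making sure that the ``at least $r$ fraction of all size-$d$ edges'' hypothesis actually guarantees enough edges through a single $(d-1)$-core, and simultaneously that one can thin the support down to $\Theta(d)$ edges without breaking the $r$-fraction structure of Example~\ref{ex:62} — strictly speaking, once we delete edges to get a clean instance we are choosing \emph{a} distribution $\mathcal{D}$ supported on a subset of $E$, which the claim explicitly permits (``there is a distribution over the edges of Example~\ref{ex:62} such that \dots''), so this is really just an existence statement and the averaging over cores is the only genuinely needed step. Everything after that is a direct transcription of Claim~\ref{AppendixB} with $n$ replaced by $\Theta(d)$ and $n-1$ replaced by $d-1$; the lower bound $H(X)$ from Theorem~\ref{theorem:lowerbound} is not even needed for the hardness direction, only the explicit entropy computation $H(X) = \log \Theta(d)$ for the ``$H(X)$ is $O(\log d)$'' half of the statement.
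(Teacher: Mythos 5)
There is a fatal flaw in your construction: it is the \emph{inverse} of the hard instance, and your instance is actually easy. In your setup every candidate edge is $A\cup\{v\}$ for a common core $A$ of size $d-1$, so the target edge contains all of $A$ plus exactly \emph{one} marked node among the $\Theta(d)$ candidate extensions; all other candidate nodes are \emph{negative}. Your assertion that ``a query of two or more candidates is always positive'' is therefore false: a test $T$ disjoint from $A$ is positive iff it happens to contain the single marked candidate. Such tests are highly informative --- plain binary search over the candidate set identifies the marked node, hence the target edge, in $\lceil \log_2 d\rceil$ adaptive tests with probability $1$. So your distribution does not force $\Omega(d)$ tests, and the lower bound collapses. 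The situation in Claim~\ref{AppendixB} that you are trying to transcribe is the mirror image: there each edge is the complement of a single node inside a small universe, so exactly one node is \emph{negative} in a sea of positives, and it is precisely that polarity which makes every test of size $\geq 2$ uninformative (it always contains a positive node) and forces individual testing.

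The paper's proof keeps that polarity. It takes a set $S$ of $d+1$ nodes and uses as support the size-$d$ subsets of $S$ (each obtained by deleting one node of $S$); on average $r(d+1)=\Theta(d)$ of these lie in $E$, and the uniform distribution on them has $H(X)=O(\log d)$. Under this distribution exactly one node of $S$ is negative and identifying the target edge is equivalent to locating that negative node among $\Theta(d)$ equally likely candidates, where any test containing two or more candidates is automatically positive; the coupon-collector count from Claim~\ref{AppendixB} then gives $\Omega(d)$. Your averaging argument over cores is fine as far as it goes (and, as a secondary point, it also needs $n-d=\Omega(d)$ to produce $\Theta(d)$ candidates, whereas the paper's construction only needs $n\geq d+1$), but the construction it feeds into must be the ``delete one node from a $(d+1)$-set'' family, not the ``add one node to a $(d-1)$-core'' family.
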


This again shows that the entropy is not a tight lower bound. We will extensively use Example \ref{ex:62}, Claim \ref{app:ex62} and its proof presented below to obtain several other lower bounds on the number of tests in this section. 

\begin{proof}
Consider a subset $S$ with $d+1$ nodes. It has $d+1$ subsets of nodes of size $d.$ On an average $r$ fraction of these belongs to $E.$ Thus the average number $E(S)$ is $r(d+1).$ 

Now let the distribution $\mathcal{D}$ be the uniform distribution over the edges in $S$ and 0 elsewhere. Thus, no algorithm needs to test nodes that are not in edges fully in $S.$ Next, for every edge in $S$, there are $d$ nodes in it, and there is exactly one node that is in $S$ but not in the edge; the edge is the target edge if that node is negative and all $d$ nodes in the edge are positive, the $d$ nodes in the edge cover rest of the nodes in $S$. Thus since there is exactly one target edge in $S$, there is exactly one node in $S$ that is negative and the rest of the nodes in $S$ are positive. The number of such potentially negative nodes equals the number of the edges in $S$, which equals $r(d+1)$ in expectation. Determining which edge is the target edge is equivalent to determining which of the candidate nodes is negative. Thus, an algorithm needs to only test the potentially negative nodes. Since only one of the candidate nodes is negative, testing two or more candidate nodes will give a positive outcome. Thus, the candidate nodes must be tested individually to determine which one is negative. Each of the candidate nodes is equally likely to be negative. Thus, arguing as  in the proof of Claim \ref{AppendixB}, 
any algorithm needs to perform $\Omega(m)$ tests for a constant probability of success, where $m$ is the number of potentially negative nodes. Note that $m$ is $r(d+1)$ in expectation. Considering $c$ as a constant, any algorithm needs to perform $\Omega(d)$ tests in expectation for a constant probability of success.  Also, $H(X) = O(\log d)$. 
\end{proof}

{Before proceeding further, suppose we run Algorithm~\ref{mainalg} on the above examples. It is worth noting that the algorithm bypasses the initial stage and proceeds to test nodes individually in line~\ref{alg:finaltest}, requiring $n$ tests in Example~\ref{app:examplehighprob} when $1/n < c < 0.5$ \footnote{To see this note that in Example~\ref{app:examplehighprob} each node belongs in all but one edge. Thus for each $v$, $w(V \setminus \{v\})$ is the weight of the only edge containing $v$ which is $1/n$. Thus,  $w(V \setminus \{v\}) < c < 1-c$  for each $v.$ And, the algorithm does not satisfy conditions in lines~\ref{alg:removebig},~\ref{alg:removesmall}, but satisfies the condition in line~\ref{alg:setnotfound} with $S = V$ for each $v.$ Thus, an empty set is tested in line~\ref{alg:setnotfound}, and testing an empty set is considered negative. Thus, every node is tested in line~\ref{alg:finaltest}.} and $d + 2$ tests in Example~\ref{ex:62} when $ 1/d < c < 0.5$ and under the distribution mentioned in the the proof of Claim~\ref{app:ex62}\footnote{Consider the probability distribution in the proof of Claim~\ref{app:ex62}) and $ 1/d < c < 0.5.$ We now argue that Algorithm~\ref{mainalg} on  example \ref{ex:62} bypasses the initial stage and proceeds to test nodes individually in line~\ref{alg:finaltest}, requiring $d+2$ tests overall. To see this note that each edge in $S$ has $d$ nodes, $S$ has $d+1$ nodes, thus each node in $S$ belongs in all but one edge in $S.$ Also, $q_v = 0$ for each node outside $S.$ Thus for each $v \in S$, $w(V \setminus \{v\})$ is the weight of the only edge containing $v$ which is $1/d$. Thus,  $w(V \setminus \{v\}) < c < 1-c$  for each $v\in S$. Also,  $w(V \setminus \{v\}) = 1 > 1-c$ for each $v\in V \setminus S.$ And, the algorithm does not satisfy conditions in line~\ref{alg:removebig} with any $v.$ It removes all nodes in $V \setminus S$ from $V$ in line~\ref{alg:removesmall}, and $S$ consists of the remaining nodes.  Now, the condition in line~\ref{alg:setnotfound} is satisfied and the nodes outside $S$ are tested together. These nodes test negative. Thus, every node in $S$ is tested in line~\ref{alg:finaltest}. Thus, there are $d+2$ tests in all.}. These examples highlight scenarios where conducting individual tests is optimal, leaving no room for { improving the theoretical guarantees in Theorem \ref{theorem:final} and the following corollaries.}
}

These examples raise the question of finding other lower bounds or identifying classes of graphs where entropy is not a tight lower bound. We generalize Example \ref{ex:62} for a larger family of graphs where $H(X)$ is a loose lower bound and at least $\mu$ tests are needed.

Consider  $n = |V|$, and let $m$, $d = |e|$ be given parameters. We build a graph with $n$ nodes and $|E| = m + O(d)$ edges where each edge has size $d$. Let the distribution $\mathcal{D}$ be uniform over the edges. Here, $H(X) = \log(m + O(d))$ and $\mu = d$. The graph is constructed in the following manner, where we start with an empty set for the edges and gradually add edges:
\begin{Example}
Graph $G = (V,E)$ and the distribution $\mathcal{D}$   are constructed as following:

\begin{enumerate}
    \item Initiate $E = \emptyset$.
    \item While ($|E| < m$): find a set $S \subseteq V$ with $|S| = d + 1$ such that {there is a subset of $S' \subset S$ of size $|S'| = d$ that is not currently in $E$. Add all subsets of $S' \subset S$ of size $|S'| = d$ to $E$, i.e.}
    
    $$E = E \bigcup\limits_{S' \subset S, |S'| = d}\{ S' \}.$$ 
    
\label{graphentropy:2}
    \item Set $\mathcal{D}$ to be the uniform distribution over $E$.
    
\end{enumerate}
    
\end{Example}

The first time that $|E|$ exceeds $m$,   $|E| \leq  m - 1 + d+ 1 = m+d.$ We claim that at least $\Omega(d)$ tests are needed to recover the infected set. Suppose $e^*$ is the sampled edge, so at some point in line~\ref{graphentropy:2} $e^*$ was added to $E$. Hence there is a set $S$ of size $d+1$ that contains $e^*$ and there are $d$ edges inside of $S$. Now even if the algorithm knows the set $S$, {using a similar argument as in \cite{gonen2022group} and proof of Claim~\ref{app:ex62}} for Example \ref{ex:62}, 
the algorithm still needs $\Omega(d)$ tests as only one node is negative and it is equally likely to be any of them.

Now if $\mu = d > \log (m+d) \geq H(X)$, then the entropy lower bound is loose and $\mu$ is an order-wise tight lower bound. This happens when $2^d > m + d$.

\subsection{When $\mu$ is not a Lower Bound}\label{sec:loosemu}
In all of the examples provided so far, either $\mu$ or $H(X)$ is a tight lower bound on the number of tests. One might conjecture that $\max (\mu, H(X))$ might be a lower bound on the number of tests. {The following example shows that $\mu$ is not necessarily a lower bound.}

\begin{Example}\label{ex:nested}
    
    {Consider a graph $G=(V,E)$ comprising $n$ nodes and the edges are of the form $e_i = \{v_1, \ldots, v_i\}$ and $p_{e_i} = 1/n$ for each $1\leq i \leq n$.}
\end{Example} 
\begin{figure}
    \centering
    \includegraphics[scale = .25]{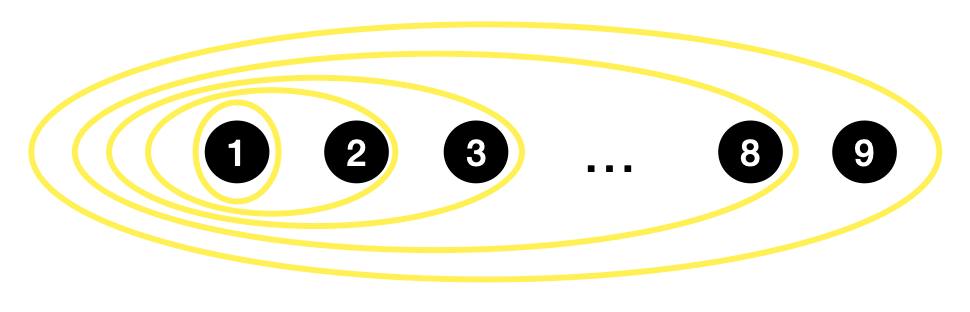}
    \caption{An instance of the graph from 
    Example~\ref{ex:nested} with $n = 9$ nodes.}
    \label{fig:linegraph}
\end{figure}

{Here, we have $\mu = \frac{n+1}{2} = \Omega(n)$. But by a binary search algorithm, we can find the target edge in $O(\log n)$. More precisely, we first test $v_{n/2}$. If the result is positive, it means that the target edge is among $e_i, i\geq n/2$ and the next step is to test $v_{3n/4}$. If the result is negative, then the target edge is among $e_i, i<n/2$, and the next step is to test node $i/4$. By continuing this, we can find the target edge in $O(\log n)$ tests.}

{Note that for the above example, $H(X)$ is a tight lower bound. The following example provides a graph where not only $\mu$ is not a lower bound, $H(X)$ is a loose lower-bound too.}

\begin{Example}\label{ex:biggraph}
    Consider a graph with vertex set $V = V_1\cup V_2 \cdots \cup V_n$ where $|V_i| = n$, so $|V| = n^2$ and we call each $V_i$ a community. Community $V_i$ is composed of nodes $V_{i,j}$ for $j=1,\ldots,n$.  The edge set is composed of ``small edges'' and ``large edges". The set of small edges is defined by $E_1 = \{e \mid \forall j,k: e = V_j \setminus \{V_{j,k} \}  \}$. So each small edge is contained in one of the $V_i$s and is obtained by removing exactly one node from a $V_i$. Thus, each $V_i$ has $n$ small edges in it, each of size $n-1$, there are therefore $n^2$ small edges. The set of large edges is defined  by $E_2 = \{e \mid \forall i: e = V \setminus V_i \}$. Thus, each large edge is obtained by removing all nodes in one $V_i$ from $V.$ Thus, each large edge has $n^2 - n = n(n-1)$ nodes in it, and there are $n$ large edges.  Here, $E = E_1 \cup E_2$. The distribution $\mathcal{D}$ puts half of the mass on each of $E_1, E_2$, and the edges in $E_1$ (respectively $E_2$) have the same probability among themselves. Figure~\ref{fig:graphnbig} shows an example with $n = 4$. 
\end{Example} 

\begin{figure}
    \centering
    \includegraphics[scale = .15]{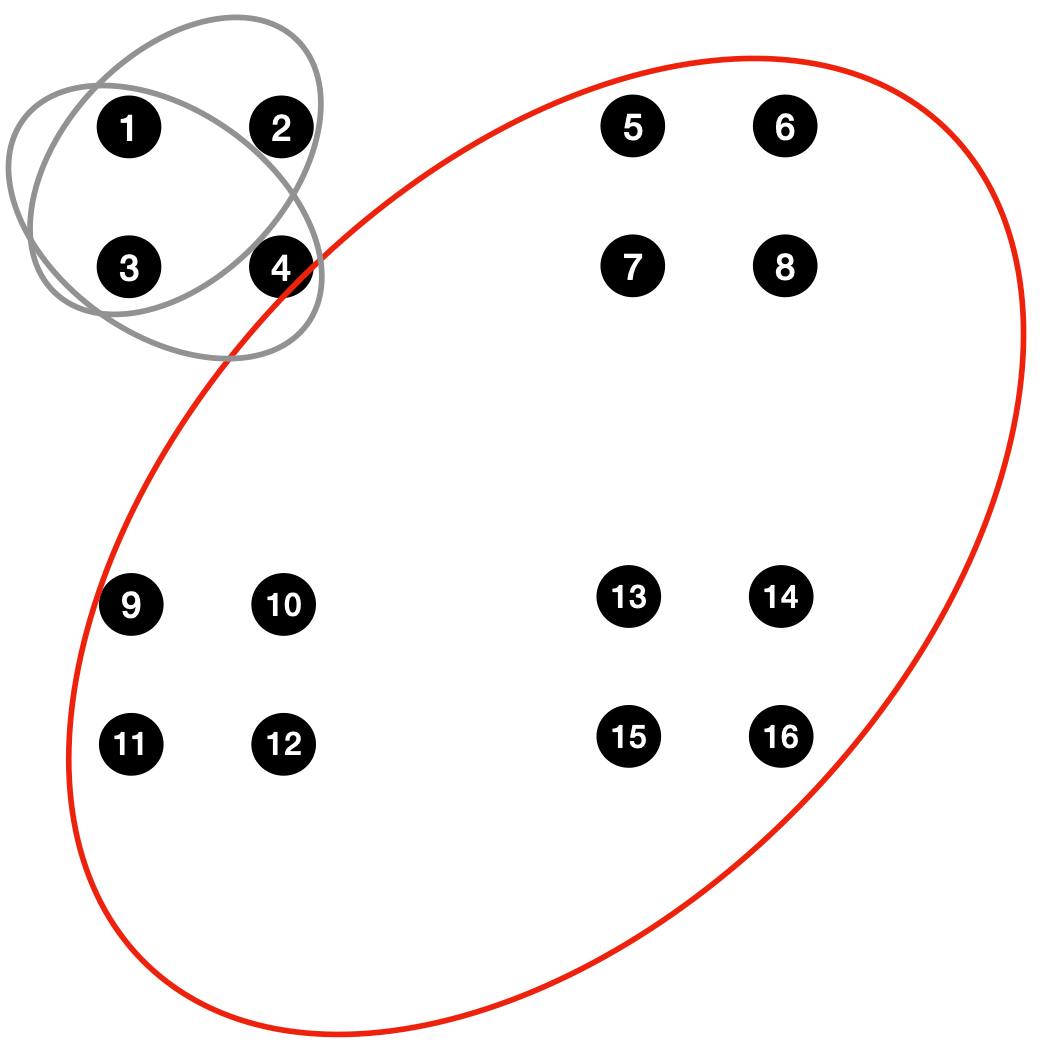}
    \caption{The graph in Example~\ref{ex:biggraph} with $n = 4$, 16 nodes, and 20 edges. Two of the``Small'' edges are shown in gray, one contains $\{1,2,3\}$, and the other contains $\{1,3,4\}$. An example of ``large'' edges is give with red border that contains $\{5,6,\ldots,15,16 \}$. There are 16 small edges and 4 large edges in total.}
    \label{fig:graphnbig}
\end{figure}

In the above example, $\mu = \Theta(n^2)$ and $H(X) = O(\log n)$.
{Following the proof of Claim \ref{app:ex62} for Example \ref{ex:62},   } it is not hard to see that $\Omega(n)$ tests are needed, because even after knowing whether the sampled edge is in $E_1$ or $E_2$, the algorithm needs to find the single negative node (in some community $V_i$) to recover the sampled edge. Thus, $H(X)$ is a loose lower bound. We next show that $O(n)$ tests are sufficient showing that $\mu$ is not a lower bound at all. 

To see that $O(n)$ tests are also sufficient, note that if the infected edge is in $E_2$, all nodes except those in one $V_i$ will be positive, and no node in that $V_i$ is positive. So, one can pick a single node from each $V_1, V_2, V_3, \ldots$ and test the candidates individually. If at least two of them are negative, it means the sampled edge is not in $E_2$, otherwise, it is in $E_2$. If the sampled edge is not in $E_2$, it is in $E_1$. This phase consumes at most $n$ tests. If the sampled edge is in $E_1$, then all nodes are negative, except for those in one $V_i$, and only one node is negative in the exceptional $V_i$ and the target edge is in the exceptional $V_i$. The exception can be determined by identifying $2$ random nodes in each $V_i$, and testing each such pair as a group. Only the pair for the exception will test positive, all others would test negative. Thus, the exception can be identified in additional $n$ tests. Now, the target edge can be identified by testing each node in the exception individually until the negative node is located, and all nodes in the exception except the negative node constitutes the target edge. This requires an additional $n$ tests. Thus, overall, if the target edge is in $E_1$, it can be identified in at most $3n$ tests.   If the target edge is in $E_2$, we can find the negative community using at most $n$ additional tests by individually testing one randomly selected node from each community. Each such test will be positive except for the candidate node from the negative community. Thus, if the target edge is in $E_2$, it can be identified in at most $2n$ tests.  In either case, $O(n)$ tests is sufficient.

The above examples show that the lower bound can not necessarily be obtained by simple properties of the graph, such as $\mu$ and entropy. Finding a tight lower bound constitutes a future direction.

\subsection{When $\mu$ is a Lower Bound for Achieving Zero Probability of Error} \label{sec:tightmu}
We have already established that $H(X)$ is a lower bound. For any graph, if we show that $\mu$ is also a lower bound, then Algorithm~\ref{mainalg} is order-wise optimal. One obvious example of such graphs is when $\mu \leq H(X)$, where Algorithm~\ref{mainalg} would be order-wise optimal. 

{In this section, we provide a family of graphs where $\mu$ is not necessarily smaller than the entropy, but we prove that $\mu$ is a lower bound when the target probability of error is 0. The example is as follows:}

\begin{Example}\label{ex:muopt}
Graph $G$ has this property that for every node $v$ and every realization of states of $V \setminus v$, there is a positive chance that $v$ is infected and there is a negative chance that $v$ is not infected. Thus, for every subset of nodes there is a positive probability that only the nodes in it are positive and the nodes outside are negative, hence, every subset of nodes is a target edge with a positive probability and therefore belongs in $E$.   
\end{Example}

\begin{claim}
Any algorithm on Example~\ref{ex:muopt} with zero probability of error needs at least $\mu$ tests in expectation.
\end{claim}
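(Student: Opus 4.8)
The claim is that, for the graph in Example~\ref{ex:muopt} (where \emph{every} subset of $V$ is an edge with positive probability), any zero‑error algorithm must perform at least $\mu$ tests in expectation. The key structural feature is that no test result ever conclusively rules anything out beyond the obvious: because every subset of $V$ is a candidate target edge, a positive test on a set $T$ only tells us $e^* \cap T \ne \emptyset$, and a negative test tells us $e^* \cap T = \emptyset$. In particular, as long as a node $v$ has not appeared in any \emph{negative} test and has not been forced positive by the test history, both ``$v \in e^*$'' and ``$v \notin e^*$'' remain consistent with strictly positive posterior probability — precisely because every completion of the remaining coordinates has positive probability under $\mathcal D$. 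So for \emph{zero}‑error recovery, the algorithm is obliged, for every node $v$, to eventually produce a test whose outcome pins down $X_v$ with certainty.

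First I would formalize this ``every node must be resolved'' statement. Fix a run of the (possibly randomized) algorithm on target edge $e^*$. Say node $v$ is \emph{resolved} by the test sequence $(T_1,r_1),\dots,(T_L,r_L)$ if $X_v$ is determined with certainty, i.e. the posterior $q_{v\mid (T_1,r_1),\dots,(T_L,r_L)}\in\{0,1\}$. I claim $v$ is resolved only if either (i) $v\in T_i$ for some negative test $T_i$ (then $q_v=0$), or (ii) $v$ appears in some positive test $T_i$ such that all other nodes of $T_i$ are already known‑negative at that point (then the positivity forces $v\in e^*$, so $q_v=1$). The proof of this claim is the crux and uses the defining property of Example~\ref{ex:muopt}: if neither (i) nor (ii) holds, I would exhibit two target edges $e^*$ and $e^*\triangle\{v\}$ both consistent with the observed transcript and both of positive probability, so the posterior on $v$ is strictly between $0$ and $1$ — contradiction with zero error, since the algorithm must output the same edge on both. (One must be slightly careful that flipping $v$ doesn't change any test outcome; that's exactly what conditions (i),(ii) failing buys us — $v$ is never the unique ``reason'' a test was positive, and $v$ is in no negative test.)

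Next I would count. Each negative test can resolve many nodes at once (all of $T_i$), but each \emph{positive} test resolves at most one \emph{new} node via mechanism (ii) — the node whose positivity that test newly certifies — and mechanism (i) resolves nothing on a positive test. So if the algorithm performs $N_-$ negative and $N_+$ positive tests with $L=N_-+N_+$, the number of resolved nodes is at most (sum of sizes of negative tests, restricted to the infected‑edge coordinates that matter) plus $N_+$. This still allows $L$ to be small if a single huge negative test resolves lots of nodes, so the counting must be done against the \emph{right} quantity: I would count resolved \emph{infected} nodes, i.e. nodes of $e^*$. A negative test contains no node of $e^*$ (else it would be positive), so negative tests resolve \emph{zero} nodes of $e^*$. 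Hence all $|e^*|$ nodes of $e^*$ must be resolved by positive tests of type (ii), giving $L\ge N_+\ge |e^*|$. Taking expectation over $e^*\sim\mathcal D$ (and over the algorithm's randomness) yields $\mathbb E[L]\ge \mathbb E[|e^*|]=\mu$, using \eqref{eq:mu}.

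\textbf{Main obstacle.} The delicate point — and where I'd spend the most care — is the ``resolution lemma'': showing that in Example~\ref{ex:muopt} a node $v\in e^*$ genuinely cannot be confirmed infected except by a positive test all of whose other members are already certified negative. One has to rule out subtler inference chains, e.g. confirming $X_v=1$ by combining several overlapping positive tests. The reason this can't happen is again the full support of $\mathcal D$: given any transcript, the set of consistent target edges is exactly $\{e : e\cap T_i\ne\emptyset \text{ for all positive } T_i,\ e\cap T_i=\emptyset\text{ for all negative }T_i\}$, and this is a nonempty ``combinatorial rectangle''‑like family closed under adding any node not hit by a negative test; so $v$ is forced into $e^*$ only if removing $v$ would violate the hitting condition for some positive $T_i$, i.e. $T_i\setminus\{v\}$ is already entirely known‑negative — which is exactly mechanism (ii). Making this monotonicity/closure argument airtight, and handling the order in which nodes get certified negative, is the real work; the counting step afterward is routine.
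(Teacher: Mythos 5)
Your proposal is correct and follows essentially the same route as the paper's proof: you argue that each node of $e^*$ can only be certified infected by a positive test containing no other node of $e^*$ (since companions from $e^*$ would make the test positive regardless, and full support of $\mathcal{D}$ blocks any indirect inference), so at least $|e^*|$ distinct tests are needed, and taking expectations gives $\mu$. Your "resolution lemma" is a more formalized version of the paper's one-line observation that a node's state cannot be inferred even from the states of all other nodes, but the underlying argument and counting are the same.
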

\begin{proof}
Let $e^*$ be the target edge. Then, for zero probability of error, $e^*$ has to be detected as the target edge. Thus, every node in $e^*$ has to be detected as positive. Thus, every node in $e^*$ has to be tested at least once alone or along with other nodes as its state can not be definitively inferred even if states of all other nodes in $V$ are known, from the condition in the Example. We next argue that every node in $e^*$ has to be tested at least once alone or in a group in which no other node in $e^*$ is present. This is because all other nodes in $e^*$
are positive, and if the considered node in $e^*$ is tested only in company of other positive nodes, the outcome of all such tests would be positive regardless of the state of the considered node. Thus, since each node in $e^*$ is tested at least once without any other node in $e^*$ being tested in the same test, the number of tests must be at least $|e^*|.$ The result follows because $E|e^*| = \mu.$
\end{proof}


\subsection{Random Hypergraphs and Regimes of Optimality}\label{sec:randomgraph}
{Consider $d$-regular random hypergraphs, (ie, where all edges have $d$ nodes, hence $\mu = d$), and each subset $S \subseteq V$ of size $|S| = d$ is included in it with probability $r$, independent of the rest, to make the graph $G_r = (V, E_r)$. 
We analyze two regimes of interest by examining the upper and lower bounds in each case. The first regime is when $r$  is large and the graph is dense (eg, $r$ is $\omega((d/n)^{1/d})$),  the other is when $r$ is small (eg, $O((d/n)^{\Theta(d)})$) and the graph is sparse. } 

\begin{proposition}
    Suppose graph $G_r = (V,E_r)$ is given with the parameter $ r \gg (\frac{d+1}{n})^{1/(d+1)}$. Then there is a distribution $\mathcal{D}$ on $E_r$ so that any algorithm with a constant probability of success needs $\Omega(\mu) = \Omega(d)$ number of tests on average.  
\end{proposition}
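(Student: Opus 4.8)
The plan is to reduce this to the hard instance already analyzed in Example~\ref{ex:62}. Concretely, I will show that with probability $1-o(1)$ over the draw of $G_r$ there is a set $S\subseteq V$ with $|S|=d+1$ all of whose $d+1$ subsets of size $d$ belong to $E_r$; then I take $\mathcal{D}$ to be the uniform distribution over exactly these $d+1$ edges. Since $\mathcal{D}$ is supported entirely inside $S$, recovering the target edge $e^*$ becomes precisely the problem treated in the proof of Claim~\ref{app:ex62} (equivalently Claim~\ref{AppendixB}), which forces $\Omega(d)$ tests; and $\mu=d$ here because every edge has size $d$.

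For the existence step I would partition all but at most $d$ vertices of $V$ into $N=\lfloor n/(d+1)\rfloor$ pairwise disjoint blocks $S_1,\dots,S_N$ of size $d+1$, and call a block \emph{good} if all $d+1$ of its $d$-element subsets lie in $E_r$. The $d+1$ size-$d$ subsets of a fixed block are distinct and each is included in $E_r$ independently with probability $r$, so $Pr[S_i\text{ good}]=r^{d+1}$; and disjoint blocks share no size-$d$ subset, so the events $\{S_i\text{ good}\}$ are mutually independent. Hence $Pr[\text{no block good}]=(1-r^{d+1})^{N}\le \exp(-r^{d+1}N)$. The hypothesis $r\gg(\tfrac{d+1}{n})^{1/(d+1)}$ is exactly $r^{d+1}\gg\tfrac{d+1}{n}$, and (since $r\le 1$) it forces $d+1=o(n)$, so eventually $N\ge \tfrac{n}{2(d+1)}$ and $r^{d+1}N\ge \tfrac12 r^{d+1}\tfrac{n}{d+1}\to\infty$; therefore a good block $S$ exists with probability $1-o(1)$.

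Given a good $S$, I set $\mathcal{D}$ uniform over $\{S\setminus\{v\}: v\in S\}\subseteq E_r$. Under $\mathcal{D}$ the target edge is one of these $d+1$ edges, so exactly one node of $S$ is negative (uniformly at random), every node outside $S$ is negative with probability $1$, and identifying $e^*$ is equivalent to locating the unique negative node among $d+1$ equally likely candidates. Any test containing two or more candidate nodes returns positive regardless of which candidate is negative, and tests of nodes outside $S$ are uninformative, so the negative node can be found only via individual candidate tests; the counting argument from the proof of Claim~\ref{AppendixB} then yields that any algorithm succeeding with constant probability makes $\Omega(d+1)=\Omega(d)=\Omega(\mu)$ tests in expectation. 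The only genuinely new ingredient beyond the earlier claims is the probabilistic existence of a good block, and the sole subtlety there---getting the independence needed for the exponential tail bound---is precisely why I work with disjoint blocks rather than all $(d+1)$-subsets of $V$, which avoids any second-moment computation.
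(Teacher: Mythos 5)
Your proposal is correct and follows essentially the same route as the paper's proof: partition $V$ into disjoint blocks of size $d+1$, use $(1-r^{d+1})^{\Theta(n/(d+1))}\le e^{-r^{d+1}\Theta(n/(d+1))}=o(1)$ under the hypothesis on $r$ to find a block all of whose $d$-subsets are edges, place the uniform distribution on those $d+1$ edges, and invoke the argument of Claim~\ref{app:ex62}. Your explicit note that disjoint blocks give independence of the ``good'' events is a small clarification of a step the paper leaves implicit, but the argument is the same.
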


The Proposition and its proof are based on the following insight. Since $r$ is large and the hypergraph $G_r$ is dense, enough number of subsets of size $d$ from $V$ make it to $E_r$, such that with a high probability there exists at least one set $S \subset V$ of size $d+1$ such that all subsets of size $d$ of $S$ constitute edges in $E_r.$ Now, Example \ref{ex:62}, Claim \ref{app:ex62} and its proof, show that for such a graph, at least $\Omega(d)$ tests are needed to identify the target edge with a constant probability of success, if the probability distribution for the target edge is uniform over the edges of size $d$
in $S$.

\begin{proof}
    Suppose that $r \gg (\frac{d+1}{n})^{1/(d+1)}$. We partition the nodes into subsets of size $d+1$, so there are $\frac{n}{d+1}$ sets. Each such subset has $d+1$ edges of size $d$ each. The probability that all the edges of size $d$  in a given subset are  included in $G_r$ is $r^{d+1}$, so with probability $(1-r^{d+1})^{\frac{n}{d+1}}$, no such subset has all its edges sampled. Note that  $(1-r^{d+1})^{\frac{n}{d+1}} \leq e^{-r^{d+1}{\frac{n}{d+1}}}.$ Denote $e^{-r^{d+1}{\frac{n}{d+1}}}$ as $\delta$. Hence, with a probability $1-\delta$ there is at least one subset of $d+1$ nodes, say $S$,  where every subset of size $d$ in it is an edge in $G_r$. We first show that $\delta$ is a small positive number.  Let $r' = r^{d+1} \frac{n}{d+1}.$ Then,  $ \delta = e^{-r^{d+1}{\frac{n}{d+1}}} = e^{-r'}$, and $\frac{r}{(\frac{d+1}{n})^{1/(d+1)}}=  r'.$  Since $ r \gg (\frac{d+1}{n})^{1/(d+1)}$, $r' \gg 1.$ Thus, $e^{-r'}$ is a small positive number.

     
    Consider that there is such a subset $S$ as described in the first paragraph. 
    {Call the edges inside $S$ of size $d$, $e_1, e_2, \ldots, e_{d+1}$, where $\forall i: e_i \in E_r$ and $e_i \subset S$ and $|S| = d+1$.} 
    {Now define the distribution $\mathcal{D}$ such that it only puts weights on $e_1, e_2, \ldots, e_{d+1}$, uniformly at random, i.e. $\forall i: \mathcal{D}(e_i) = \frac{1}{d+1}$.} 
    {Following the line of argument in  proof of Claim \ref{app:ex62} for Example \ref{ex:62}, we conclude that any algorithm with a constant probability of success needs  $ \Omega(d)$ number of tests in expectation. Note that this holds if such an $S$ exists which happens with probability $1-\delta$. If such a $S$ does not exist, even allowing that $0$ tests are needed to ensure constant probability of success, the overall expected number of tests is $(1-\delta) \Omega(d) $, which is also $\Omega(d).$ Also, $\mu = d$, so every such algorithm needs $\Omega(\mu)$ tests in expectation.
    } 
\end{proof}

{Note that Algorithm~\ref{mainalg} performs $O(H(X) + \mu) = O(H(X) + d)$ tests in the above regime by Theorem~\ref{theorem:edgesize}, and we already know that $H(X)$ is a lower bound (Theorem \ref{theorem:lowerbound}). The above proposition shows that $d$ is also a lower bound, in the asymptotic order sense,  proving that Algorithm~\ref{mainalg} is order-wise optimal in this regime.}

We now consider the sparse regime with small $r:$
\begin{proposition}
\label{Prop2}
    Suppose the random hypergraph $G_r = (E_r, V)$ is given with the parameter $r$ as $O((d/n)^{c_rd})$, where  $c_r = \frac{2}{c'(1-2c)},$ $c$ is the constant parameter defined in Algorithm~\ref{mainalg}, and $c'$ is any constant positive integer. Then, there is a modified version of Algorithm~\ref{mainalg} that finds the infected set using $O(H(X))$ tests on average when $d$ is  $o(n)$, and $H(X)$ is $\Omega(1)$.
\end{proposition}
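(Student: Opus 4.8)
The plan is to observe that the bound $O(H(X)+\mu)=O(H(X)+d)$ that Theorem~\ref{theorem:edgesize} gives here (applied with $f_1(n)=f_2(n)=d$, since all edges have size $d$ and $\mu=d$) already has the right $H(X)$ term, and that the extra $O(d)$ comes \emph{only} from the Stage~2 individual-testing phase of line~\ref{alg:finaltest} (bounded in Lemma~\ref{obs:L2} by $\tilde\mu/(1-2c)$). So the whole task is to show that in the sparse regime this final phase is essentially free. First I would show that whenever line~\ref{alg:finaltest} is reached, the working set $S$ is small, $|S|<\tfrac{d}{1-2c}$; next, that with probability $1-o(1)$ over $G_r$, \emph{every} vertex set of size at most $\tfrac{d}{1-2c}$ contains at most $c'$ edges of $E_r$; and finally, I would modify line~\ref{alg:finaltest} so that it resolves these $\le c'$ surviving candidates in $O(1)$ tests rather than individually testing up to $|S|$ vertices. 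Since every test performed before line~\ref{alg:finaltest} is informative (tests in line~\ref{alg:linetest} and positive tests in line~\ref{alg:setnotfound}, by Observation~\ref{observation:fractionremoved}), Corollary~\ref{corol:l1} bounds their expected number by $\tfrac{1}{\log\frac{1}{1-c}}H(X)+O(1)$, so adding the $O(1)$ tests from the modified Stage~2 and using $H(X)=\Omega(1)$ yields $O(H(X))$.

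For the bound on $|S|$: when line~\ref{alg:setnotfound} is executed, Proposition~\ref{prop:nodeweights}(iii) gives $\sum_{e\in E(S):\,v\in e} q_e>1-2c$ for every $v\in S$; the ensuing negative test removes $E\setminus E(S)$ and only scales the remaining weights up, so at the start of line~\ref{alg:finaltest} the surviving distribution is supported on $E(S)$ and still $q_v>1-2c$ for each $v\in S$. Because every $e\in E(S)$ lies inside $S$ and has $|e|=d$, summing gives $\sum_{v\in S}(1-q_v)=|S|-\sum_{e\in E(S)}q_e|e|=|S|-d$, while $\sum_{v\in S}(1-q_v)<2c|S|$; hence $|S|-d<2c|S|$, i.e.\ $|S|<\tfrac{d}{1-2c}$.

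The structural statement is where the real work lies, and I would prove it by a first-moment bound. Call a vertex set $S$ bad if $d<|S|=d+j\le\tfrac{d}{1-2c}$ and $S$ contains $c'+1$ distinct edges of $E_r$; the expected number of bad sets is at most $\sum_{j\ge1}\binom{n}{d+j}\binom{\binom{d+j}{j}}{c'+1}r^{c'+1}$. Using $\binom{n}{d+j}\le(en/d)^{d+j}$, $\binom{d+j}{j}\le2^{d+j}$, the range constraint $d+j\le\tfrac{d}{1-2c}$, and $r\le C(d/n)^{c_rd}$ with $c_r=\tfrac{2}{c'(1-2c)}$, each summand is at most $C^{c'+1}(2^{c'+1}e)^{d+j}(n/d)^{(d+j)-c_rd(c'+1)}$, whose exponent of $n/d$ is at most $\tfrac{d}{1-2c}-c_rd(c'+1)=-\tfrac{(c'+2)d}{c'(1-2c)}<0$; so each summand is $e^{O(d)}(n/d)^{-\Theta(d)}$, and since $d=o(n)$ forces $n/d\to\infty$, the sum over the $O(d)$ admissible values of $j$ is $o(1)$. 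I expect this exponent bookkeeping to be the main obstacle: one has to check that the $r^{c'+1}$ factor beats \emph{both} the $(en/d)^{|S|}$ coming from the choice of $S$ \emph{and} the $2^{O(d)}$ coming from the choice of a $(c'+1)$-tuple of edges inside it — which is exactly what the value $c_r=\tfrac{2}{c'(1-2c)}$ is calibrated to deliver, with a $\Theta(d)$ margin in the exponent.

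Finally, I would modify line~\ref{alg:finaltest} as follows: while at least two edges are still feasible, pick a vertex $v$ on which two feasible edges disagree and test $\{v\}$ by itself; the (noiseless) outcome reveals whether $v\in e^*$ and therefore kills at least one feasible candidate, so after at most $|E(S)|-1$ such single-vertex tests exactly one edge survives and is output. On the good event of the structural statement, $|E(S)|\le c'$ by the bound on $|S|$, so this phase costs at most $c'-1=O(1)$ tests and the total is $O(H(X))+O(1)=O(H(X))$ as $H(X)=\Omega(1)$. On the complementary event (probability $o(1)$, in fact decaying faster than $1/d$), the same subroutine still uses at most $|S|<\tfrac{d}{1-2c}=O(d)$ tests — once every vertex of $S$ has been queried individually the target edge is determined — so its contribution to the expectation is $o(1)\cdot O(d)=o(H(X))$. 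Putting these together, the modified algorithm always recovers $e^*$ and does $O(H(X))$ tests in expectation, which is the claim.
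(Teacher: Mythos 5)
Your proposal is correct and follows essentially the same route as the paper's proof: the same decomposition (Stage~1 costs $O(H(X))$ by Corollary~\ref{corol:l1}, and the $O(d)$ overhead lives entirely in line~\ref{alg:finaltest}), the same bound $|S|\le \frac{d}{1-2c}$ (the paper gets it directly from Lemma~\ref{obs:L2} with $\tilde\mu=d$; your counting identity $\sum_{v\in S}(1-q_v)=|S|-d$ is an equivalent derivation), the same first-moment union bound showing that with probability $1-\epsilon_n$ no set of size at most $\frac{d}{1-2c}$ contains $c'$ or more edges, and the same accounting $c'+\epsilon_n\cdot O(d)$ with $\epsilon_n d\to 0$. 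The one place you genuinely diverge is the modified final step: the paper tests the complement $S\setminus e$ of each surviving candidate edge and declares $e$ the target on the first negative outcome, which is correct only because $d$-regularity rules out the target being a proper subset of a surviving edge; you instead individually test a vertex on which two surviving candidates disagree, eliminating at least one candidate per test. Both use at most $\min(|E(S)|,|S|)=O(1)$ tests on the good event, but your elimination subroutine does not lean on regularity for its correctness, which is a small robustness gain at no cost.
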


Proposition \ref{Prop2} and its proof are based on the following insight. Algorithm~\ref{mainalg} uses $O(H(X) + d)$ tests in expectation for a $d$-regular hypergraph because it tests nodes of $S$ individually in line~\ref{alg:finaltest} and $S$ has $O(\mu)$ nodes in expectation just before line~\ref{alg:finaltest} and for a $d$-regular Hypergraph $\mu = d$. But for a sparse $d$-regular random hypergraph, with a high probability $S$ has only a constant number of edges of $E_r.$ This is because any set of cardinality $O(d)$ has $O((n/d)^{O(d)})$ subsets of cardinality $d$, and if $r$
is $O((d/n)^{O(d)})$, then only a constant number of these subsets are in $E_r$ in expectation.  Thus, $|E(S)|$ is constant in expectation. But not all sets of cardinality $O(d)$ has the target edge in it. Before its line~\ref{alg:finaltest}, Algorithm \ref{mainalg} identifies a set $S$ of cardinality $O(d )$ which has the target edge in it. Thus, instead of testing nodes in $S$ individually, the edges in $E(S)$ can be tested to determine which of them is the target edge. This would need a constant number of tests in expectation in line~\ref{alg:finaltest} instead of $O(d)$ tests therein, if there exists a way to directly test an edge. Note that in general it is not possible to conclude that an edge is a target edge by testing all its nodes together as a group, because the outcome will be positive even if one node in it is positive; thus one group test can not determine if all nodes in it are positive. But this can be accomplished by testing the complement of each edge in $E(S)$ utilizing the fact that $G_r$ is a $d-$regular graph. Thus, $d$ in the upper bound on the number of tests executed by Algorithm~\ref{mainalg} can be replaced by a constant and the Proposition follows as $H(X)$ is at least a constant. 

\begin{proof}
         Consider a modified version of Algorithm~\ref{mainalg} catering specifically to regular hypergraphs (ie, where all edges have equal sizes).  In line~\ref{alg:finaltest}, the modified version first checks  if the number of remaining edges in $S$ is fewer than $|S|$. If it holds, the algorithm tests each $S \setminus e$ for each remaining edge $e \in E(S)$ and outputs $e$ as the target edge the first time one such test is negative. If the condition does not hold, then it tests nodes in $S$ individually and outputs the positive nodes as the target edge.  Note that the target edge has to be one of the remaining edges in $S$ as edges are removed  only when it is established that they can not be the  target edge. Thus since all positive nodes are in the target edge, the test of $S \setminus e$ will be negative for the target edge $e.$ If $S \setminus e$ is negative,  either $e$ or a proper subset thereof is the target edge. But since all remaining edges are of equal size,  the target edge can not be a proper subset of any remaining edge. Hence, the modified algorithm concludes that $e$ is the target edge. Note that the modified algorithm needs at most $\min(|E(S)|, |S|)$ tests in the last step. 
         

        We now specifically consider the $d$-regular random hypergraph $G_r.$ Recall that the original Algorithm \ref{mainalg} performs at most $\frac{\tilde{\mu}}{1-2c}$ tests in line~\ref{alg:finaltest} (Lemma \ref{obs:L2}), where 
    $\tilde{\mu}$ is the expected number of infected nodes at the start of line~\ref{alg:finaltest}. Since each edge consists of $d$ nodes, so does the target edge, and thus, there are exactly $d$ infected nodes in the network, so $\tilde{\mu} = d.$ Also, the original Algorithm \ref{mainalg} performs exactly $|S|$ tests in line~\ref{alg:finaltest}. So $|S| \leq \frac{d}{1-2c}.$ Since the modified algorithm needs at most $\min(|E(S)|, |S|)$ tests in the last step, the expected number of tests in the last step is $O(d)$. So if $d = O(1)$, the expected number of tests in line~\ref{alg:finaltest} is also $O(1).$ We prove that even when  $d$ is $ \omega(1)$, the expected number of tests in line~\ref{alg:finaltest} is  $O(1).$ Then the expected number of tests by the modified algorithm is 1) $O(H(X))$ until before line~\ref{alg:finaltest} (from Corollary \ref{corol:l1} since  the modified and original versions differ only in line~\ref{alg:finaltest}) 2) at most $O(1)$ in line~\ref{alg:finaltest}. Since $H(X) = \Omega(1)$, the total number of tests is  $O(H(X))$ on average. The Proposition follows.

    In the rest of the proof we assume that $d$ is $\omega(1)$, and using the fact that every edge has $d$ nodes, we prove that with probability $1-\epsilon_n$, where $\lim_{n \rightarrow \infty} \epsilon_n = 0$, the  number of edges of $E_r$ in any subset of $V$ of cardinality at most $\frac{d}{1-2c}$ is less than any given constant positive integer $ c'$ when $r$ is small and $d $ is $o(n).$ Thus, with probability $1-\epsilon_n$,  $|E(S)| < c'$, when the original Algorithm \ref{mainalg} gets to line~\ref{alg:finaltest}, and the same holds for the modified version as they differ only in line~\ref{alg:finaltest}. Thus,  with probability $1-\epsilon_n$, $c'$ is an upper bound on the number of tests the modified algorithm needs in this last step. Since the number of tests in this last step is always upper bounded by $|S|$, and $|S| \leq \frac{d}{1-2c}$, if there are $c'$ or more edges in $S$, which happens
    with probability at most $\epsilon_n$, we consider the upper bound to be  $\frac{d}{1-2c}$ instead.  Thus, the expected number of tests in this final step is at most 
    $c' + \epsilon_n (d/(1-2c)).$  We will prove that $\lim_{n \rightarrow \infty} \epsilon_n d = 0$ using the fact that $d$ is $o(n).$ Thus, since $c$ is a constant, in the limit that $n \rightarrow \infty$, the upper bound for the expected number of tests in line~\ref{alg:finaltest} is $c'$, which would complete the proof of the Proposition.

    We now prove that for any constant positive integer $c'$, with probability $1-\epsilon_n$, where $\lim_{n \rightarrow \infty} \epsilon_n = 0$, the  number of edges in any subset of $V$ of cardinality at most $\frac{d}{1-2c}$ is less than  $c'$. 
    First note  that if there does not  exist $S$ of a certain cardinality with at least $c'$ edges, there does not exist $S$ with a lower cardinality with at least $c'$ edges, because a set is always a subset of some other set with higher cardinality. Thus, the event that there does not exist a set with at least $c'$ edges, is a subset of the event that there does not exist a set of higher cardinality with at least $c'$ edges. Thus, the event that there does not exist a set with cardinality less than or equal to $\frac{d}{1-2c}$ with at least $c'$ edges is the event that there does not exist a set with cardinality equal to $\lfloor \frac{d}{1-2c} \rfloor$ with at least $c'$ edges.  We therefore upper bound the probability of the latter. Let $\gamma = 1/(1-2c).$ First, there are ${\lfloor \gamma d \rfloor \choose d}$ subsets of size $d$ in a given set of cardinality $\lfloor \gamma d \rfloor.$ Also, ${\lfloor \gamma d \rfloor \choose d} \leq (\frac{e \lfloor \gamma d\rfloor}{d})^d \leq (e\gamma)^d$ as ${n \choose k} \leq (\frac{en}{k})^k$, for each $n, k \leq n.$ Now, the probability that at least $c'$ such subsets are in $E_r$ is less than $r^{c'} {{\lfloor \gamma d \rfloor \choose d} \choose c'} \leq (re(\gamma e)^d/c')^{c'}$. There are ${n \choose \lfloor \gamma d\rfloor }$ candidates for picking up a subset of size $\lfloor \gamma d \rfloor $ from $V$, and $${n \choose (\lfloor \gamma d \rfloor)} \leq (\frac{en}{\lfloor \gamma d \rfloor})^{\lfloor \gamma d \rfloor} \leq (\frac{en}{d})^{\lfloor \gamma d \rfloor} \leq (\frac{en}{d})^{ \gamma d }$$ , since $d \leq n$, $d$ is $\omega(1)$. Hence by union bound, the probability that there exists $S$ of size $\lfloor \gamma d \rfloor $ with at least $c'$ edges is at most 
    \begin{align*}    
    (\frac{en}{ d})^{\gamma d} (re(\gamma e)^d/c')^{c'} = (\gamma e)^{c'd}(\frac{en}{d})^{\gamma d}(re/c')^{c'} = (e/(1-2c))^{c'd}(\frac{en}{d})^{d/(1-2c)}(re/c')^{c'},
    \end{align*}

    replacing $\gamma$ with $1/(1-2c)$. Let $\kappa = (e/(1-2c))^{c'(1-2c)}$, and $\nu = (e/c')^{c'}$.
    Then, we get $
    (e/(1-2c))^{c'd}(\frac{en}{d})^{d/(1-2c)}(re/c')^{c'} = \nu
    (\kappa \frac{n}{d})^{d /(1-2c)}(r)^{c'}$. 
    When $r \leq (\frac{d}{\kappa n})^{\frac{2d}{c'(1-2c)}}$, the right hand side is at most $\nu  (\frac{d}{\kappa n})^{\frac{d}{c'(1-2c)}}= \epsilon_n $ where $\epsilon_n$ goes to zero as $n \rightarrow \infty$ since $d$ is $o(n)$ and $d$ is $\omega(1)$. Note that the upper bound we considered for $r$ is $ O((d/n)^{c_rd})$, where $c_r = \frac{2}{c'(1-2c)}$.

    We now prove that $\lim_{n \rightarrow \infty} \epsilon_n d = 0$. We have $$\epsilon_n d = \nu  (\frac{d}{\kappa n})^{\frac{d}{c'(1-2c)}} d =  \nu  (\frac{d^{1+\frac{c'(1-2c)}{d}}}{\kappa n})^{\frac{d}{c'(1-2c)}}.$$ 
    As  $d^{1/d} < 2$, $d^{\frac{c'(1-2c)}{d}} < 2^{\frac{c'(1-2c)}{d}} < 2^{c'},$ we get $ \epsilon_n d < \nu (\frac{2^{c'}d}{\kappa n})^{\frac{d}{c'(1-2c)}}$. Since $d$ is $o(n),$ $\lim_{n \rightarrow \infty} \epsilon_n d = 0$.

\end{proof}

The parameter $c_r$ controls the degree of sparsity of the hypergraph considered in Proposition \ref{Prop2}. Specifically,  $\lim_{n \rightarrow \infty} \mathbb{E}[|E_r|]$ can range from infinity to $0$, depending on the value of $c_r$.  To see this note that the number of subsets of $V$ of cardinality $d$, ie, $n \choose d$, is at least $(n/d)^d.$ The expected number of edges in $E_r$ is therefore at least $r(n/d)^d$, which can be $\Theta((n/d)^{(1-c_r)d})$ if $r$ is $\Theta((d/n)^{c_rd})$.  If $0 < c_r < 1,$
 $(n/d)^{(1-c_r)d} \rightarrow \infty$ as $n \rightarrow \infty $, since $d$ is $o(n).$ Thus $\lim_{n \rightarrow \infty} \mathbb{E}[|E_r|] = \infty.$ Since $c_r = \frac{2}{c'(1-2c)},$ $ 0 < c < 1/2$, for all  $c' > 2/(1-2c)$, $0 < c_r < 1.$ Recall that $|E(S)|$ is constant in expectation for the $S$ identified by Algorithm~\ref{mainalg} and its modification before their respective final steps, even though 
 $\lim_{n \rightarrow \infty}  \mathbb{E}[|E_r|] = \infty.$ This can be ensured only because $S$ has $O(d)$ (instead of $n$) nodes, and for each $S \subset V$, with $O(d)$ nodes, $|E(S)|$ is constant in expectation, and before their final steps Algorithm~\ref{mainalg} and its modification identify an $S$ with $O(d)$ nodes which has the target edge, and therefore in the last step the algorithm can solely focus on $S$ and the edges therein. But, if the hypergraph becomes even sparser, ie, $c_r$ higher, leading to lower $r$, then $|E_r|$ itself becomes a constant or goes to zero in expectation. For example, if $c_r > 1$,  (ie, $1 < c' < 2/(1-2c)$), then since there are $O((n/d)^{d})$ subsets of size $d$ of $V$, then $\mathbb{E}[E_r] = O((n/d)^d (d/n)^{c_r d}) = O((d/n)^{(c_r - 1)d}) \rightarrow 0$, as $n \rightarrow \infty$ (since $d$ is $o(n)$, $c_r > 1$). In other words,  the hypergraph is empty, and no node is infected. No algorithm is meaningful in this case.  

Note that the proof of Proposition \ref{Prop2} shows that if  $r$ is $O((d/n)^{c_rd})$, $c_r = \frac{2}{c'(1-2c)},$ $c$ is the constant parameter defined in Algorithm~\ref{mainalg}, then  $|E(S)| < c'$ with a high probability for each $S \subset V$, with $O(d)$ nodes. For large $c'$, ie, $c' > 2/(1-2c)$, the requirement is lax and can be met even when the $d$-regular random hypergraph has a large edge set, ie, $\lim_{n \rightarrow \infty} \mathbb{E}[|E_r|] = \infty.$ Thus, $r$ does not need to be very low in this case, nor $G_r$ extremely sparse, which corresponds to $0 < c_r < 1.$ But for small $c'$, ie, $c' < 2/(1-2c)$, the requirement is stringent and can be met only when the $d$-regular random hypergraph has a vanishing edge set, ie, $\lim_{n \rightarrow \infty} \mathbb{E}[|E_r|] = 0.$ Thus $G_r$ needs to be extremely sparse, actually nearly empty,  to start with, ie, $r$ very low, which happens for high $c_r$, ie, $c_r > 1.$
 
Finally, if $d$ is $\alpha n$, instead of $o(n)$,  then again the total number of subsets of size $d$ of $V$ is small enough so that the hypergraph is empty regardless of the value of $c_r.$ We delve into the details and justify various assumptions made in Proposition \ref{Prop2} next. 

\begin{remark}
    For the assumption in Proposition \ref{Prop2} that 
    $d$ is $o(n)$, 
    if we assume otherwise, i.e., $d$ is $\Theta(n)$, or $d = \alpha n$ for a constant $\alpha \leq 1$, then the random hypergraph becomes so sparse that no node is infected in the asymptotic limit on $n$. This can be seen as follows.  The total number of subsets of size $d$ is ${n \choose d} = {n \choose \alpha n} \leq (e/\alpha)^n$. Hence if $r$ is $O((d/n)^{c_r d})$ $r$ is $O((\frac{\alpha n}{n})^{(c_r \alpha n)}) = O((\alpha^{c_r \alpha})^{n})$. Since $0 < \alpha \leq 1$, as long as $c_r \geq 0$, $\alpha^{c_r \alpha} \leq 1 < e/\alpha$, then $\mathbb{E}(|E_r|)$ is at most $c_1^n$ for a constant $c_1 < 1$. Thus   $\mathbb{E}(|E_r|)$  goes to 0 as $n \rightarrow \infty$. Since $Pr(|E_r| \geq 1) \leq \mathbb{E}(|E_r|)$,  $Pr(|E_r| \geq 1)$ also goes to 0 as $n \rightarrow \infty$. Thus, the random hypergraph becomes empty  in the asymptotic limit on $n$. 

    The assumption $H(X) = \Omega(1)$ is natural, as if $H(X) = o(1)$, it means that the distribution for target edge is concentrated on a single edge and if the probability of error is larger than the probability that this edge is not the target edge, one can output this edge without any test. In other words, the assumption $H(X) = \Omega(1)$, limits considerations to instances where at least one test is needed to identify the target edge. 
\end{remark}

{Proposition \ref{Prop2} and Theorem \ref{theorem:lowerbound} together imply that the modified version of Algorithm~\ref{mainalg} is order-wise optimal for sparse $d$-regular random hypergraphs when $d$ is $o(n)$, and the distribution for the target edge is not excessively concentrated on one edge. Also, the analysis provided in Theorem~\ref{theorem:final} is the worst-case scenario and a slight modification of the algorithm renders the $\mu= d$ term in the upper bound of the theorem as redundant in many cases.}

\section{Towards Non-adaptive Group Testing}\label{sec:nonadapt}

{We have studied adaptive group testing so far. In Section \ref{sec:2} we  proposed SNAGT as a model in between adaptive and non-adaptive testing. To recall, in SNAGT, tests are designed beforehand and do not depend on prior test results, which is similar to non-adaptive. But, similar to adaptive, the tests are done sequentially and we know the result of a test just after performing it. The results of the previous tests can be used to determine if the tests should stop or continue.  We first propose a SNAGT strategy and  analyze its performance. Then we discuss how correlation between states of different nodes complicates the design of fully non-adaptive algorithms.}


We first obtain a SNAGT algorithm for building upon a non adaptive algorithm designed  in \cite{gonen2022group} for $d$- regular hypergraphs. The testing strategy in \cite{gonen2022group} is random and  each node of $V$ is picked to be in a test with probability $1/d$ independent of others. Every such test will be referred to as a random test in this section henceforth. Such tests are repeated and the selections in different tests are independent. After a predetermined number of tests, once the results are obtained, the edges that are not consistent with the result of any test, ie, those which can not be the target edge given the result, are removed. Specifically, if any test is positive then edges which do not have at least one node among the nodes tested in that test are removed; if the test is negative, edges which have at least one  node tested in it are removed. It has been shown in \cite{gonen2022group} that after $O(d (\log |E| + \alpha))$ tests, once the results are processed, the hypergraph contains only the target edge with probability $1 - e^{-\alpha}$. Setting $e^{-\alpha} = 1/(n)^2$ we get $\alpha =2 \log n$. Thus, after $O(d (\log |E| + \log n))$ tests, the hypergraph contains only the target edge with probability $1 - (1/n^2)$ which converges to $1$ as $n \rightarrow \infty.$ Now considering that nodes are selected randomly for tests exactly as in \cite{gonen2022group}, but that results of a test are available right after the test, we explore if the same probability of error can be obtained through a lower number of tests  by utilizing the 1) results to  potentially stop earlier (SNAGT framework) and 2) knowledge of probability distribution on edges specifying the probability that any given edge is a target edge.

{The core idea is as follows. We partition the edges into subgraphs so that the edge probabilities in the same subgraph are close to each other (ie, within a factor of $2$ of each other). This way, each subgraph constitutes a new group testing problem with nearly uniform probabilities where random testing, as discussed in the previous paragraph, is  efficient. This idea is similar to the non-adaptive construction in \cite{li2014group} with two major differences: (i) In \cite{li2014group}, nodes are partitioned, leading to several \textit{smaller} subproblems. In our settings, edges are partitioned and the resulting subproblems are not smaller problems-- each subproblem is still a  hypergraph on $n$ vertices. (ii) the construction in \cite{li2014group} is non-adaptive, and finding the set of infected nodes reduces to finding a set of infected nodes in each of the smaller subproblems. In our setting, a fully non-adaptive strategy requires one to search for the infected set in all of the subgraphs with a minimum number of tests that is governed by the worst-case subproblem and not the average. Nonetheless, a semi-non-adaptive strategy is more powerful because the decoder can perform sequential posterior update after each test result to find the infected edge among all the subproblems. Here, it turns out that the  minimum number of tests  is  governed by the average performance of the involved subproblems.} 

Note that a subgraph can have at most one target edge (as there is only one target edge in the system). Since the result is obtained right after a test, the result can be processed to eliminate edges that are not target edges. The target edge will never be eliminated  as it will be consistent with the result of every test. We refer to the subgraph with the target edge as the target subgraph, the rest as non-target subgraphs.  As edges that are not target edges are being progressively eliminated, and the target edge is never eliminated,  with a high probability, after a certain number of tests, the target subgraph will have only the target edge remaining. So, any  subgraph with only one remaining edge is shortlisted as a candidate subgraph for having the target edge, each such remaining edge is a candidate for being the target edge. But non-target subgraphs  may also be shortlisted as candidates per the above criterion, as they may be reduced to only one edge through the same process of elimination.   The number of candidates is at most the number of subgraphs. As tests continue, some of the non-target subgraphs  would become empty, because of elimination of their remaining edge (which is not a target edge).  If the number of tests exceed a certain threshold,  with a high probability, all non-target candidate subgraphs  become empty; then the non-empty candidate, if any, can be considered the target subgraph and its edge outputted as the target edge. If the number of subgraphs is not too large then the number of candidates won't be either and the number of tests by which only the target subgraph survives (with a high probability) will not be too large either. This is indeed the case if edges with probabilities within a factor of two of each other are grouped together into the same subgraph. But again if the number of subgraphs is small, at least some of them will have a large number of edges increasing the number of tests needed to render them as candidates. But, our partitioning ensures that the edges with high probabilities of being target edges are in subgraphs with few edges each, as there can only be a limited number of high probability edges and only edges with similar probability values are grouped together. And, a subgraph with few edges will be reduced to one edge in a few tests and therefore becomes a candidate in a few tests. In contrast, the subgraphs where edges have low probabilities can reduce to one edge  only in a relatively large number of tests as they can have large number of edges to start with. This quick identification of candidates among high probability edges reduce the overall number of tests, a phenomenon that suggests that high disparities between edge probabilities will yield bigger reductions in number of tests compared to that for the non adaptive algorithm in \cite{gonen2022group}. We will analyze this phenomenon later, but first start with the details of the partition. 

We partition the edges into subgraphs $G_1,G_2,\ldots, G_k$ where $G_i = (V,E_i)$ and $\forall e \in E_i: \frac{1}{2^i} \leq p(e) \leq \frac {1}{2^{i-1}}$. Denote $m_i = |E_i|$ as the number of edges in $G_i$, and $\mathbb{P}(G_i) = \sum_{e \in E_i} p(e)$ the probability that the target edge is in $G_i$. Clearly, $p(e)$ for edges in subgraph $G_i$ decreases with increase in $i$. For  $i > n \log n$, this probability becomes vanishingly small: $p(e) < 2^{-(n \log n)}$. Since there are at most $2^n$ edges overall, $m_i \leq 2^n$ for each $i$. Thus, for $i > n \log n$, $\mathbb{P}(G_i) \leq \frac{2^n}{2^{n \log n}}$, which converges to $0$ as $n \rightarrow \infty.$  We therefore consider that $k \leq n \log n.$ Since $\mathbb{P}(G_i) \leq 1$, and $p(e) \geq \frac{1}{2^i}$ for each $i$,  $m_i \leq 2^i$ for each $i$. Since $p(e) \leq 1/2^{i-1}$ for each $i,$ so $m_i \leq 2/p(e)$ for all $e \in E_i$. Thus, intuitively, size of a group increases with decrease in edge probabilities in the group; as such edges with highest $p(e)$ will be in subgraphs with fewest edges. 

Utilizing the above partition of edges, we design a SNAGT algorithm for $d$-regular hypergraphs next.  The principal challenge is in determining which candidate subgraph is the target subgraph in a reasonable number of tests. 1) We can not determine whether a candidate subgraph is the target subgraph from the timing of its candidature. This is because,  as mentioned before, non-target subgraphs may also become candidates, either before or after the target subgraph becomes.  Even a meaningful upper bound on the timing of candidature of the target subgraph is not known apriori. Note that it follows from  \cite[Theorem 4]{gonen2022group},  that if  $G_i$ is the target subgraph,  after $d (\log m_i + \alpha)$ random tests, $G_i$ will have only one edge left, the target edge, with a probability at least $1-e^{-\alpha}$, which is close to $1$ for large $\alpha.$ 
But, we do not know $i$ apriori, hence we do not know $m_i$ either.   2) For regular hypergraphs, testing the complement of any edge will reveal if it is the target edge or otherwise. But,  the SNAGT framework does not allow us to specifically test the remaining edge or its complement, for the candidate subgraphs. This is because a subgraph becomes a candidate only as a result of outcome of tests, so testing the remaining edge in it or its complement based on the criterion that it is the candidate, would be using the test results to determine who to test. Incidentally, even if we could rule out candidate subgraphs by testing the complement of their remaining edge, we would need up to $k \leq n \log n$ tests, which is very high. 3) As mentioned before, additional random tests can determine, with a high probability, if a candidate subgraph is the target subgraph, by eliminating its remaining edge if it is not. One can in fact prove that the probability that an edge that is not the target edge can survive $d \Omega (1)$ random tests,  approaches $0$ as $ n \rightarrow \infty.$  But, deploying up to $d \Omega(1)$ random tests after each subgraph becomes a candidate may need up to $(n \log n)(d \Omega(1))$ tests, which is  $\omega(dn)$, and the maximum number of tests that can be possibly needed by the non adaptive algorithm in \cite{gonen2022group} is $O(dn).$

We design an algorithm to realize the core idea while overcoming the above challenges. We consider $SG$ as a set of candidate target subgraphs. It is initialized as empty, and candidates are progressively added to $SG$ as they are identified. At any given time, $|SG| \leq k \leq n \log n.$ One can prove that up to $k$ non-target subgraphs empty out by $10 d \log n$ random tests with a probability that approaches $0$ as $ n \rightarrow \infty$; while the target subgraph never empties. Accordingly, we consider the first candidate   to enter $SG$ and survive  $10 d \log n$ further random tests after entering $SG$, as the target subgraph, and its sole remaining edge as the target edge. Thus, although non-target subgraphs can become candidates before the target subgraph, they would likely empty out in $10 d \log n$ tests after they become candidates, and therefore not be deemed the target subgraph. The overall algorithm proceeds as follows. 

\begin{enumerate}
    \item Set $SG = \phi$ and $\text{time}(G_j) = 0$ for $1 \leq j \leq n \log n$. If at any point, the total number of tests exceeds $2d n$, the algorithm outputs error. /*Here, the set $SG$ is the set of candidate subgraphs, ie, those with exactly one edge, and the set is updated after every test, $\text{time}(G_j)$ is the number of random tests  $G_j$ survives in $SG$. */
    \item While $\nexists i: \text{time}(G_i) \geq  10 d \log n$: perform a random test, update the edges.  For each $j \in SG$, update $\text{time}(G_j)$ as $\text{time}(G_j)+ 1$. Then, for all $k, l$ such that $|E_k| = 1$, $|E_l| = 0$,   $SG = (SG \cup G_k) \setminus G_l$ .   
    \label{line2}
    \item From the set $\{G_i | \text{time}(G_i) > 10 d \log n\}$, choose a $G_i$ at random and output its single edge as the target edge.
\end{enumerate}

Note that conforming to the SNAGT framework, the test designs are independent of previous test results (the nodes to be tested are always selected randomly from $V$ with the same probability distribution independent of the entire past and $V$ is not altered throughout the algorithm).  

We first prove that the above algorithm returns the target edge with a probability  that approaches $1$ as $n \rightarrow \infty.$ Note that once the  target subgraph joins $SG$ it will never be removed from it. So the only ways  the algorithm does not return a target edge is when at least one of the following two events happen: 1) the target subgraph does not enter $SG$ by $2 d (n - 5 \log n)$ tests 2) at least one non-target subgraph that enters $SG$ before or at the same test as the target subgraph remains candidate for $10 d \log n$ tests. Note that the first event implies that either the target subgraph does not become a candidate by the end of all tests or $10 d \log n $ tests are not conducted between when it becomes candidate and all tests end. We refer to these events as $err_1, err_2$ and show that their respective probabilities $\delta_n^1, \delta_n^2$ both approach $0$ as $n \rightarrow 1$. The claim made at the start of the paragraph therefore follows due to union bound. 

First, we upper bound the probability of  $err_1$ given that $G_i$ is the target subgraph. First note that $m_i \leq 2^n$ for each $i.$ Thus, $ d\log m_i + 2d\log n < 2d (n - 5 \log n)$, for all large $n$. Thus, 1) the algorithm conducts at least $ d\log m_i + 2d\log n$ tests, and 2) a necessary condition for   $err_1$ to happen is that one or more edges other than the target edge in $G_i$,  survive $d \log m_i + 2d\log n$ tests.   Now, following the analysis in \cite{gonen2022group}, the probability that one or more edges other than the target edge in $G_i$,  survive $d \log m_i + 2d\log n$ tests, is at most $1/n^2$. Since this upper bound does not depend on $i$, it also upper bounds the unconditional probability of $err_1.$ Thus, $\delta_n^1 \leq 1/n^2.$

We now upper bound the probability of $err_2.$ We will show that  the probability that a given non target edge, say  $e$,  survives $10 d \log n$ tests is at most $e^{-2.5 \log n}$. Note that fewer than $k$ non-target subgraphs enter $SG$ before or at the same time as the target subgraph.  Thus, together the non-target candidate subgraphs have fewer than $n \log n$ edges, none of which is the target edge. By union bound, the probability  that at least one of them  survives the $10 d \log n$ tests is upper bounded by $n (\log n) e^{-2.5 log n} \leq 1/n^{1.4}.$ Thus, $\delta_n^2 \leq (1/n^{1.4})$, and the result follows. We now prove the claim at the start of the paragraph. Note that  $e$ must have at least one negative node and it is detected as negative if it is selected in a test and none of the positive nodes are selected for that test. There are $d$ positive nodes overall since the target edge has $d$ nodes in a $d-$ regular hypergraph. Thus the probability that the negative node is detected in a test is $(1/d)(1-(1/d))^d.$ Thus in a random test  $e$ is ruled out as a target edge with a probability at least $(1/d)(1-(1/d))^d.$ The probability that  $e$ survives after $ 10 d \log n$ tests is therefore at most $$(1-(1/d)(1-1/d)^d)^{10 d \log n} \leq e^{-10 (1-1/d)^d \log n},$$  since $1-x \leq e^{-x}$ with $(1/d)(1-(1/d))^d$ as $x.$ Now, noting that $(1-1/d)^d \geq 1/4$ for $d \geq 2$, $ e^{-10 (1-1/d)^d \log n}
\leq e^{-2.5 \log n}$.  The claim has been proven then. 

We now seek to compute the expected number of tests for the above algorithm. We use the following insight. 
{We have shown in the previous paragraph that the target edge is returned with probability  $1-\delta_n^1 - \delta_n^2$. We now upper bound the number of tests in which it is returned with this probability. As noted earlier, given that $G_i$  is the target subgraph,  after $d \log m_i + 2 d\log n $  tests,  $G_i$ enters $SG$ (ie becomes candidate) with probability  $1-\delta_n^1$. Once it enters $SG$, the algorithm conducts exactly  $10 d \log n$ subsequent tests, because $G_i$ survives all of them and the algorithm terminates after at least one candidate survives $10 d \log n$ tests.  So with probability more than $1 - 1/n^2$, it performs $d \log m_i + 2 d\log n + 10d \log n $ tests conditioned on $G_i$ being the target subgraph, which happens with probability $ \mathbb{P}(G_i).$ 
Conditioned on $G_i$ being the target subgraph, with probability at most $\delta_n^1 \leq 1/n^2$, $G_i$ does not enter $SG$ by $d \log m_i + 2 d\log n $ tests, but even so the algorithm can use at most $2dn$  tests. Thus, the expected number of tests, $\mathbb{E}[T],$  is at most

    \[ \sum_{i} \mathbb{P}(G_i) [(1-\delta_n^1)(d \log(m_i) + 2d\log n + 10 d \log n ) + (1/n^2 * 2dn)] \]
    \[\leq  d\sum_{i} \left[(\sum_{e\in G_i} p(e)) \log(m_i)\right] + 12d \log n + O(d/n). \]

Recall that $m_i \leq 2/p(e)$ for all $e \in E_i$. Hence the RHS is

\begin{align}
    &\leq d \sum_i \sum_{e\in G_i} p(e) \log (2/p(e))  + 12 d\log n + O(d/n) \notag\\
    &= d \sum_{e \in E} p(e)(1 + \log(1/p(e)))  + 12d \log n + O(d/n)\notag \\
    &= d + d H(X)  + 12d \log n + O(d/n). \label{eq:proofent}
\end{align}

So, we get $O(d H(X) + d \log n)$ tests in expectation.

We have therefore proven the following theorem for the SNAGT framework for a $d-$regular hypergraph (considering that $u = d$ for any such hypergraph): 

\begin{theorem}\label{theorem:NATAS}
    Consider a graph $G$ with $I$ infected nodes in it. Let  $\Pr(I \geq u)$ be $o(1/n)$ for  $u \geq 2$. There is an algorithm that performs $O(u H(X) + u \log n)$ tests in expectation and recovers the infected nodes with an error probability that goes to $0$ as $n \rightarrow \infty$.
\end{theorem}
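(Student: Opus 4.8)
The plan is to show that the random-test / subgraph-partition strategy used above for $d$-regular hypergraphs carries over, with $d$ replaced by the stochastic bound $u$, once we rework the one place in the analysis that actually used regularity. Concretely: partition $E$ into subgraphs $G_1,\dots,G_k$ with $E_i=\{e:\ 2^{-i}\le p(e)<2^{-i+1}\}$, discard the subgraphs with $i>n\log n$ (they carry total mass at most $2^n2^{-n\log n}=o(1)$, so $e^*$ lies in a retained subgraph with probability $1-o(1)$ and $k\le n\log n$), and run the three-step procedure of Section~\ref{sec:nonadapt} verbatim except that every random test now includes each node of $V$ independently with probability $1/u$, the candidate-survival threshold is $10u\log n$, and the global budget is $2un$. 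As before, $SG$ tracks the subgraphs reduced to a single surviving edge and the output is the unique edge of the first subgraph that stays in $SG$ for $10u\log n$ consecutive tests; since the nodes tested are always drawn from $V$ by a fixed rule independent of the past, this conforms to SNAGT.

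The only genuinely new ingredient concerns how a non-target edge is eliminated. In the $d$-regular case one used that every non-target edge $e$ has a \emph{negative} node, which fails here since we may have $e\subsetneq e^*$. The resolution --- the crux of the proof --- is that a \emph{positive} test eliminates such an edge: if $e\subsetneq e^*$, any test that selects a node of $e^*\setminus e$ but no node of $e$ returns positive and rules $e$ out. Combining the two cases, a single random test eliminates a given non-target edge $e$ with probability at least $\tfrac1u(1-\tfrac1u)^{|e^*|}$ --- select one node of $e\setminus e^*$ and none of $e^*$ when $e\not\subseteq e^*$, and one node of $e^*\setminus e$ and none of $e$ (here $|e|\le|e^*|$) when $e\subsetneq e^*$. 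On the event $\{|e^*|<u\}$, which by hypothesis has probability $1-o(1/n)$, and using $(1-1/u)^u\ge1/4$ for $u\ge2$, this elimination probability is at least $1/(4u)$ for \emph{every} non-target edge, uniformly and regardless of edge sizes.

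With this uniform $\ge1/(4u)$ per-test elimination probability, the remaining arguments are exactly those of Section~\ref{sec:nonadapt}. A union bound over the at most $m_i$ non-target edges of the target subgraph $G_i$ shows that after $O(u\log m_i+u\log n)$ tests $G_i$ contains only $e^*$ and hence enters $SG$, with probability $1-O(1/n^2)$; a single non-target edge survives $10u\log n$ tests with probability at most $(1-1/(4u))^{10u\log n}\le e^{-2.5\log n}$, so a union bound over the fewer than $k\le n\log n$ non-target subgraphs that can enter $SG$ no later than $G_i$ bounds the probability of a spurious winner by $o(1/n)$. Together with the $o(1/n)$ from $\{|e^*|\ge u\}$ and the $O(1/n^2)$ from $G_i$ entering $SG$ late, the error probability is $o(1)\to0$. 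For the expected number of tests, conditioned on $G_i$ being the target subgraph the algorithm uses $O(u\log m_i+u\log n)$ tests with probability $1-o(1/n)$ and at most $2un$ otherwise, so
\[
\mathbb{E}[L]\ \le\ \sum_i \mathbb{P}(G_i)\,O\!\left(u\log m_i+u\log n\right)\ +\ o(1/n)\cdot 2un .
\]
Since $m_i<2/p(e)$ for every $e\in E_i$, $\sum_i \mathbb{P}(G_i)\log m_i\le\sum_{e} p(e)\left(1+\log\tfrac1{p(e)}\right)=1+H(X)$, while the second term is $o(u)=O(u\log n)$; hence $\mathbb{E}[L]=O(uH(X)+u\log n)$.

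The step I expect to be the main obstacle is the one flagged in the second paragraph: carefully re-auditing every use of regularity in the $d$-regular proof and checking that eliminating proper-subset edges by positive tests restores the same $\Theta(1/u)$ per-test elimination probability, \emph{uniformly over all non-target edges and all their sizes}, once one restricts to the event $|e^*|<u$. That restriction is exactly where the hypothesis $\Pr(I\ge u)=o(1/n)$ is used, and it is also why no constraint on the sizes of non-target edges (nor any explicit removal of large edges) is needed.
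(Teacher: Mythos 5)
Your proposal matches the paper's own proof (Appendix~C) essentially step for step: the same $1/u$-probability random tests, the same partition into subgraphs with the $10u\log n$ survival threshold and $2un$ budget, and---crucially---the same key observation that a non-target edge $e\subsetneq e^*$ is eliminated by a \emph{positive} test that hits $e^*\setminus e$ while missing $e$, which restores a uniform $\Theta(1/u)$ per-test elimination probability on the event $\{|e^*|<u\}$ and feeds into the identical union bounds and entropy computation. The only cosmetic difference is that the paper adds a preprocessing step discarding edges of size at least $u$ (whose failure is folded into an $o(1/n)$ error event), whereas you observe, correctly, that this removal is not actually needed for the elimination argument.
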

In Appendix~\ref{app:proof7} we prove the above theorem when the hypergraph is not regular.

Now,   \cite{gonen2022group} provides a non-adaptive combinatorial group-testing algorithm that finds the target edge in  a $d$-regular hypergraph in $O(d (\log |E| + \log n))$ tests, with probability at least  $1 - (1/n^2).$ Our algorithm is semi non-adaptive, and there is a probability distribution on edges for being the target edge. We obtain the same result as \cite{gonen2022group} in $O(d (H(X) + \log n) )$ tests in the SNAGT framework. When the probability over the edges is uniform, which maximizes the entropy, the number of tests for our algorithm is order-wise the same as \cite{gonen2022group}. It is already known that there is a close connection between combinatorial group testing and probabilistic group testing when the distribution is uniform. When the distribution is not uniform, $H(X)$ can be significantly smaller than $\log |E|$. Particularly, the testing is significantly improved when $|E|$ is $ \Omega(n)$ and the distribution over the edges is far from uniform, ie, $p(e)$ is large for some edges. Note that $m_i$ is small for a subgraph $i$ in which $p(e)$ is large. If such a subgraph is the target subgraph, it becomes a candidate with high probability in $O(d (\log m_i + \log n))$ tests and the algorithm terminates after conducting $10 d \log n$ additional tests. The total number of tests is small then if an edge with high probability of being a target edge is the target edge, because $m_i$ is small for subgraphs containing such edges.

The SNAGT algorithm needs larger, by a multiplicative factor of $d$ or $\log n$, number of tests than the adaptive algorithm. But it is easier to execute. Not only does it not need results of any test to determine who to test next, it does not need knowledge of exact probability that an edge is a target edge; it only needs to know for which $i$ is this probability between $2^{-i}$ and $2^{-(i-1)}.$

Finally,  in Theorem \ref{theorem:NATAS} we have assumed that the number of infected nodes is less than an upper bound $u$, equivalently  $|e^*|\leq u$, with high probability. This would for example hold if the number of infected nodes is concentrated around its mean, $\mu$, then $u = \mu + o(\mu)$. Incidentally, when the testing is done adaptively, we get the best results when the number of infections is concentrated around its mean $\mu$ (eg,  Corollary~\ref{cor:concentr},  Section~\ref{sec:analysis}).  Also, the testing algorithm,  which is described in Appendix~\ref{app:proof7}, and which has been used to prove Theorem \ref{theorem:NATAS}, does not need knowledge of  $\mu$, but only of a stochastic upper bound on the number of infected nodes.  Thus, the algorithm is intrinsically robust to errors in the estimation of the system parameters. Note that the upper bound on the expected number of tests for this Algorithm decreases with decrease in the stochastic upper bound on the number of infected nodes. Thus, if a tighter upper bound on the number of infected nodes can be estimated, then the guarantee on the performance of the Algorithm improves. Clearly, $ u \geq \mu.$ Thus, the guarantee on the expected number of tests decreases at most to  $O(\mu  H(X) + \mu \log n)$.

We have not been able to generalize the above technique to get general upper bounds for non-adaptive group testing. In our algorithm, the encoding stage (i.e., test design) is straightforward, consisting simply of random tests. However, due to the nature of our model, the decoding  is more sophisticated. Specifically, it updates posterior probabilities in real time, tracks potentially candidate subgraphs, and strategically determines when to halt the algorithm. These capabilities distinguish our semi-non-adaptive approach from the non-adaptive setting, where real-time decoding is not possible. 

%
%
Non-adaptive group testing has been extensively studied in the literature for independent group testing  \cite{li2014group,WangGG23,porat2008explicit}, and typically, {state of the art non-adaptive methods differ from the adaptive counterparts only by $O(\log n)$ or $O(d)$ (multiplicative factor)}. For instance, {  $O(d^2 \log n)$\cite{du2000combinatorial} is needed in the classic non-adaptive combinatorial setting (compared to $O(d \log n)$ tests for adaptive) or $O(\log n (H(X) + \mu))$\cite{li2014group} in non-adaptive independent group testing} (compared to $O(H(X) + \mu)$ tests for adaptive). Example~\ref{ex:nested} below shows that this is not necessarily true when states of nodes are correlated and that the gap between adaptive and non-adaptive methods can be large, {highlighting that the problem may be fundamentally more challenging than its counterpart in independent group testing settings.}

\begin{claim}
    {Any non-adaptive algorithm with probability error at most $\frac{k}{2n}, k \in \{0\} \cup \mathbb{N}$, needs to do at least $n-k-1$ tests in Example~\ref{ex:nested}.}
\end{claim}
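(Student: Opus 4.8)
Recall the graph in Example~\ref{ex:nested}: nodes $v_1,\dots,v_n$, nested edges $e_i=\{v_1,\dots,v_i\}$, each with probability $1/n$. A key structural observation is that the state vector $X$ is determined entirely by the index $i$ of the target edge, so the infected node set is always a prefix $\{v_1,\dots,v_i\}$. Since a test $T\subseteq V$ of $e_i$ returns positive iff $T\cap\{v_1,\dots,v_i\}\neq\emptyset$, and since this depends only on $\min\{j : v_j\in T\}$, a test $T$ gives exactly the same information as the singleton test $\{v_{m(T)}\}$ where $m(T)=\min\{j:v_j\in T\}$. So WLOG a non-adaptive algorithm picks a fixed set $J\subseteq\{1,\dots,n\}$ of "threshold indices" it effectively queries, and from the results it learns, for each $j\in J$, whether $i\geq j$ or $i<j$. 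That is, it learns which of the intervals cut out by $J\cup\{0,n\}$ contains $i$; inside an interval it cannot distinguish the edges. The plan is to make this precise and count.

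**The counting argument.** If $|J|=L$ (so $L$ tests are performed), then $J$ partitions $\{1,\dots,n\}$ into at most $L+1$ contiguity classes of indices that the algorithm cannot tell apart. For any class $C$ of size $|C|\ge 2$, conditioned on $i\in C$ the posterior over which $e_i$ is the target is uniform (all have equal prior $1/n$), so the best any decoder can do is guess one element of $C$, erring with conditional probability $1-1/|C|\ge 1/2$. Hence the contribution to $P_e$ from classes of size $\ge 2$ is at least $\tfrac12\cdot\Pr[i\in\bigcup\{C:|C|\ge2\}]=\tfrac12\cdot\frac{(\text{number of indices in non-singleton classes})}{n}$. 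If $L$ tests are done, there are at most $L$ classes that are singletons, hence at least $n-L$ indices lie in non-singleton classes, giving $P_e\ge \frac{n-L}{2n}$. Requiring $P_e\le \frac{k}{2n}$ forces $n-L\le k$, i.e. $L\ge n-k$. This is slightly stronger than the claimed $n-k-1$, so I would double check whether the claim's off-by-one is due to allowing the decoder a costless "free" guess or counting the empty test; in any case the bound $L\ge n-k-1$ follows a fortiori.

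**Filling in the reduction rigorously.** The one step needing care is the "WLOG a non-adaptive algorithm only uses singleton-equivalent tests" claim, since a non-adaptive algorithm may use randomized test sets. I would handle this as follows: fix the algorithm's randomness (the joint distribution over the test tuple $(T_1,\dots,T_L)$ and any randomness in the decoder); condition on a realization of all of it. Under this conditioning the tests $T_1,\dots,T_L$ are fixed sets, the observed result vector is the deterministic function $i\mapsto(\mathbf 1[i\ge m(T_1)],\dots,\mathbf 1[i\ge m(T_L)])$ of the target index, and the induced partition of $\{1,\dots,n\}$ has at most $L+1$ cells, at most $L$ of which are singletons. The conditional error probability given this realization is then $\ge \frac{n-L}{2n}$ by the paragraph above (the decoder, even though it sees the realization, still cannot beat uniform guessing within a cell). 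Averaging over the algorithm's randomness preserves the inequality. The main obstacle is purely bookkeeping: making sure the "at most $L$ singleton cells among $\le L+1$ cells" count is airtight (a partition of $\{1,\dots,n\}$ into $t$ contiguous intervals has exactly $t$ cells, at most $t-1$ internal "cut points"; with $L$ thresholds $t\le L+1$, so at most $L+1$ cells and hence at most $L+1$ singletons, but a short argument — e.g. the leftmost cell is a singleton only if its threshold is $1$ — trims this to exactly the bound needed to match $n-k-1$). I would present the clean $L\ge n-k$ version and remark that it implies the stated claim.
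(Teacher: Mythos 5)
Your proof takes essentially the same route as the paper's: reduce every test to the singleton test on its minimum-index node, observe that the resulting thresholds partition $\{1,\dots,n\}$ into indistinguishable contiguous cells, and charge each edge lying in a non-singleton cell an error of at least $\tfrac{1}{2}\cdot\tfrac{1}{n}$ (the paper phrases this via adjacent pairs $e_{i-1},e_i$ with $i\notin P$ rather than via the cell partition, but it is the same count). One correction, though: the ``clean'' bound $L\ge n-k$ that you propose to present is false, because $L$ thresholds in $\{2,\dots,n\}$ produce $L+1$ cells, \emph{all} of which can be singletons --- e.g.\ testing nodes $2,3,\dots,n$ individually ($L=n-1$ tests) identifies the target edge with zero error, contradicting $L\ge n$ for $k=0$. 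The correct count is that at least $n-(L+1)$ indices lie in non-singleton cells, giving $P_e\ge\frac{n-L-1}{2n}$ and hence exactly the stated $L\ge n-k-1$; so you should present that version rather than the purported strengthening.
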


\begin{proof}
Consider a non adaptive testing policy which conducts $l$ tests, the nodes tested in each test are given in $P = \{ T_1, T_2, \ldots, T_l \}$ where $T_i \subseteq V$ is the set of nodes tested in the $i$th test.  We argue that without loss of generality we can consider that each $T_i$ consists of only one node, otherwise consider that some $T_i$ consists of nodes $p, q$, where $q > p.$ Then if $q$ is positive, so is $p$, and removing $q$ from $T_i$ will yield the same result. Next, if $q$ is negative, removing $q$ from $T_i$ will yield the same result. Thus, $T_i$ will yield the same result if only its left-most node is tested. Thus, we consider that $P = \{i_1,  i_2, \ldots, i_l \}.$ Note that $l \leq n-1$, as node $1$ is positive, since otherwise all nodes are negative and there is no target edge, and thus, it is redundant to test node $1.$


If more than one edge is consistent with the test results,  the algorithm can do no better than to randomly (and uniformly) choose one of them as the target edge.  Let $i \not\in P$, $i \geq 2.$ Note that $e_{i-1} \subset e_i$, and $e_i \setminus e_{i-1} = \{i\}.$ A necessary condition for  $e_{i-1}$ to be determined as the target edge is that  $e_i$ be ruled out while $e_{i-1}$ is not ruled out. This can happen only if $i$ tests negative. If $e_{i-1}$ is the target edge, it can not be ruled out by any test, but then since $i$ is not tested, the status of node $i$ can not be determined, hence $e_i$ can not be ruled out either. 


Since $2 \leq i \leq n$, we considered edges in $\{ e_1, e_2, \ldots, e_{n-1}\}$ in the above argument. The above argument shows that at most for $l$ edges, among $e_1, \ldots, e_{n-1}$, the algorithm can have zero error given that any of these is the target edge, and for the rest $n-1-l$ edges there is at least a probability of $1/2$ for the algorithm to output a wrong edge given that one of them is the target edge.  Given that every edge is equally likely to be the target edge in Example~\ref{ex:nested}, 
the probability of error in this case is $\frac{n-1-l}{2n}$. Substituting $k$ with $n - l$ will prove the result.


\end{proof}

Note that there is an adaptive algorithm for the above example with $\log n$ tests. This shows that the gap between adaptive and non-adaptive algorithms can be as large as $n/\log n$. 


\section{Noisy Group Testing}\label{sec:noisy}


Up until now, we have assumed that the tests are noiseless, i.e. the test is positive iff there is at least one infection in the group that is tested. In this section, we extend our results to the case of noisy tests. {We consider the symmetric noisy model introduced in Section~\ref{sec:model}, where each test result is flipped with a probability $0 < \delta < 1/2$, independent of the other tests.}

If we repeat each test for at least $ \Theta(\frac{\log n}{(1-2\delta)^2})$ times, and then take the majority verdict, using a standard application of Chernoff bound, one can show that we get the correct result for all the tests, with a probability that goes to $1$ as $n$ goes to infinity. So we incur an additional {multiplicative} factor of $\Theta(\frac{ \log n}{(1-2\delta)^2})$ in the expected number of tests, and  the upper bound on the expected number of tests for Algorithm~\ref{mainalg} with each test repeated as above becomes $\Theta(\frac{\log n}{(1-2\delta)^2})O (H(X) + u)$ (Theorem~\ref{theorem:final} when $Pr[|e^*| > u]$ is $O(1/n)$). We show that Algorithm~\ref{mainalg} can be modified to improve the upper bound on the expected number of tests by updating the posterior  probability after each test to incorporate the impact of the noise in the tests and repeating the tests only in some steps: 

\begin{theorem}\label{theorem:noisyadapt}
Suppose that the result of each test is flipped with probability $\delta < 1/2$, and $ 1 - {(1-\delta)^{1-\delta} \delta^\delta} < c < 1/2. $ Also, $Pr[|e^*| > u] = O(1/n)$.  Then, there is an algorithm that performs $ \frac{1}{\log [1/(1-c)] - H(\delta)}H(X) + O(\frac{\log (g(n)u)}{(1-2\delta)^2} \frac{u}{(1-2c)}) + \Theta(\log^2 n/(1-2\delta)^2)$ tests in expectation and returns the correct edge $e^*$ with probability $1-O(1/\min(n, g(n))$. Here, $g(n)$ is any function that goes to infinity as $n \rightarrow \infty$.
\end{theorem}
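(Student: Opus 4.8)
The plan is to adapt Algorithm~\ref{mainalg} by replacing every single noiseless test with a randomized decision rule, and by modifying the posterior-update step (Definition~\ref{def:update}) so that rather than zeroing out a set of edges, we multiply the surviving mass by the appropriate likelihood factors $(1-\delta)$ or $\delta$. Concretely, in Stage~1, whenever the original algorithm would test $V\setminus S$ with $c\le w(S)\le 1-c$, the modified algorithm performs a \emph{single} noisy test of $V\setminus S$, observes the (possibly corrupted) result $r$, and updates each edge probability as a Bayesian posterior: an edge $e$ consistent with outcome $r$ (under the noiseless semantics of Lemma~\ref{lemma: update}) gets its mass multiplied by $(1-\delta)$, an inconsistent edge by $\delta$, and then we renormalize. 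The key observation is that this is exactly the correct posterior, and that the information-theoretic accounting of Lemma~\ref{lemma:cfrac} carries over: after such a test, the "effective" fraction of mass removed is governed by $\log\frac{1}{1-c}$ minus a penalty of $H(\delta)$ (binary entropy of $\delta$), because a noisy observation of a near-balanced test leaks roughly $1-H(\delta)$ bits instead of $1$ bit. This yields the first term $\frac{1}{\log[1/(1-c)]-H(\delta)}H(X)$, and the condition $1-(1-\delta)^{1-\delta}\delta^\delta < c$ is exactly what makes the denominator positive (note $(1-\delta)^{1-\delta}\delta^\delta = 2^{-H(\delta)}$, so the condition reads $1-c < 2^{-H(\delta)}$, i.e. $\log\frac{1}{1-c} > H(\delta)$).

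The second step handles Stage~2 (the individual high-probability tests in line~\ref{alg:finaltest}) and the group test in line~\ref{alg:setnotfound}. Here the Bayesian-update trick alone does not suffice to drive error to zero, because the number of Stage-2 tests is $O(u/(1-2c))$ and each contributes a constant error probability; so in Stage~2 we repeat each individual test $\Theta\!\big(\tfrac{\log(g(n)u)}{(1-2\delta)^2}\big)$ times and take the majority. By a Chernoff bound, each repeated test is correct with probability $1-O(1/(g(n)u))$, so a union bound over the $O(u)$ Stage-2 tests gives total Stage-2 error $O(1/g(n))$. This produces the middle term $O\!\big(\tfrac{\log(g(n)u)}{(1-2\delta)^2}\cdot\tfrac{u}{1-2c}\big)$. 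The final additive term $\Theta(\log^2 n/(1-2\delta)^2)$ comes from guarding the $O(\log n)$-many "structural" decisions where an exact (not merely near-optimal) answer is needed — in particular verifying, via repetition and majority, that an edge with posterior close to $1$ really is the target edge, and handling the at-most-one negative test of line~\ref{alg:setnotfound} whose misfire would be catastrophic; repeating each such test $\Theta(\log n/(1-2\delta)^2)$ times and union-bounding over $O(\log n)$ of them gives the stated bound and contributes $O(1/n)$ to the error. The $Pr[|e^*|>u]=O(1/n)$ hypothesis lets us (as in Corollary~\ref{cor:conctedge} and Theorem~\ref{theorem:final}) pre-delete edges larger than $u$, controlling $\tilde\mu \le u$ and contributing a further $O(1/n)$ to the error; combining all error contributions yields $1-O(1/\min(n,g(n)))$.

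The main obstacle I expect is making the Stage-1 analysis rigorous: with genuinely Bayesian (non-truncating) updates, an edge's posterior is a random walk on $[0,1]$ driven by log-likelihood increments of $\pm\log\frac{1-\delta}{\delta}$, and it is no longer literally true that "at least $c$ of the mass is removed each step." One must instead argue a drift/potential bound — e.g. track $\mathbb{E}[-\log q_{e^*}]$ and show it decreases by at least $\log\frac{1}{1-c}-H(\delta)$ in expectation per Stage-1 test (using that the test on $V\setminus S$ is near-balanced, $c\le w(S)\le 1-c$, so conditioned on the true outcome the likelihood ratio concentrates), then apply optional stopping / Wald to bound the expected number of Stage-1 tests by $\frac{H(X)}{\log[1/(1-c)]-H(\delta)}$. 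A secondary subtlety is that after noisy updates the posteriors never hit exactly $0$ or $1$, so the termination criterion and the set $W$ of "known-negative" nodes must be replaced by thresholds (posterior $<$ some tiny $\eta$ or $> 1-\eta$); one must check that the greedy set-finding in lines~\ref{alg:removebig}–\ref{alg:setnotfound} still works with these soft thresholds and that the analogues of Proposition~\ref{prop:nodeweights} (giving $q_v > 1-2c$ for $v\in S$ at line~\ref{alg:finaltest}) still hold up to $o(1)$ corrections, so that Lemma~\ref{obs:L2}'s bound $L_2 \le \tilde\mu/(1-2c)$ survives.
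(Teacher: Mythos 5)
Your proposal matches the paper's proof in all essentials: unrepeated, Bayesian-updated tests in Stage 1 (multiply consistent edges by $1-\delta$, inconsistent ones by $\delta$, renormalize) yielding the $\frac{1}{\log[1/(1-c)]-H(\delta)}H(X)$ term with exactly the same reading of the condition $1-(1-\delta)^{1-\delta}\delta^{\delta}<c$; $\Theta(\log n/(1-2\delta)^2)$-fold repetition with majority voting for the group test in line~\ref{alg:setnotfound} (which is revisited $O(\log n)$ times with high probability, giving the $\Theta(\log^2 n/(1-2\delta)^2)$ term); and $\Theta(\log(g(n)u)/(1-2\delta)^2)$-fold repetition for the $O(u/(1-2c))$ individual tests in line~\ref{alg:finaltest}, with the error contributions combining to $O(1/\min(n,g(n)))$ just as you describe. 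The Stage-1 obstacle you flag is resolved in the paper not by optional stopping but by a Chernoff bound on the number of unflipped test outcomes together with the observation that each test in line~\ref{alg:linetest} multiplies the target edge's posterior by at least $(1-\delta)/(1-c)$ or $\delta/(1-c)$ and that this posterior can never exceed $1$, which caps the number of such tests at $\frac{\log(1/p(e^*))}{\log c'}+\log^2 n$ with probability $1-1/n^2$; your soft-threshold concern is sidestepped because in the modified algorithm no edge is ever eliminated and the output is produced only in line~\ref{alg:finaltest}.
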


\begin{remark}
We first compare the guarantee on the number of tests in Theorem~\ref{theorem:noisyadapt} with when noise is handled by repeating every test. 
Consider the case that $\delta$ does not increase with $n$ and $u$ is $o(n).$ Then $\delta$ can be replaced with its initial value in the upper bounds, which is a constant. Then the additional factor $\frac{1}{\log 1/(1-c) - H(\delta)}$ for the first term in the upper bound on the number of tests in Theorem~\ref{theorem:noisyadapt}, which is a constant and is therefore $o(\log n).$ The additional coefficient for $u$ in Theorem~\ref{theorem:noisyadapt} is $\frac{\log [g(n)u]}{(1 - 2\delta)^2}$, which is therefore $o(\log n)$ since $g(n)$ can be any function that increases to infinity as $n$ increases to infinity. So the coefficient of both $H(X)$ and $u$ are asymptotically substantially smaller than the multiplicative factor for the case that tests are merely repeated. But the upper bound in Theorem~\ref{theorem:noisyadapt} has an additional 
additive term of $\Theta(\log^2 n)$ as compared to the bound when the tests are merely repeated. However, the additive term is not an impediment if $H(X)$ is $\Omega(\log n).$ To see this, first consider that $H(X)$ is both $\omega(\log n)$ and $O(\log^2 n).$ The upper bound on the expected number of tests in Theorem~\ref{theorem:noisyadapt} is $\Theta(\log^2 n)$, whereas that when tests are merely repeated is $\omega(\log^2 n).$ Now, consider that $H(X)$ is $\Omega(\log^2 n)$. The upper bound on the expected number of tests in Theorem~\ref{theorem:noisyadapt} is $\Theta(H(X)$, whereas that when tests are merely repeated is $\Theta(\log n) \Theta (H(X)).$ Now, consider the case that $H(X)$ is $\Theta (\log n).$ Then, the upper bounds in Theorem~\ref{theorem:noisyadapt} and when we merely repeat tests are both $\Theta(\log^2 n)$. 
The additional additive term in the upper bound in Theorem~\ref{theorem:noisyadapt} is significant when $H(X)$ is $o(\log n)$, as by repeating each test $\Theta((\log n)/(1-2\delta)^2)$ times, both terms in the total number of tests $\Theta(\frac{\log n}{(1-2\delta)^2}) O(H(X) + u)$ are $o(\log^2 n)$ Thus, the upper bound in Theorem~\ref{theorem:noisyadapt} is  $\Theta(\log^2 n)$ whereas that when we only repeat each test is $o(\log^2 n)$. In this regime, simple well-known testing algorithms  can provide asymptotically vanishing error probability using limited number of tests in the basic noiseless case \footnote{For example, when the entropy is a constant, there are only a few number of edges with significant probability.  Then the complement of each such edge can be tested in decreasing order of size of edges. For the first test that is negative, all edges that are not fully contained in the edge whose complement has been tested can be removed. The same process can be repeated for edges that are fully contained in this edge. The largest remaining edge at the end of this process is the target edge. The number of tests equals the number of edges identified initially as those with significant probability of being a target edge.}, thus even the basic scenario in this case may not be of significant interest any longer. 

The coefficient for $H(X)$ in Theorem~\ref{theorem:noisyadapt} is $\frac{1}{\log 1/(1-c) - H(\delta)}$. It is known that $\frac{1}{1-H(\delta)} H(X)$ is a lower bound on the number of tests and is an optimal multiplicative factor in independent group testing with symmetric noise. Thus, the $\log(1/(1-c))$ is appearing due to the strategies for accommodating correlations in states of nodes. 

The theorem is not contingent upon $\delta$ being a constant, and allows $\delta$ to depend on $n$. In practice, for large communities (large $n$), the error probability can be lower as the local government (eg in big cities as opposed to rural counties) often has more resources and can therefore acquire more reliable test kits. The upper bound on the expected number of tests in Theorem~\ref{theorem:noisyadapt} increases with increase in the probability of error $\delta$ between $0$ and $1/2.$ Specifically,  when $\delta \rightarrow 1/2$, $ {(1-\delta)^{1-\delta} \delta^\delta} \rightarrow 1/2. $  Thus, since  $1 - {(1-\delta)^{1-\delta} \delta^\delta} < c < 1/2$,  $c \rightarrow 1/2$ too.  Then the coefficient of both $H(X)$, which is $\frac{1}{\log 1/(1-c) - H(\delta)}$, and the coefficient of $u$, which is $\frac{\log [g(n)u]}{(1 - 2\delta)^2}$ go to infinity. Thus, there is a tradeoff between error rate and guarantee on the number of tests.

\end{remark}

We now prove Theorem~\ref{theorem:noisyadapt}. 

\begin{proof}
We modify Algorithm~\ref{mainalg} as follows: 1)  each test  in line~\ref{alg:setnotfound} and line~\ref{alg:finaltest} is repeated for $\Theta(\log n / (1-2\delta)^2)$ and  $\Theta(\frac{\log g[(n) u]}{(1-2\delta)^2})$ times respectively.  Each test is deemed positive if the majority is positive, it is negative otherwise (if a test is not repeated the majority is the outcome in the lone instance). The tests  in other lines in the algorithm (ie,  line~\ref{alg:linetest})  are however not repeated as above. The posterior probability of each edge is updated based on the majority outcome of each test to incorporate the impact of the probabilistic flips in results. We also never perform more than $n$ tests, and the algorithm halts if and when that threshold is breached, it can halt even earlier due to the embedded decisions.

\paragraph{\bf Updating the posterior: } We now describe how we update the posterior for a test. Consider an edge $e$ and a test $T \subseteq V$. Let $T^o, T^r$ be the deemed and real results of the test respectively, the outcome of the test reveals $T^o$, but not $T^r.$ Let $T^o = a$, $q(e) = Pr(e = e^* \mid T^o = a)$, where $a \in \{-1, +1\}.$ Thus, $$q(e) = Pr(e = e^* \mid T^o = a, T^r = a)  Pr(T^r = a \mid T^o = a) + Pr(e = e^* \mid T^o = a, T^r = b)  Pr(T^r = b \mid T^o = a).$$ But $[T^r = a \mid T^o = a]$ is equivalent to result not being flipped which has probability $1 - \delta$, and $[T^r = b \mid T^o = a]$ is equivalent to the result being flipped, which has probability $\delta$. Moreover, $Pr(e = e^* \mid T^o, T^r) = Pr(e = e^* \mid T^r)$. Thus, if $T^o = a$, $$q(e) = (1-\delta) Pr(e = e^* \mid T^r = a)  + \delta Pr(e = e^* \mid T^r = b)  . $$ 

Note that $p(e) = Pr(e = e^* \mid T^r = -1) * Pr(T^r = -1)  + Pr(e = e^* \mid T^r = +1) * Pr(T^r = +1).  $ Clearly then $\max(Pr(e = e^* \mid T^r = -1), Pr(e = e^* \mid T^r = +1)) > 0$, as otherwise $p(e) = 0.$ Since $ 0 < \delta < 1,$ it follows that $q(e) > 0,$ if $p(e) > 0.$ Thus, if the posterior for an edge is positive to start with, it remains positive throughout. So no edge is ever eliminated, and the posterior for every edge always remains below $1$  (otherwise the posterior for other edges would reduce to $0$). Thus, line~\ref{alg:finaltest} is the only place that the algorithm actually outputs the infected nodes. 

\paragraph{Upper bounding the number of tests in   line~\ref{alg:linetest}: } We first provide a lower bound on the posterior of the target edge, and then using the lower bound,  we provide an upper bound on the number of tests in line~\ref{alg:linetest}.

\noindent {\bf Lower bounding the posterior: }  First let $e \cap T = \emptyset$, $Pr(e = e^* \mid T^r = -1) = \frac{p(e)}{\sum_{f \in E'}p_f} $, where $E'$ is the set of edges that have no intersection with $T$, and $Pr(e = e^* \mid T^r = +1) = 0.$ Thus,  if $T^o$ is negative, $q(e) =  (1-\delta) \frac{p(e)}{\sum_{f \in E'}p_f} $. Note that  $\sum_{f \in E'}p_f \leq 1-c$:  since $\sum_{f \in E'}p_f = w(V \setminus T)$, and by the algorithm $T$ is tested before line~\ref{alg:setnotfound} if  $c \leq w(V \setminus T) \leq 1-c$. Thus, if $T^o$ is negative, $q(e) \geq (1-\delta) \frac{p(e)}{1-c}.$

Now let $e \cap T \neq \emptyset$. Then,  $Pr(e = e^* \mid T^r = -1) = 0$, and $Pr(e = e^* \mid T^r = +1) = \frac{p(e)}{\sum_{f \in E'}p_f}$, where $E'$ is the set of edges that have a nonempty intersection with $T$. Now, by the algorithm, $c \leq w(V \setminus T) \leq 1-c$. Also, $w(E') = 1 - w(V \setminus T)$, since $E'$ consists of all edges which have at least one node in $T$ and therefore all edges that are not contained in $V \setminus T.$ Thus, since $w(V \setminus T) \geq c$, $w(E') \leq 1-c$. Thus,   if $T^o$ is negative, $q(e) =  \delta \frac{p(e)}{\sum_{f \in E'}p_f} \geq \delta \frac{p(e)}{1-c} $. 

Proceeding similarly, if $T^o$ is positive, 1)  $q(e) \geq \delta \frac{p(e)}{1-c}$ if $e \cap T = \emptyset$, and 2)  $q(e) \geq (1-\delta) \frac{p(e)}{1-c}$ if $e \cap T \neq \emptyset.$

Since $1- \delta \geq \delta,$ $ q(e) \geq \delta \frac{p(e)}{1-c}$, in all instances. 

\noindent{\bf Upper bound: } Suppose $e^*$ is the target edge and $M$ tests are performed in line~\ref{alg:linetest}. Consider the tests for which 1) $e^* \cap T = \emptyset$ and $T^o$ is negative, or 2) $e^* \cap T \neq \emptyset$ and $T^o$ is positive. In each such test, since every node in $e^*$ is positive and no node outside $e^*$ is positive, the test result has not been flipped. In each such test $q(e^*) \geq (1-\delta) \frac{p(e^*)}{1-c}$ by the  analysis in the previous paragraphs. Let the number of tests that have not been flipped be $M_1.$ It follows that  after the $M$ tests, $q_{e^*} \geq p(e^*)(1-\delta)^{ M_1}\delta^{|M| - M_1}(\frac{1}{1-c})^M.$
    

     Since the flips in results are i.i.d., by the Chernoff bound, we have $Pr[M_1 < (1-\Delta)\nu] \leq e^{-\frac{\nu \Delta^2}{2}}$ where $\nu = \mathbf{E}[M] = (1-\delta)M$. Setting $\Delta =  2\sqrt{\frac{\log n}{\nu}}$, we get $$Pr[M_1 < \nu - 2\sqrt{\log (n)\nu}] < e^{2 \log n} = 1/n^2.$$ 
    Hence, with probability more than $1-1/n^2$, we have $M_1 \geq (1-\delta)M - \partial{M_1}$ where $\partial{M_1} = 2\sqrt{\log (n)(1-\delta)M}$. Let the number of tests $M$ be greater than $\log^2 n.$ Then, $\frac{\partial M_1}{M} \rightarrow 0$ as $n \rightarrow \infty$, thus $\partial M_1 $ is $o(M)$ and $\frac{\partial M_1}{M}$ is also $O(1/\sqrt{\log n}).$

    By the new posterior update obtained, we can now lower bound the posterior after $M$  tests with high probability. With probability $1 - 1/n^2$, we have 
    \begin{align*}    
    q_{e^*} &\geq p(e^*) (1-\delta)^{(1-\delta) M - \partial M } \delta^{\delta M + \partial M} (\frac{1}{1-c})^M \\
    &= p(e^*)((\frac{\delta}{1-\delta}^{\frac{\partial M_1}{M}}(1-\delta)^{1-\delta} \delta ^\delta \frac{1}{1-c})^M \\
    &> p(e^*)(\frac{\delta}{1-\delta}^{\frac{\partial M_1}{M}}c')^M,
    \end{align*}
    
    where $c' =  (1-\delta)^{1-\delta} \delta ^\delta \frac{1}{1-c} >1$, and $\lim_{n \rightarrow \infty} \frac{\delta}{1-\delta}^{\frac{\partial M_1}{M}} = 1$. Note that no edge ever gets a probability of $1$, but  after at most $\frac{\log (1/p(e^*))}{\log (c')} + \log^2 n $ tests, $q(e^*) > 1$  with probability at least $1 - (1/n^2).$ Thus, the algorithm must have fewer than  $\log^2 n + \frac{\log (1/p(e^*))}{\log (c')}$ tests in line~\ref{alg:linetest} with probability at least $1 - (1/n^2)$; with the remaining probability it must have $n$ or fewer tests. Thus, given that $e^*$ is the target edge the expected number of tests in line~\ref{alg:linetest} is at most $\log^2 n + \frac{\log (1/p(e^*))}{\log (c')} + (1/n).$

    Since each edge $e$ is the target edge with probability $p(e)$,  the expected number of  tests $M$ in line~\ref{alg:linetest} is at most $\frac{1}{\log c'}H(X) + \log^2 n + (1/n).$ 


    In the above analysis we have implicitly assumed that the algorithm conducts tests in line~\ref{alg:linetest} consecutively, but in reality it may move to line~\ref{alg:setnotfound} from line~\ref{alg:linetest}, and return to line~\ref{alg:linetest} every time the majority outcome in  line~\ref{alg:setnotfound} is positive. But we show later that whenever such return happens, the posterior for the target edge is multiplied by a factor of at least $(1-\delta)/(1-c)$ with a high probability, the probability that this multiplication does not happen over all returns is at most $1/n.$ If the posterior for the target edge is multiplied by a factor of at least $(1-\delta)/(c) > (1-\delta)/(1-c)$ each time the algorithm returns to line~\ref{alg:linetest}, the  bound in the previous paragraph applies for the expected total number of tests in line~\ref{alg:linetest}, even if those are conducted with intermittent detours to line~\ref{alg:setnotfound} in between. If the posterior is not multiplied as above one or more times, the total number of tests is still at most $n.$ Since the probability that the above multiplication does not happen even once is at most $1/n,$ the expected number of  tests in line~\ref{alg:linetest} is at most $\frac{1}{\log c'}H(X) + \log^2 n + (1/n) + 1$, which is $\frac{1}{\log (1/(1-c)) - H(\delta)}H(X) + \Theta(\log^2 n)$ (since $c' =  (1-\delta)^{1-\delta} \delta ^\delta \frac{1}{1-c}$). 
   
   \paragraph{Analysis for repetition of each test in line~\ref{alg:setnotfound} and line~\ref{alg:finaltest}:}
   Let a test be repeated $\ell$  times, where  $\ell$ is $\Theta(\frac{z}{(1-2\delta)^2})$. Let $X = X_1 + \cdots+X_{\ell}$ where $X_i = 1$ if the test is not flipped. Then $\nu = \mathbb{E}[X] = (1-\delta)\ell$ and for $\Delta = 1 - \frac{1}{2(1-\delta)}$, $Pr[X < \ell/2] = Pr[X < (1-\Delta)\nu] \leq e^{-\frac{\nu \Delta^2}{2}}$ by Chernoff bound, since $X_i$s are $i.i.d.$ Replacing $\nu$ and $\Delta$, we have $e^{-\frac{\nu \Delta^2}{2}} = e^{-\frac{(1-\delta)\ell (\frac{1-2\delta}{2(1-\delta)})^2}{2}}$, and when $\ell$ is $\Theta(\frac{z}{(1-2\delta)^2})$, this probability is less than $e^{-2z}$. Hence with probability at least $1 - e^{-2z}$ the majority returns the correct result.

    \paragraph{Upper bounding  the number of tests in line~\ref{alg:setnotfound}:} 
Let $z = \log n$ in the previous paragraph. 
Then, if every test is repeated $\ell$ times, with $\ell$ as $ \Theta(\frac{\log n}{(1-2\delta)^2})$, with probability at least $1 - 1/n^2$, the majority of the tests returns the correct result. The probability that a noiseless test becomes positive, is less than constant $c$. After repetition the verdict is positive either because the group has at least one infected node, which happens with probability at most $c$, or because the majority verdict is in error, which happens with probability at most $(1/n^2).$ By union bound therefore, the majority verdict is positive with probability at most $c + (1/n^2).$ 
The algorithm moves from line~\ref{alg:setnotfound} to line~\ref{alg:linetest} whenever the majority verdict is positive, and there is always a return to line~\ref{alg:setnotfound} after such a move. Thus, the probability that the algorithm returns to line~\ref{alg:setnotfound} at least $(\log n)/(1-c-(1/n^2))$ times is  
\begin{align*}
(c + (1/n^2))^{\log n / (1-c-(1/n^2))} &=
(1 - (1-c-(1/n^2))^{\log n/ (1-c-(1/n^2))}\\ &< ( e^{-(1-c-(1/n^2))})^{(\log n)/(1-c-(1/n^2))}\\
&= e^{-(1-c-(1/n^2))(\log n)/(1-c-(1/n^2))} =  1/n.
\end{align*}

Hence, with probability of at least $1-(1/n)$, the total number of tests in this line is upper bounded by any function that is $\Theta(\frac{\log^2 n}{(1-2\delta)^2(1-c-(1/n^2))})$. Note that $\Theta(\frac{\log^2 n}{(1-2\delta)^2(1-c-(1/n^2))})$ is also $\Theta(\frac{\log^2 n}{(1-2\delta)^2})$ since $c < 1/2$. Even when the upper bound does not hold, the number of tests in this line is at most $n.$ Thus, the expected number of test in this line is upper bounded by $n\times (1/n) + $ any function that is $\Theta(\frac{\log^2 n}{(1-2\delta)^2})$ for all large $n$. Thus, the expected number of tests is upper bounded by any function that is $\Theta(\frac{\log^2 n}{(1-2\delta)^2})$ for all large $n$.

We now analyze the posterior every time the algorithm returns to  line~\ref{alg:linetest} from line~\ref{alg:setnotfound}. Such return is possible only if the majority outcome is positive. Suppose the majority outcome is positive and for an edge $e$, $e \cap T \neq \phi$ where $T$ is the tested set. Then, $ Pr(e = e^* \mid T^r = -1) = 0,$ $Pr(e = e^* \mid T^r = +1) = \frac{p(e)}{\sum_{f \in E'}p_f} $, where $E'$ is the set of edges each of which has at least one node in $T.$ By the algorithm, $w(V \setminus T) \geq 1-c$, so, $\sum_{f \in E'}p_f \leq c.$ Hence $q(e) \geq (1-\delta) p(e)/c \geq (1-\delta)p(e)/(1-c)$ since $ c < 1-c.$ Now, if the majority outcome of testing $T$ is positive, and $e^*$ is the target edge, either  (1) $e^* \cap T \neq \phi$ or (2) the majority outcome is erroneous. In the first case, $q(e^*) \geq (1-\delta) p(e^*)/(1-c).$ The probability of the second event is at most $1/n^2$ for any majority outcome. Since there are at most $n$ tests overall, probability of the second to ever happen is at most $1/n$. Thus, with overall probability at least $1-(1/n)$, every time the algorithm returns to line~\ref{alg:linetest} from line~\ref{alg:setnotfound}, $q(e^*) \geq (1-\delta) p(e^*)/(1-c)$.


    \paragraph{Upper bounding the number of tests in line~\ref{alg:finaltest}:} From the statement of the theorem we are proving here, with probability at most $1-O(1/n)$, the target edge has fewer than $u$ nodes. In this line, we repeat each individual test $\Theta(\frac{\log (g(n)u)}{(1-2\delta)^2})$ times, where $g(n)$ can be any function such that $\lim_{n \rightarrow \infty} g(n) = \infty.$ Thus, in the analysis for repetition of tests, $z = \log (g(n)u).$ 
    From the analysis for repetition of tests, probability that the majority verdict is erroneous is less than $\frac{1}{2g(n)u}$. Note that $|S|$ nodes are tested in this line, for a set $S$ identified in the previous steps. By Lemma~\ref{obs:L2}, $|S| \leq \frac{\tilde{\mu}}{1-2c}$, where $\tilde{\mu}$ is the expected size of the target edge at this stage. If the target edge has $u$ or more nodes, it will still have at most $n$ nodes. Thus, $\tilde{\mu}$ is at most $u$ + $n \times o(1/n)$, which is $O(u).$ Thus, $|S|$ is $O(\frac{u}{1-2c}).$ Thus, 1) the overall number of tests in this line is at most $O(\frac{\log (g(n)u)}{(1-2\delta)^2} \frac{u}{1-2c})$, and 2) by union bound, the probability that the majority verdict is erroneous for at least $1$ node in $S$ is $O(1/g(n))$, which implies that with probability at least $ 1- O(1/g(n)) \rightarrow 1$, the algorithm recovers the states of all nodes it tests in this line.

\paragraph{Upper bounding the overall number of tests and the probability that the target edge is not the output:} Combining the upper bounds for the expected number of tests in lines \ref{alg:linetest}, \ref{alg:setnotfound} and ~\ref{alg:finaltest}, the expected number of overall tests can be upper bounded by  $\frac{1}{\log 1/(1-c) - H(\delta)}H(X) + O( \frac{\log (g(n)u)}{(1-2\delta)^2} \frac{\mu}{(1-2c)} ) + \Theta(\log^2 n /(1-2\delta)^2)$ (any functions that are $\Theta(\cdot)$ of the said functions would suffice). 

As argued before, line~\ref{alg:finaltest} is the only place that the algorithm actually outputs the infectious nodes. The algorithm comes to this line only when the majority verdict of testing $V \setminus S$ is negative in line~\ref{alg:setnotfound}. If the majority verdict is correct, then all positive nodes are in $S$. Once the algorithm reaches line~\ref{alg:finaltest}, it may not  determine the actual set of infected nodes  only if either the majority verdict of testing $V \setminus S$ is erroneous the last time the algorithm is in line~\ref{alg:setnotfound}, or the majority verdict is erroneous for one or more nodes of $S$ in line~\ref{alg:finaltest}. The probability of the first event is at most $1/n^2$, and that of the second is at most  $O(1/g(n))$. Thus, once the algorithm reaches this line, it may not  determine the actual set of infected nodes with probability at most $(1/n^2) + O(1/g(n))$ by union bound. The above probability is $O(1/\min(n^2, g(n)).$ The algorithm may also not output any infected node if it exhausts $n$ tests before it reaches or concludes line~\ref{alg:finaltest}. The probability of this event is $O(1/n) + O(1/g(n))$, which is $O(1/\min(n, g(n)).$ Thus, the probability that the algorithm does not output the target edge is $O(1/\min(n, g(n)).$

\end{proof}

The next theorem extends our upper bound for the SNAGT setting to the noisy setting. 

\begin{theorem}
    Suppose the result of each test is flipped with probability $\delta < 1/2$, and for a graph $G$, the probability distribution $\mathcal{D}$ is such that $\Pr_{e \sim \mathcal{D}}(|e| > u) = o(1/n)$ for any constant $\epsilon > 0$. Then, there is an algorithm in the SNAGT setting that performs $O(\frac{ \log [un]}{(1-2\delta)^2} [u \cdot H(X) + u \log n)])$ tests in expectation and recovers the infected nodes with an error probability that goes to $0$ as $n \rightarrow \infty$.
\end{theorem}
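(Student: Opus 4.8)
The plan is to take the noiseless SNAGT algorithm underlying Theorem~\ref{theorem:NATAS} (the general, not-necessarily-regular version of Appendix~\ref{app:proof7}) and make it noise-robust by a standard majority-vote repetition, while keeping the design oblivious so that the SNAGT constraint is preserved. Concretely, whenever the noiseless algorithm would perform one random test $T\subseteq V$, the new algorithm performs $R=\Theta\!\big(\log(un)/(1-2\delta)^2\big)$ independent physical repetitions of that same test $T$ and feeds the \emph{majority verdict} into the rest of the procedure: the subgraph partition $G_1,\dots,G_k$, the candidate set $SG$, the counters $\text{time}(G_j)$, the hard cap on the number of (logical) tests, and the stopping rule are all unchanged and operate on majority verdicts exactly as the noiseless algorithm operates on true results. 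Since the nodes placed in each physical test are still drawn from $V$ by the same oblivious random rule, independent of all past outcomes, and only the decision of when to stop uses (majority) results, the procedure still lies in the SNAGT framework.

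For correctness, recall that the noiseless SNAGT algorithm has an a priori, instance-independent cap of $O(un)$ on the number of logical tests (the analogue of the $2dn$ cap in the regular case), declaring error otherwise. Fix a logical test $T$: its $R$ physical repetitions are flipped independently with probability $\delta<1/2$, so by a Chernoff bound the majority verdict disagrees with the noiseless outcome of $T$ with probability at most $e^{-\Omega(R(1-2\delta)^2)}\le 1/(un)^{2}$ by the choice of $R$. A union bound over the at most $O(un)$ logical tests shows that, with probability $1-O(1/(un))=1-o(1)$, every majority verdict equals the corresponding true outcome. On this event the execution of the noisy-with-repetition algorithm is identical, step for step, to a run of the noiseless algorithm, so it recovers the target edge with the error probability guaranteed by Theorem~\ref{theorem:NATAS}, which tends to $0$; adding the $o(1)$ failure probability of the majority votes, the overall error probability still tends to $0$.

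For the test count, let $L$ be the (random) number of logical tests. On the ``all majority verdicts correct'' event, which has probability $1-o(1)$, $L$ has exactly the distribution of the number of tests of the noiseless algorithm, so its contribution to $\mathbb{E}[L]$ is $O(uH(X)+u\log n)$ by Theorem~\ref{theorem:NATAS}; on the complement, which has probability $o(1)$, the cap forces $L=O(un)$, contributing $o(1)\cdot O(un)=o(1)$. Hence $\mathbb{E}[L]=O(uH(X)+u\log n)$. Each logical test is $R$ physical tests and $R$ is deterministic, so the expected number of physical tests is $R\cdot\mathbb{E}[L]=\Theta\!\big(\tfrac{\log(un)}{(1-2\delta)^2}\big)\cdot O(uH(X)+u\log n)$, which is the claimed bound.

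The main point requiring care is the interplay between the union bound and the repetition count: the argument crucially uses the fact that the noiseless SNAGT algorithm has a hard, instance-independent cap of $O(un)$ on the number of logical tests, which is exactly what lets $R=\Theta(\log(un)/(1-2\delta)^2)$ repetitions push the per-test failure probability below $1/\mathrm{poly}(un)$ and survive the union bound over all logical tests. One must also verify SNAGT compliance (the repetitions never use results to decide \emph{who} to test, only whether to continue) and confirm that the $o(1)$-probability bad event is absorbed by the cap rather than causing unbounded test counts. A sharper estimate in the spirit of Theorem~\ref{theorem:noisyadapt} — updating posteriors to reflect residual noise and repeating tests only where needed — is possible but not required here, so the crude ``denoise everything'' approach suffices.
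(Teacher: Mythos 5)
Your proposal is correct and follows essentially the same route as the paper's proof sketch: repeat each random test $\Theta\!\big(\log(un)/(1-2\delta)^2\big)$ times (note the paper writes the denominator as $1-4\delta(1-\delta)$, which equals $(1-2\delta)^2$), take majority verdicts, union bound over the $O(un)$ logical tests to reduce to the noiseless setting of Theorem~\ref{theorem:NATAS}. Your write-up is in fact somewhat more careful than the paper's sketch, since you explicitly handle the expected test count on the bad event via the hard cap and verify SNAGT compliance of the repetitions.
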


\textbf{Proof Sketch:} We employ the same algorithm described in Theorem~\ref{theorem:NATAS}, but modify it to address noise by incorporating test repetition and majority voting. Specifically, each test is repeated $O(\frac{ \log [dn]}{1-4\delta(1-\delta)})$
    times, and the outcome of the test is determined by the majority vote among these repetitions.

    By choosing the number of repetitions as above, we ensure that the probability of a single test producing an incorrect majority outcome is at most $ \mathrm{err} \ll \frac{1}{un}.$
    Applying a union bound over all tests, the probability that \emph{any} test is incorrect tends to zero as $n$ grows.

    Consequently, with high probability, we may assume that all tests are effectively noiseless. Under this assumption, the situation reduces exactly to the noise-free setting of Theorem~\ref{theorem:NATAS}. Hence, the same performance guarantees and conclusions from that theorem apply, completing the proof.


\section{Conclusion and Future Work}\label{sec:last}

In this paper, we developed a novel framework for group testing with general statistical correlation among nodes.  Our model considers a probability distribution on all possible candidate subsets of nodes, captured by a hypergraph. We developed group testing algorithms that exploit correlation in a variety of settings, namely adaptive, semi-non-adaptive group testing, noisy etc. We provide constructive testing algorithms and upper bound the expected number of tests they need to attain asymptotically vanishing probabilities of error. We investigate special cases of interest, such as random regular hypergraphs, and fine tune our algorithms therein to obtain order-wise optimal number of tests for vanishing probabilities of error for sparse and dense graphs. We also obtain lower bounds on the expected number of tests needed for attaining vanishing probability of error in the general case and several special cases, which in turn reveal if the expected number of tests for our constructive algorithms  can be reduced through improved design. The modifications of the algorithms we provide have several desirable characteristics: 1) ability to schedule tests in advance without waiting for results 2) innate robustness to errors in estimation of parameters 3) tradeoffs between number of tests and accuracy of estimates and accuracy of tests. 



While we discussed conditions under which our algorithm and its corresponding performance upper bound are near-optimal, it remains open whether our algorithm has optimal performance in general. Future work will focus on providing tighter lower and upper bounds for both adaptive and semi-non-adaptive settings, exploring the role of the correlation structure. Additionally, we aim to explore the design of non-adaptive algorithms despite the challenges highlighted in this paper. This includes designing such algorithms for general graphs or specific families of graphs that could significantly reduce the number of tests needed. Furthermore, we will continue to investigate new correlation models that could expand the practical applicability of our framework, ultimately enhancing our ability to perform efficient group testing in diverse, real-world environments.

\section{Acknowledgment}
 The authors would like to acknwledge Dominic Olaguera-Delogu for helpful discussions and contributions to Example~\ref{ex:biggraph}.
The work of Hesam Nikpey and Shirin Saeedi Bidokhti was supported  by the NSF CAREER award 2047482. The work of Saswati Sarkar was supported by NSF grant 2008284, and SIGA. The authors gratefully acknowledge this support.

\bibliographystyle{alpha}
\bibliography{ref}

\newcommand{\etalchar}[1]{$^{#1}$}
\begin{thebibliography}{COGHKL20}

\bibitem[Aba18]{abasi2018error}
Hasan Abasi.
\newblock Error-tolerant non-adaptive learning of a hidden hypergraph.
\newblock In {\em 43rd International Symposium on Mathematical Foundations of Computer Science (MFCS 2018)}. Schloss Dagstuhl-Leibniz-Zentrum fuer Informatik, 2018.

\bibitem[ABJ14]{aldridge2014group}
Matthew Aldridge, Leonardo Baldassini, and Oliver Johnson.
\newblock Group testing algorithms: Bounds and simulations.
\newblock {\em IEEE Transactions on Information Theory}, 60(6):3671--3687, 2014.

\bibitem[ACO21]{ahn2021adaptive}
Surin Ahn, Wei-Ning Chen, and Ayfer Ozgur.
\newblock Adaptive group testing on networks with community structure.
\newblock {\em arXiv preprint arXiv:2101.02405}, 2021.

\bibitem[AC{\"O}23]{ahn2023adaptive}
Surin Ahn, Wei-Ning Chen, and Ayfer {\"O}zg{\"u}r.
\newblock Adaptive group testing on networks with community structure: The stochastic block model.
\newblock {\em IEEE Transactions on Information Theory}, 2023.

\bibitem[Aig86]{aigner1986search}
Martin Aigner.
\newblock Search problems on graphs.
\newblock {\em Discrete Applied Mathematics}, 14(3):215--230, 1986.

\bibitem[AJS16]{aldridge2016improved}
Matthew Aldridge, Oliver Johnson, and Jonathan Scarlett.
\newblock Improved group testing rates with constant column weight designs.
\newblock In {\em 2016 IEEE International Symposium on Information Theory (ISIT)}, pages 1381--1385. Ieee, 2016.

\bibitem[AJS19]{aldridge2019group}
Matthew Aldridge, Oliver Johnson, and Jonathan Scarlett.
\newblock Group testing: an information theory perspective.
\newblock {\em arXiv preprint arXiv:1902.06002}, 2019.

\bibitem[AN19]{abasi2019learning}
Hasan Abasi and Bshouty Nader.
\newblock On learning graphs with edge-detecting queries.
\newblock In {\em Algorithmic Learning Theory}, pages 3--30. PMLR, 2019.

\bibitem[AU21]{arasli2021graph}
Batuhan Arasli and Sennur Ulukus.
\newblock Graph and cluster formation based group testing.
\newblock In {\em 2021 IEEE International Symposium on Information Theory (ISIT)}, pages 1236--1241. IEEE, 2021.

\bibitem[AU23]{arasli2023group}
Batuhan Arasli and Sennur Ulukus.
\newblock Group testing with a graph infection spread model.
\newblock {\em Information}, 14(1):48, 2023.

\bibitem[AW17]{athari2017impacts}
Mir~Hadi Athari and Zhifang Wang.
\newblock Impacts of wind power uncertainty on grid vulnerability to cascading overload failures.
\newblock {\em IEEE Transactions on Sustainable Energy}, 9(1):128--137, 2017.

\bibitem[BBH{\etalchar{+}}14]{bernstein2014power}
Andrey Bernstein, Daniel Bienstock, David Hay, Meric Uzunoglu, and Gil Zussman.
\newblock Power grid vulnerability to geographically correlated failures—analysis and control implications.
\newblock In {\em IEEE INFOCOm 2014-IEEE conference on computer communications}, pages 2634--2642. IEEE, 2014.

\bibitem[BCN21]{bui2021improved}
Thach~V Bui, Mahdi Cheraghchi, and Thuc~D Nguyen.
\newblock Improved algorithms for non-adaptive group testing with consecutive positives.
\newblock In {\em 2021 IEEE International Symposium on Information Theory (ISIT)}, pages 1961--1966. IEEE, 2021.

\bibitem[BMR21]{brault2021group}
Vincent Brault, Bastien Mallein, and Jean-Fran{\c{c}}ois Rupprecht.
\newblock Group testing as a strategy for covid-19 epidemiological monitoring and community surveillance.
\newblock {\em PLoS computational biology}, 17(3):e1008726, 2021.

\bibitem[BS23]{bui2023non}
Thach~V Bui and Jonathan Scarlett.
\newblock Non-adaptive algorithms for threshold group testing with consecutive positives.
\newblock {\em Information and Inference: A Journal of the IMA}, 12(3):1173--1192, 2023.

\bibitem[BS24]{bui2024concomitant}
Thach~V Bui and Jonathan Scarlett.
\newblock Concomitant group testing.
\newblock {\em IEEE Transactions on Information Theory}, 2024.

\bibitem[CCJS11]{chan2011non}
Chun~Lam Chan, Pak~Hou Che, Sidharth Jaggi, and Venkatesh Saligrama.
\newblock Non-adaptive probabilistic group testing with noisy measurements: Near-optimal bounds with efficient algorithms.
\newblock In {\em 2011 49th Annual Allerton Conference on Communication, Control, and Computing (Allerton)}, pages 1832--1839. IEEE, 2011.

\bibitem[Chr80]{christen1980fibonaccian}
C~Christen.
\newblock A fibonaccian algorithm for the detection of two elements.
\newblock {\em Publ}, 341:D6pt, 1980.

\bibitem[CJBJ13]{cai2013grotesque}
Sheng Cai, Mohammad Jahangoshahi, Mayank Bakshi, and Sidharth Jaggi.
\newblock Grotesque: noisy group testing (quick and efficient).
\newblock In {\em 2013 51st Annual Allerton Conference on Communication, Control, and Computing (Allerton)}, pages 1234--1241. IEEE, 2013.

\bibitem[CJSA14]{chan2014non}
Chun~Lam Chan, Sidharth Jaggi, Venkatesh Saligrama, and Samar Agnihotri.
\newblock Non-adaptive group testing: Explicit bounds and novel algorithms.
\newblock {\em IEEE Transactions on Information Theory}, 60(5):3019--3035, 2014.

\bibitem[COGHKL20]{coja2020optimal}
Amin Coja-Oghlan, Oliver Gebhard, Max Hahn-Klimroth, and Philipp Loick.
\newblock Optimal group testing.
\newblock In {\em Conference on Learning Theory}, pages 1374--1388. PMLR, 2020.

\bibitem[Dan23]{daniels2023overview}
Geoff Daniels.
\newblock An overview of blood group genotyping.
\newblock {\em Annals of Blood}, 8, 2023.

\bibitem[DBGV05]{de2005optimal}
Annalisa De~Bonis, Leszek Gasieniec, and Ugo Vaccaro.
\newblock Optimal two-stage algorithms for group testing problems.
\newblock {\em SIAM Journal on Computing}, 34(5):1253--1270, 2005.

\bibitem[DHH00]{du2000combinatorial}
Dingzhu Du, Frank~K Hwang, and Frank Hwang.
\newblock {\em Combinatorial group testing and its applications}, volume~12.
\newblock World Scientific, 2000.

\bibitem[EFF85]{erdHos1985families}
Paul Erd{\H{o}}s, Peter Frankl, and Zolt{\'a}n F{\"u}redi.
\newblock Families of finite sets in which no set is covered by the union of r others.
\newblock {\em Israel J. Math}, 51(1-2):79--89, 1985.

\bibitem[GG20]{gollier2020group}
Christian Gollier and Olivier Gossner.
\newblock Group testing against covid-19.
\newblock Technical report, EconPol Policy Brief, 2020.

\bibitem[GLS22]{gonen2022group}
Mira Gonen, Michael Langberg, and Alex Sprintson.
\newblock Group testing on general set-systems.
\newblock In {\em 2022 IEEE International Symposium on Information Theory (ISIT)}, pages 874--879. IEEE, 2022.

\bibitem[GMSV92]{gargano1992improved}
Luisa Gargano, V~Montouri, G~Setaro, and Ugo Vaccaro.
\newblock An improved algorithm for quantitative group testing.
\newblock {\em Discrete applied mathematics}, 36(3):299--306, 1992.

\bibitem[HKM22]{hahn2022near}
Max Hahn-Klimroth and Noela M{\"u}ller.
\newblock Near optimal efficient decoding from pooled data.
\newblock In {\em Conference on Learning Theory}, pages 3395--3409. PMLR, 2022.

\bibitem[ICD{\"O}23]{ibarrondo2023grote}
Alberto Ibarrondo, Herv{\'e} Chabanne, Vincent Despiegel, and Melek {\"O}nen.
\newblock Grote: Group testing for privacy-preserving face identification.
\newblock In {\em Proceedings of the Thirteenth ACM Conference on Data and Application Security and Privacy}, pages 117--128, 2023.

\bibitem[JCM24]{jain2024sparsity}
Sarthak Jain, Martina Cardone, and Soheil Mohajer.
\newblock Sparsity-constrained community-based group testing.
\newblock {\em arXiv preprint arXiv:2403.12419}, 2024.

\bibitem[LCH{\etalchar{+}}14]{li2014group}
Tongxin Li, Chun~Lam Chan, Wenhao Huang, Tarik Kaced, and Sidharth Jaggi.
\newblock Group testing with prior statistics.
\newblock In {\em 2014 IEEE International Symposium on Information Theory}, pages 2346--2350. IEEE, 2014.

\bibitem[LSS22]{lau2022model}
Ivan Lau, Jonathan Scarlett, and Yang Sun.
\newblock Model-based and graph-based priors for group testing.
\newblock {\em IEEE Transactions on Signal Processing}, 70:6035--6050, 2022.

\bibitem[MS98]{malyutov1998jaynes}
MB~Malyutov and H~Sadaka.
\newblock Jaynes principle in testing active variables of linear model.
\newblock {\em Random Operators and Stochastic Equations}, 6:311--330, 1998.

\bibitem[NHL20]{narayanan2020accelerated}
Krishna~R Narayanan, Anoosheh Heidarzadeh, and Ramanan Laxminarayan.
\newblock On accelerated testing for covid-19 using group testing.
\newblock {\em arXiv preprint arXiv:2004.04785}, 2020.

\bibitem[NKC{\etalchar{+}}22]{nikpey2022group}
Hesam Nikpey, Jungyeol Kim, Xingran Chen, Saswati Sarkar, and Shirin~Saeedi Bidokhti.
\newblock Group testing with correlation under edge-faulty graphs.
\newblock {\em arXiv preprint arXiv:2202.02467}, 2022.

\bibitem[NSG{\etalchar{+}}21a]{nikolopoulos2021group2}
Pavlos Nikolopoulos, Sundara~Rajan Srinivasavaradhan, Tao Guo, Christina Fragouli, and Suhas Diggavi.
\newblock Group testing for connected communities.
\newblock In {\em International Conference on Artificial Intelligence and Statistics}, pages 2341--2349. PMLR, 2021.

\bibitem[NSG{\etalchar{+}}21b]{nikolopoulos2021group}
Pavlos Nikolopoulos, Sundara~Rajan Srinivasavaradhan, Tao Guo, Christina Fragouli, and Suhas Diggavi.
\newblock Group testing for overlapping communities.
\newblock In {\em ICC 2021-IEEE International Conference on Communications}, pages 1--7. IEEE, 2021.

\bibitem[NSG{\etalchar{+}}23]{nikolopoulos2023community}
Pavlos Nikolopoulos, Sundara~Rajan Srinivasavaradhan, Tao Guo, Christina Fragouli, and Suhas Diggavi.
\newblock Community-aware group testing.
\newblock {\em IEEE Transactions on Information Theory}, 2023.

\bibitem[PR08]{porat2008explicit}
Ely Porat and Amir Rothschild.
\newblock Explicit non-adaptive combinatorial group testing schemes.
\newblock In {\em International Colloquium on Automata, Languages, and Programming}, pages 748--759. Springer, 2008.

\bibitem[SC16]{scarlett2016phase}
Jonathan Scarlett and Volkan Cevher.
\newblock Phase transitions in group testing.
\newblock In {\em Proceedings of the twenty-seventh annual ACM-SIAM symposium on Discrete algorithms}, pages 40--53. SIAM, 2016.

\bibitem[SC18]{scarlett2018near}
Jonathan Scarlett and Volkan Cevher.
\newblock Near-optimal noisy group testing via separate decoding of items.
\newblock {\em IEEE Journal of Selected Topics in Signal Processing}, 12(5):902--915, 2018.

\bibitem[Sca18]{scarlett2018noisy}
Jonathan Scarlett.
\newblock Noisy adaptive group testing: Bounds and algorithms.
\newblock {\em IEEE Transactions on Information Theory}, 65(6):3646--3661, 2018.

\bibitem[Sca19]{scarlett2019efficient}
Jonathan Scarlett.
\newblock An efficient algorithm for capacity-approaching noisy adaptive group testing.
\newblock In {\em 2019 IEEE International Symposium on Information Theory (ISIT)}, pages 2679--2683. IEEE, 2019.

\bibitem[SG59]{sobel1959group}
Milton Sobel and Phyllis~A Groll.
\newblock Group testing to eliminate efficiently all defectives in a binomial sample.
\newblock {\em Bell System Technical Journal}, 38(5):1179--1252, 1959.

\bibitem[SJ24a]{soleymani2024non}
Mahdi Soleymani and Tara Javidi.
\newblock A non-adaptive algorithm for the quantitative group testing problem.
\newblock In {\em The Thirty Seventh Annual Conference on Learning Theory}, pages 4574--4592. PMLR, 2024.

\bibitem[SJ24b]{soleymani2024quantitative}
Mahdi Soleymani and Tara Javidi.
\newblock Quantitative group testing with tunable adaptation.
\newblock In {\em 2024 IEEE International Symposium on Information Theory (ISIT)}, pages 3017--3022. IEEE, 2024.

\bibitem[SMZ14]{soltan2014cascading}
Saleh Soltan, Dorian Mazauric, and Gil Zussman.
\newblock Cascading failures in power grids: analysis and algorithms.
\newblock In {\em Proceedings of the 5th international conference on Future energy systems}, pages 195--206, 2014.

\bibitem[WGG23]{WangGG23}
Hsin{-}Po Wang, Ryan Gabrys, and Venkatesan Guruswami.
\newblock Quickly-decodable group testing with fewer tests: Price-scarlett's nonadaptive splitting with explicit scalars.
\newblock In {\em {IEEE} International Symposium on Information Theory, {ISIT} 2023, Taipei, Taiwan, June 25-30, 2023}, pages 1609--1614. {IEEE}, 2023.

\bibitem[WGG24]{wang2024quickly}
Hsin-Po Wang, Ryan Gabrys, and Venkatesan Guruswami.
\newblock Quickly-decodable group testing with fewer tests: Price-scarlett and cheraghchi-nakos's nonadaptive splitting with explicit scalars.
\newblock {\em arXiv preprint arXiv:2405.16370}, 2024.

\bibitem[Wol85]{wolf1985born}
Jack Wolf.
\newblock Born again group testing: Multiaccess communications.
\newblock {\em IEEE Transactions on Information Theory}, 31(2):185--191, 1985.

\bibitem[XAAA23]{xu2023performance}
Fangyuan Xu, Shun-Ichi Azuma, Ryo Ariizumi, and Toru Asai.
\newblock Performance limitation of group testing in network failure detection.
\newblock {\em IEEE Access}, 2023.

\bibitem[ZH23]{zhang2023adaptive}
Jingyi Zhang and Lenwood~S Heath.
\newblock Adaptive group testing strategy for infectious diseases using social contact graph partitions.
\newblock {\em Scientific Reports}, 13(1):12102, 2023.

\end{thebibliography}

\appendix









\section{Modeling Statistical Models with Hypergraphs}\label{app:hypmodel}

Here, we give explicit formulas to compute the equivalent hypergraphs that model previous works.

As discussed in Section~\ref{sec:priormodel}, In \cite{nikpey2022group} they consider a simple graph $H = (V_H, E_H)$, and random graph $H_r$ is obtained by keeping each edge with probability $r$. Then, every component is infected with probability $p$, i.e. all the nodes inside are infected, or not infected with probability $1-p$, i.e. all the nodes inside are not infected. In order to compute $\mathcal{D}(S)$, the probability that exactly set $S \subseteq V_H$ is infected, in graph $H_r$, set $S$ should be disconnected from the rest. Let $\mathcal{H}_{S}$ be such graphs. Let $C(H_S)$ be the number of components on graph $H$ induced by $S$, and $E(H_S)$ be the number of edges in $H_S$ induced by $S$. The $\mathcal{D}(S)$ is computed as follows:

$$\mathcal{D}(S) = \sum_{H \in \mathcal{H}_S} p^{C(H_S)} \cdot (1-p)^{C(H_{V \setminus S})} \cdot r^{E(H_S) + E(H_{V \setminus S})} \cdot (1-r)^{E(H_V) - E(H_S) - E(H_{V \setminus S})}.$$

In \cite{arasli2023group}, they have the same model except only one component is infected uniformly at random. Define $\mathcal{H_S}$ be the graphs where $S$ is a connected component, disconnected from the rest of the graphs. Then $\mathcal{D}(S)$ is computed as:
$$\mathcal{D}(S) = \sum_{H \in \mathcal{H}_S} \frac{1}{C(H_V)} \cdot r^{E(H_S) + E(H_{V \setminus S})} \cdot (1-r)^{E(H_V) - E(H_S) - E(H_{V \setminus S})}.$$

For \cite{ahn2021adaptive}, we can compute $\mathcal{D}(S)$ by taking the sum over all possible seeds. Let $N(v)$ be the number of seeds with the same community as $v$ and $n$ be the number of nodes. Then:

$$\mathcal{D}(S) = \sum_{T \subseteq S} q^T \cdot (1-q)^{n-T} \cdot \prod_{v \in T \setminus S} [1-(1-q_1)^{N(v)} (1-q_2)^{|T| - N(v)}].$$


\section{Proof of \cite{ahn2023adaptive} From Section~\ref{sec:5}}\label{app:provecoms}

Let $C_i$ be a random variable that is $1$ when community $i$ is infected and $S = \sum_i C_i$. We show that under the condition $kp \ll 1$, $S$ is unlikely to get values much greater than its mean. We have $\mathbb{E}[S] = m \mathbb{E}[C_1] = m(1-p)^k \simeq me^{-pk} = e^{\log m - pk}$. We know that $C_i$'s are independent of each other, hence by Chernoff bound we have

$$\mathbb{P}(S > (1+\delta)\mathbb{E}[S]) < e^{\frac{\mathbb{E}[S] \delta^2}{2}} = e^{\frac{e^{\log m - pk} \delta^2}{2}}.$$

Now because $pk \ll 1$, the above probability is $O(e^{m\delta^2})$. By having large constant $\delta$, we have $S = O(\mathbb{E}[S])$ with high probability.

Now that the number of infected communities is concentrated, we argue that the number of infections in a community is also concentrated. Note that when $kp \ll m^{-\beta}$, we only see at most a constant number of seeds in each community (with high probability). Since  every non-seed node is exposed to a constant number of seeds, the probability that a non-seed node becomes infected is $O(q)$, and hence the expected number of infections in each community is $O(kq)$ and the expected number of non-seed infections is $O((n-m)q)$ and the non-seed nodes are independent of each other, i.e. one being infected or not does not affect other nodes.
 Now by Chernoff bound the probability that the number of infections is greater than $2(n-m)q = O(kmq) \gg m$ is $o(1)$. This shows that the total number of infections is not more than a constant factor greater than its mean, and hence the condition of Corollary~\ref{cor:concentr} holds. 

 \section{Proof of Theorem~\ref{theorem:NATAS}}\label{app:proof7}

We already proved the theorem when every edge has size $d$. The assumption in the theorem implies that if $e^*$ is the target edge, $\Pr(|e^*| < u) = 1 - o(1/n)$. We are using the same random testing strategy as for the $d-$regular case, with the difference that now each node is selected in any given test with probability $1/u$ (instead of $1/d$). The algorithm is the same as well, except that a preprocessing step is added in the beginning in which any edge $e$ whose size is at least $u$ is removed. 
Note that no node is removed in the preprocessing. 


The only ways  the algorithm does not return a target edge is when at least one of the following events happen: 1) the target subgraph does not enter $SG$ by $2 u (n - 5 \log n)$ tests 2) at least one non-target subgraph that enters $SG$ before or at the same test as the target subgraph remains candidate for $10 u \log n$ tests, 3) the number of infected nodes  is at least $u$  (since all the infected edges together constitute the target edge, then the target edge is removed in the preprocessing). Note that the first event implies that either the target subgraph does not become a candidate by the end of all tests or $10 u \log n $ tests are not conducted between when it becomes candidate and all tests end. We refer to these three events as $err_1, err_2, err_3$ and show that the respective probabilities of $err_1 \cap (err_3)^c, err_2 \cap  (err_3)^c, err_3$,  $\delta_n^1, \delta_n^2, \delta_n^3$ are $o(1/n).$ Then the probability of the union of the three error events is also $o(1/n)$ by union bound.

For upper-bounding $\delta_n^1, \delta_n^2$ we first upper bound the probability that an edge $e$ that is not the target edge survives $T$ successive random tests given that $I < u.$  Suppose  $e^*$ is the target edge. Note that if $e \not\subset e^{*}$, then $e$ must have at least one negative node and it is detected as negative if it is selected in a test and none of the positive nodes are selected for that test. If $e \subset e^*$, then there is a node in $e^*$ that is not in $e$ and this node is positive. Call the node in each of the cases as $v$. Note that if $v$ appears in a test without any other node in $e^*$, then $e$ will be eliminated in both cases (In the second case, $v$ just has to appear in a test without any  node in $e$, but $e \subseteq e^*$; so if $v$ appears without any other node in $e^*$, it also appears without any other node in $e.$). Given that there are at most $u$ positive nodes overall, the probability that $v$ appears in a test without other nodes in $e^*$ is at least $(1/u)(1-(1/u))^u .$ Thus, given that $I < u$, probability that  $e$ is ruled out as a target edge in a random test is  at least $(1/u)(1-(1/u))^u.$ The conditional probability that  $e$ survives after $T$ tests is therefore at most $(1-(1/u)(1-(1/u))^u)^T \leq e^{-(1-1/u)^u T/u}$ since $1-x \leq e^{-x}$ with $(1/u)(1-(1/u))^u$ as $x.$ Now, noting that $(1-1/u)^u \geq 1/4$, for all $u \geq 2$, we  have the conditional probability upper bounded by $e^{-T/(4 u)}.$ Since the probability that $I < u$ is at most $1$, the  probability that edge $e$ that is not the target edge survives $T$ successive random tests and $I < u$  is at most $e^{-T/(4 u)}.$

Now, we upper bound the probability of  the intersection of $err_1$ and that $I <  u$  given that $G_i$ is the target subgraph. First note that $m_i \leq 2^n$ for each $i.$ Thus, $u \log m_i \leq u n$, and $ u \log m_i + 8 u \log n < 2 u (n - 5 \log n)$ for all large $n$. Thus, 1) the algorithm conducts at least $ u \log m_i + 8 u \log n$ tests, and 2) a necessary condition for   $err_1$  to happen is that one or more edges other than the target edge in $G_i$,  survive $u \log m_i + 8 u\log n$ tests.   The probability that a given edge $e$ in $G_i$ that is not the target edge 
survives  $u \log m_i + 8 u\log n$ successive random tests and  $I < u$ is at most $e^{-(\log m_i + 8 \log n)/4 } =  (1/m_i)(1/(n^2)). $ Now, there are $m_i - 1$ edges in $G_i$ that are not target edges. Thus, by union bound, the conditional probability is at most $1/(n^2).$ Since the upper bound does not depend on $i$, this is also the upper bound on $\delta_n^1$, the probability of the intersection of $err_1$ and $I <  u.$


We now upper bound the probability of the intersection of $err_2$ and $I < u.$ 
The probability that edge $e$ that is not the target edge survives $10 u \log n$ successive random tests and $I < u$  is at most $e^{-10 (u \log n)/(4 u)} = e^{-2.5 \log n} = (1/n^{2.5}).$ 
At most $n \log n$ non-target subgraphs enter $SG$ on or before the target subgraph. So the probability that $ I < u$ and at least one of those remain candidate after $10 u \log n$ tests is at most the probability that $I < u$ and at least one among $n \log n$ target edges survive $10 u \log n$ tests. By union bound the probability that $I < u$ and at least one among $n \log n$ target edges survive $10 u \log n$ tests is at most $(n \log n)/n^{2.5} = (\log n)/n^{1.5}.$ This also upper bounds $\delta_n^2$, the probability of the intersection of $err_2$ and $I <  u.$

By assumption in the theorem,  $ \delta_n^3$ is $o(1/n).$

We now analyze the expected number of tests, similar to that for $d$-regular graphs. Note that with probability $1 - o(1/n)$ neither of $err_1,err_2$, or $err_3$ happen, and then when additionally the $i$'th subgraph is the target subgraph, the algorithm performs $u \log m_i + 8 u \log n + 10 u \log n$ tests. Note that when any of the $err_i$'s happens, at most $2u n $ tests are performed. Hence the expected number of tests is at most:

    \[ \sum_{i} \mathbb{P}(G_i) [(1- (\delta_n^1+\delta_n^2+\delta_n^3)) (u \log(m_i) + 8 u \log n + 10 u \log n ) + (\delta_n^1+\delta_n^2+\delta_n^3) 2u n] \]
    
    \[\leq  u\sum_{i} \left[(\sum_{e\in G_i} p(e)) \log(m_i)\right] + 18 u \log n + o(u). \]

Recall that $m_i \leq 2/p(e)$ for all $e \in E_i$. Hence the RHS is

\begin{align}
    &\leq u \sum_i \sum_{e\in G_i} p(e) \log (2/p(e))  + 18 u\log n + o(u) \notag\\
    &= u \sum_{e \in E} p(e)(1 + \log(1/p(e)))  + 18 u \log n + o(u)\notag \\
    &= u + u H(X)  + 18 u \log n + o(u). \label{eq:proofent}
\end{align}

And hence the total number of tests is at most $O(u H(X) + u \log n)$ in expectation.

\end{document}